%%%%%%%%%%%%%%%%%%%%%%%%%%%%%%%%%%%%%%%%%%%%%%%%%%%%%%%%%%%%%%%%%%%%%%%%%%%%
%% Trim Size: 9.75in x 6.5in
%% Text Area: 8in (include Runningheads) x 5in
%% ws-igtr.tex   :   7-11-2006
%% Tex file to use with ws-igtr.cls written in Latex2E. 
%% The content, structure, format and layout of this style file is the 
%% property of World Scientific Publishing Co. Pte. Ltd. 
%% Copyright 1995, 2002 by World Scientific Publishing Co. 
%% All rights are reserved.
%%%%%%%%%%%%%%%%%%%%%%%%%%%%%%%%%%%%%%%%%%%%%%%%%%%%%%%%%%%%%%%%%%%%%%%%%%%%
%%

\documentclass{ws-igtr}

%\linespread{2}

\usepackage{tikz,diagbox,relsize,amsmath,amssymb,bbm,algpseudocode,multicol,tabularx,enumerate,multicol}

\newtheorem{observation}{Observation}
\newtheorem{assumption}{Assumption}
\newtheorem{problem}{Problem}
\newtheorem{algorithm}{Algorithm}

\newcommand{\enquote}[1]{#1}
\providecommand{\natexlab}[1]{#1}
\providecommand{\url}[1]{\texttt{#1}}

\expandafter\ifx\csname urlstyle\endcsname\relax
  \else
  \fi

\newcommand{\R}{\mathbb{R}}

\DeclareMathOperator*{\argmin}{arg\,min}

\begin{document}

\markboth{Joe Clanin and Sourabh Bhattacharya}
{Additive Security Games: Structure and Optimization}

%%%%%%%%%%%%%%%%%%%%% Publisher's Area please ignore %%%%%%%%%%%%%%%
\catchline{}{}{}{}{}
%%%%%%%%%%%%%%%%%%%%%%%%%%%%%%%%%%%%%%%%%%%%%%%%%%%%%%%%%%%%%%%%%%%%
Additive Security Games: Structure and Optimization
\title{ADDITIVE SECURITY GAMES: STRUCTURE AND OPTIMIZATION}

\author{JOE CLANIN}

\address{Department of Computer Science, Iowa State University\\ 226 Atanasoff Hall, 2434 Osborn Drive,\\ 
Ames, Iowa 50011, USA\\\email{jsc@iastate.edu}}

\author{SOURABH BHATTACHARYA}

\address{Department of Mechanical Engineering, Iowa State University\\
Department of Computer Science, Iowa State University\\
Ames, Iowa 50011, USA\\
sbhattac@iastate.edu
}

\maketitle

\begin{history}
\received{(Day Month Year)}
\revised{(Day Month Year)}
%\accepted{(Day Month Year)}
%\comby{(xxxxxxxxxx)}
\end{history}

\begin{abstract}
	In this work, we provide a structural characterization of the possible Nash equilibria in the well-studied class of security games with additive utility. Our analysis yields a classification of possible equilibria into seven types and we provide closed-form feasibility conditions for each type as well as closed-form expressions for the expected outcomes to the players at equilibrium. We provide uniqueness and multiplicity results for each type and utilize our structural approach to propose a novel algorithm to compute equilibria of each type when they exist. We then consider the special cases of security games with fully protective resources and zero-sum games. Under the assumption that the defender can perturb the payoffs to the attacker, we study the problem of optimizing the defender expected outcome at equilibrium. We show that this problem is weakly NP-hard in the case of Stackelberg equilibria and multiple attacker resources and present a pseudopolynomial time procedure to solve this problem for the case of Nash equilibria under mild assumptions. Finally, to address non-additive security games, we propose a notion of {\it nearest additive game} and demonstrate the existence and uniqueness of a such a nearest additive game for any non-additive game.
\end{abstract}

\keywords{Security Games, Nash Equilibrium, Optimization}

\ccode{Subject Classification: 91A68.}

\section{Introduction}

The allocation of limited resources by interacting agents has long been a fundamental object of study in game theory. Classical examples include Colonel Blotto games (\cite{gross1950continuous,hausken2012impossibility,bhurjee2016existence}), Gale's {\it games of finite resources} (\cite{ferguson2000games}) such as inspection games or goofspiel, and fair division problems (\cite{brams1996fair}). Two-player non-cooperative resource allocation games have been the subject of extensive study. In such games, adversaries such as auditors and auditees in the case of audit games (\cite{blocki2013audit,blocki2015audit}), attackers and defenders in security domains (\cite{alpcan2010network,manshaei2013game,hausken2017information,hausken2021governments}), or environmental regulators and polluting agents (\cite{perera2022stackelberg}) compete to allocate their limited resources over some collection of objects. The latter class of games, known as security games, feature a defender with some limited number of defensive resources to protect a set of targets and a resource-constrained attacker seeking to attack some collection of these targets. Such games are the subject of the present work.

A common assumption in the literature of security games is that targets exhibit an independence property known as additivity. Informally, players receive payoffs only for targets that are attacked and the total payoff to a player for a given set of attacked targets is the sum of their payoffs for each target. Non-additive security games have received significantly less attention than additive games. The polynomial time computability of Nash and Stackelberg equilibria in Non-additive games are addressed by \cite{xu2016mysteries,wang2017non,wang2017security} in which it is shown that equilibrium computation in general non-additive security games is NP-hard and that the polynomial time solvable class of such games can be characterized in terms of the combinatorial problem encoded by the defender pure strategy space. Intuitively, the assumption of additivity allows for compact representation of security games leading to efficient solution algorithms. The existing literature on additive security games is stratified by several key factors:
\begin{itemize}
    \item Whether or not the game is zero-sum
    \item The number of resources available to the attacker: single versus multiple
    \item The manner of play: Nash equilibria in simultaneous games and Stackelberg equilibria in leader-follower games
    \item Size and structure of schedules: some resources may be able to cover multiple targets
    \item The homogeneity or heterogeneity of the player resources: whether or not all resources are applicable to all schedules
    \item Resource protection: whether or not resources are {\it fully protective} in the sense that no payoff is received when an attacked target is also covered by the defender
\end{itemize}

The standard multiple LP approach (\cite{conitzer2006computing}) for computing Stackelberg equilibria requires time which is polynomial in the number of player pure strategies. Thus, in security games where a player has multiple resources and pure strategies correspond to choosing a particular subset of targets, alternative solution methods must be devised to render the problem tractable. The first efficient algorithm to compute a Stackelberg equilibrium in security games with a single attacker resource was proposed in  \cite{kiekintveld2009computing}. This procedure, entitled `Efficient Randomized Allocation of Security Resources' or ERASER, employs a mixed-integer linear programming approach to address the case of singleton schedules and homogeneous resources. Then, after eliminating the assumption that assigning more than one defensive resource does not benefit the defender or harm the attacker any more than a single resource (an assumption made by the present work and by all other works discussed herein), \cite{kiekintveld2009computing} also proposes an algorithm, ORIGAMI, to compute a Stackelberg equilibrium and verifies its efficacy through simulation. The relationship between Stackelberg equilibria and Nash equilibria in security games is studied in \cite{korzhyk2011stackelberg}. This work shows that Stackelberg equilibria are also Nash equilibria in the case of a single attacker resource and under the assumption that any subset of a schedule is also a schedule but that this need not be the case when the attacker is allowed more than a single resource. A game is said to have interchangeable Nash equilibria if any profile in which both players are playing {\it some} equilibrium strategy is itself a Nash equilibrium. A further result of \cite{korzhyk2011stackelberg} is that Nash equilibria in additive security games are interchangeable, and thus there is no equilibrium selection problem in such games. Equilibria in games with multiple attacker resources were first addressed by \cite{korzhyk2011stackelberg}, but the authors propose no efficient solution algorithm and provide only experimental results for the sake of comparison of the properties of games with varying player resource counts and manners of play. The first polynomial time algorithm to compute Nash equilibria in games with multiple attacker resources is given in \cite{korzhyk2011security}. This procedure begins by solving a game in which the defender has no resources, gradually increases the resources available to the defender, and transitions between a number of phases in which the player best responses are computed until the final number of defender resources is reached. While effective, this quadratic time procedure and its description do little to elucidate the underlying structure of Nash equilibria in additive security games.

All approaches to the efficient computation of equilibria (both Nash and Stackelberg when tractable) in additive security games rely upon the compact representation of player strategies in terms of marginal probabilities rather than explicit mixed strategies represented as distributions over exponentially large pure strategy spaces. This raises the question of whether or not the computed marginal probabilities can actually be implemented by some pure strategy. This question is completely resolved for Stackelberg equilibria by \cite{korzhyk2010complexity} wherein the authors demonstrate that in the case of homogeneous resources and schedules of size at most 2 or in the case of heterogeneous resources and singleton schedules, such a pure strategy can be computed in polynomial time. The problem is shown to be NP-hard in all other cases, even for zero-sum games. Furthermore, existing approaches to equilibrium computation are shown to have the capability to produce solutions in terms of marginal probability which are not implementable by any mixed strategy when schedules exceed size 2. In light of this complexity result for games with arbitrary schedules and owing to the fact that such schedules are essential elements of real-world security game implementations such as those utilized by the U.S. Federal Air Marshals (FAMS) (\cite{tambe2011security}), a branch-and-bound approach to equilibrium computation in games with arbitrary schedules and a single attacker resource is proposed in \cite{jain2010security}. This algorithm, ASPEN (Accelerated SPARS Engine), generates branching rules and bounds using ORIGAMI and is shown to exhibit significant performance improvements over previously existing techniques.
Another approach to circumventing the hardness results given in \cite{korzhyk2010complexity} is studied in \cite{letchford2013solving} wherein a class of security games on graphs whose schedules satisfy a necessary condition for implementability given by the Bihierarchy Birkhoff-Von Neumann theorem (\cite{budish2013designing}). Positive results regarding polynomial time Stackelberg equilibrium computation are given for games with heterogeneous or homogeneous defender resources in which the schedules have the structure of paths in a collection of rooted trees or paths in a path graph (respectively).

A $O(n)$ procedure to compute a Stackelberg equilibrium in games with a single attacker resource, singleton homogeneous schedules, and resources that are fully protective in the sense that the attacker receives no payoff when an attacked target is defended (but the defender may incur some cost) is proposed by \cite{lerma2011linear}. It is also shown that this algorithm can be modified to compute Stackelberg equilibria in $O(n\log(n))$ time for with all of the same assumptions, but non-fully-protective resources. More recent work (\cite{emadi2019security,hamidgamesec20}) has shown that for zero-sum games with homogeneous singleton schedules, multiple attacker resources and fully protective resources, there exists a $O(n)$ procedure to compute Nash equilibria. In this work, we extend the structural approach adopted in \cite{emadi2019security,hamidgamesec20} to study Nash equilibria in general sum security games with multiple attacker resources.

In many security settings, it is reasonable to assume that a defender can alter the payoff structure of the game. Defenders may artificially increase the vulnerability or inflate the perceived importance of certain targets to address their true incentives regarding the task of attack mitigation. A natural optimization problem of maximizing the defender expected utility at equilibrium in a security game given the possibility of perturbation of the game parameters by the defender is addressed in \cite{shi2018designing}. The authors propose a mixed integer linear program approach with an approximation guarantee to solve this problem under a weighted $L^1$ norm constraint as well as a polynomial time approximation scheme for a restricted version of the problem. Furthermore, when attacker payoffs are restricted to closed bounded intervals, it is shown that the optimization problem can be solved in $O(n^2\log(n))$ time by utlizing a modified version of ORIGAMI. The approaches in \cite{shi2018designing}, however, only address the case of Stackelberg equilibria and a single attacker resource. As mentioned previously, by \cite{korzhyk2011stackelberg} this implies that the optimization problem the case of Nash equilibria in games with a single attacker resource can be solved in polynomial time but leaves open the case of Nash equilibria and multiple attacker resources. 

We make the following contributions to the theory of general-sum additive security games with singleton schedules and multiple homogeneous attacker and defender resources:
\begin{enumerate}
    \item By leveraging necessary structural properties of Nash equilibria, we give a characterization of the possible Nash equilibria into seven types and demonstrate the existence of a game exhibiting each type.
    \item We provide feasibility conditions for each type of equilibrium and propose a novel $O(m^3)$ algorithm to compute equilibria based upon our structural analysis which is far more intuitive than existing approaches.
    \item In the case of fully protective resources, we show that only five types of Nash equilibria are possible and that our equilibrium computation algorithm can be made to run in $O(m^2)$ time.
    \item For all types of equilibria, we derive closed-form expressions for the player equilibrium strategies and expected outcomes at equilibrium and characterize the uniqueness and multiplicity of the various types of equilibria.
    \item Under the assumption that the defender can perturb the payoffs to the attacker, we show that the problem of maximizing the expected outcome to the defender is weakly NP-hard in the case of Stackelberg equilibria and multiple attacker resources.
    \item We propose a pseudopolynomial time procedure based upon our theory of types to find a globally optimal solution to the problem of maximizing defender expected utility in the case of Nash equilibria under a disjointness assumption.
    \item We propose a notion of {\it nearest additive game} and demonstrate the existence and uniqueness of such a game for any (potentially non-additive) security game.
\end{enumerate}

\subsection{Security Game Formulation}
     We now formulate the security game model to be  considered in this work.
     Define a two player game between an attacker and defender over a target set $T=\{1,\dots,m\}$. The attacker will attack $1\leq k_a<m$ targets and the defender will defend $1\leq k_d<m$ targets. We assume that at most one resource is allocated to a target, an assumption introduced in \cite{korzhyk2011security}. If a target is defended, we say the target is `covered', and if a target is not defended we shall say this target is `uncovered'. Let $U_a^c:2^T\to (0,\infty)$ and $U_a^u:2^T\to(0,\infty)$ be functions giving the payoff to the attacker when a set of attacked targets is covered and uncovered, respectively. Similarly define $U_d^c,U_d^u:2^t\to(-\infty,0)$ giving the costs to the defender. We assume the payoffs satisfy
\begin{equation}\label{delta_conditions}
%\begin{aligned}
\Delta_a(S)=U_a^u(S)-U_a^c(S)>0\hspace{1em}\text{ and }\hspace{1em} \Delta_d(S)=U_d^c(S)-U_d^u(S)>0.
%\end{aligned}
\end{equation}
That is, when a set of targets is attacked, it is better for the defender that the targets are defended and better for the attacker that the targets remain exposed. Let $s_1^a,\dots,s_{\binom{m}{k_a}}^a\subseteq T$ be the pure strategies of the attacker and let $s_1^d,\dots,s^d_{\binom{m}{k_d}}\subseteq T$ be the pure strategies of the defender. The payoffs to the attacker and defender for a given pure strategy profile $s_i^a,s_j^d$ are
\[U_a^c(s_i^a\cap s_j^d)+U_a^u(s_i^a\setminus s_j^d)\text{  and  }U_d^c(s_i^a\cap s_j^d)+U_d^u(s_i^a\setminus s_j^d)\]
respectively. Denote the payoff matrices to the attacker and defender by $A$ and $B$ respectively. By convention we assume the attacker is the row player. We say that a security game is {\it additive} if the payoff functions are additive over attacked targets. That is, $U_a^c(S)=\sum_{t\in S}U_a^c(\{t\})$, and the same is true for $U_a^u,U_d^c,U_d^u$. For convenience we will omit braces when payoff functions are evaluated at singletons.
As noted in the previous section, the case of additive payoff functions is a well-studied security game model (\cite{kiekintveld2009computing,korzhyk2010complexity,yin2010stackelberg,korzhyk2011stackelberg,korzhyk2011security,emadi2019security,hamidgamesec20,emadi2020characterization,emadi2021game}). All security games under consideration, unless stated otherwise, will be assumed to have additive payoff functions.

Under the assumption of additive payoffs, the expected outcome for each player at equilibrium depends only upon the marginal probabilities with which each target is attacked and defended. That is:
\begin{equation}\label{vavd}
\begin{aligned}
v_a&=p^TAq=\sum_{t=1}^m\left[\alpha_tU_a^c(t)\beta_t+\alpha_tU_a^u(t)(1-\beta_t)\right]\\
v_d&=p^TBq=\sum_{t=1}^m\left[\alpha_tU_d^c(t)\beta_t+\alpha_tU_d^u(t)(1-\beta_t)\right]\\
\end{aligned}
\end{equation}
Where $\alpha=[\alpha_1,\dots,\alpha_m]^T$ and $\beta=[\beta_1,\dots,\beta_m]^T$ are the attack and defense probability vectors (respectively) given by
\[\alpha_i=\sum_{i\in S^a_t}p_t\hspace{1em}\text{ and }\hspace{1em}\beta_i=\sum_{i\in S^d_t}q_t.\]
Note that $\alpha_i,\beta_i\in[0,1]$ (since $p,q$ are probability vectors) and 
\begin{equation}\label{sum_alpha_beta}
\sum_{i=1}^m\alpha_i=k_a\hspace{1em}\text{ and }\hspace{1em}\sum_{i=1}^m\beta_i=k_d
\end{equation}

It is worth noting that no information is lost by utilizing the compact representation given by \eqref{vavd}. It is noted in \cite{korzhyk2010complexity} that a mixed strategy $(p^*,q^*)$ which realizes any given $(\alpha^*,\beta^*)$ can be computed in $O(m^{4.5})$ time by utilizing the Birkhoff-Von Neumann theorem (\cite{horn2012matrix})  to decompose a doubly stochastic matrix into a convex combination of permutation matrices. In general such a $(p^*,q^*)$ need not be unique. We rephrase the definition of Nash equilibrium in terms of the compact representation \eqref{vavd} as follows: 

\begin{definition}
    A pair $(\alpha^*,\beta^*)$ is a Nash equilibrium if for all $\alpha,\beta$
    \[v_a(\alpha,\beta^*)\leq v_a(\alpha^*,\beta^*)\hspace{1em}\text{ and }\hspace{1em}v_d(\alpha^*,\beta)\leq v_d(\alpha^*,\beta^*).\]
\end{definition}
 
\section{Results}

We now present our structural approach to equilibrium computation in additive security games. 

\subsection{Structural Analysis of Nash Equilibria}
The following key observation is central to our analysis:
\begin{observation}\label{theobservation}
    Due to \eqref{vavd}, a player can unilaterally deviate from $(\alpha,\beta)$ if and only if they are able to shift marginal probability from one target to another. By definition, therefore, if $(\alpha^*,\beta^*)$ is a Nash equilibrium:
    \begin{equation*}
    \begin{aligned}
        \alpha_i^*>0,\alpha_j^*<1&\implies U_a^c(j)\beta_j+U_a^u(j)(1-\beta_j)\leq U_a^c(i)\beta_i+U_a^u(i)(1-\beta_i)\\
        \beta_i^*>0^,\beta_j^*<1&\implies \alpha_j\Delta_d(j)\leq\alpha_i\Delta_d(i)
    \end{aligned}
    \end{equation*}
    That is, no player can unilaterally deviate and increase their expected payoff.
\end{observation}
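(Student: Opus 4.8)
The plan is to recognize each player's best-response problem as a linear program over a capped simplex and to read off the stated inequalities from the first-order optimality condition for such a program. The key simplification is that, once the opponent's strategy is held fixed, each payoff in \eqref{vavd} is an affine function of the deviating player's marginal vector, so best responses are determined entirely by a per-target coefficient.

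First I would treat the attacker. Fixing $\beta=\beta^*$, equation \eqref{vavd} shows that $v_a(\alpha,\beta^*)=\sum_{t}\alpha_t\,c_t$ is linear in $\alpha$, where $c_t:=U_a^c(t)\beta_t^*+U_a^u(t)(1-\beta_t^*)$ is the per-unit attack value of target $t$. The attacker's feasible set is $\{\alpha:\sum_t\alpha_t=k_a,\ 0\le\alpha_t\le1\}$. Because the total mass $k_a$ is fixed by \eqref{sum_alpha_beta}, every feasible perturbation of $\alpha^*$ preserves this sum and is therefore a transfer (or superposition of transfers) of marginal probability among targets, which is the precise content of the \emph{shift} characterization in the observation. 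An elementary transfer of mass $\varepsilon>0$ from target $i$ to target $j$ stays inside the box $[0,1]^m$ for all sufficiently small $\varepsilon$ exactly when $\alpha_i^*>0$ (so mass can leave $i$) and $\alpha_j^*<1$ (so mass can enter $j$), and it changes $v_a$ by $\varepsilon(c_j-c_i)$. Since $\alpha^*$ is a best response at equilibrium, no such transfer can be profitable, forcing $c_j-c_i\le0$. Written out, this is precisely $U_a^c(j)\beta_j^*+U_a^u(j)(1-\beta_j^*)\le U_a^c(i)\beta_i^*+U_a^u(i)(1-\beta_i^*)$, the first implication.

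For the defender I would first rearrange $v_d$ to isolate its dependence on $\beta$. Using $\Delta_d(t)=U_d^c(t)-U_d^u(t)$, equation \eqref{vavd} gives $v_d=\sum_t\alpha_tU_d^u(t)+\sum_t\alpha_t\beta_t\Delta_d(t)$. Fixing $\alpha=\alpha^*$, the first sum is a constant and the defender maximizes $\sum_t\bigl(\alpha_t^*\Delta_d(t)\bigr)\beta_t$ over $\{\beta:\sum_t\beta_t=k_d,\ 0\le\beta_t\le1\}$, again a linear objective, now with per-unit coefficient $\alpha_t^*\Delta_d(t)$ on $\beta_t$. Repeating the transfer argument verbatim (mass leaving $i$ requires $\beta_i^*>0$, mass entering $j$ requires $\beta_j^*<1$, and the objective change is $\varepsilon(\alpha_j^*\Delta_d(j)-\alpha_i^*\Delta_d(i))$) yields $\alpha_j^*\Delta_d(j)\le\alpha_i^*\Delta_d(i)$, the second implication.

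The only step requiring care is justifying that it suffices to consider elementary pairwise transfers rather than arbitrary feasible perturbations, and I expect this to be the main (though mild) obstacle. One must argue that any profitable feasible deviation decomposes into finitely many individually feasible elementary transfers, at least one of which is itself profitable. This follows from linearity together with the observation that the signed perturbation $\delta=\alpha-\alpha^*$ satisfies $\sum_t\delta_t=0$, so its positive and negative parts have equal total mass and can be paired into transfers; since the objective change is additive over these transfers, a strictly positive total change forces a strictly positive change on some pair $(i,j)$ with $\delta_i<0$ (hence $\alpha_i^*>0$) and $\delta_j>0$ (hence $\alpha_j^*<1$). Once this decomposition is in hand, both implications are immediate from the sign conditions derived above, and the identical reasoning applies on the defender side.
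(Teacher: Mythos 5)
Your proof is correct and follows essentially the same reasoning the paper uses: the paper states this as a definitional observation justified by the linearity of \eqref{vavd} in each player's own marginals, which is exactly your per-target-coefficient and pairwise-transfer argument. One minor note: the decomposition of arbitrary deviations into elementary transfers, which you flag as the delicate step, is not actually needed for the stated implications (a pairwise transfer is itself a feasible deviation, so equilibrium directly rules out profitable ones); it is only needed for the converse direction, i.e.\ the ``if and only if'' in the observation's preamble and the sufficiency result the paper later imports as Lemma~\ref{c1c2feasibilitylemma}.
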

\begin{table}
        \centering
        \def\arraystretch{1.4}
		\begin{tabular}{| l | c | c | c | }
		\hline
		\backslashbox{$\mathlarger{\mathlarger{\beta_i^*}}$}{$\mathlarger{\mathlarger{\alpha_i^*}}$} & $=0$ & $\in(0,1)$ & $=1$ \\ 
		\hline 
		$=0$ & $i\in I_1$ & $i\in I_2$ & $i\in I_3$\\
		\hline
		$\in(0,1)$ &  $i\in I_4$ & $i\in I_5$& $i\in I_6$\\ 
		\hline
		$=1$ & $i\in I_7$& $i\in I_8$& $i\in I_9$\\
		\hline
		\end{tabular}	
		\caption{Definition of the sets $I_1,\dots,I_9$. For example, $I_8$ is the set $\{i:\alpha_i\in(0,1),\beta_i=1\}$.}\label{targetpartition}
		\label{partitiontable}
\end{table}
We make the following assumption regarding the game parameters:
\begin{assumption}\label{distinctassumption}
    The parameters $U_a^c(1),\dots,U_a^c(m),U_a^u(1),\dots,U_a^u(m)$ are distinct and the parameters $\Delta_d(1),\dots,\Delta_d(m)$ are distinct.
\end{assumption}
To make use of Observation~\ref{theobservation}, for any $(\alpha^*,\beta^*)$ we define a partition of the target set into nine sets $I_1,\dots,I_9$ defined by Table~\ref{targetpartition}.

\begin{lemma}\label{I4I7lemma}
    If $(\alpha^*,\beta^*)$ is a Nash equilibrium and $I_4\cup I_7\neq \emptyset$, then $I_2\cup I_3\cup I_5\cup I_6=\emptyset$.
\end{lemma}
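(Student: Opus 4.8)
The plan is to argue by contradiction directly from the defender's equilibrium condition stated in the second implication of Observation~\ref{theobservation}. The governing quantity is the defender's marginal benefit of covering target $i$, namely $\alpha_i^*\Delta_d(i)$: at a Nash equilibrium the defender cannot profit by shifting coverage probability between targets, so whenever $\beta_i^*>0$ and $\beta_j^*<1$ we must have $\alpha_j^*\Delta_d(j)\le\alpha_i^*\Delta_d(i)$.

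First I would read off Table~\ref{targetpartition} the membership conditions that make this implication applicable. Every target $i\in I_4\cup I_7$ satisfies $\beta_i^*>0$ (it is covered with positive probability) together with $\alpha_i^*=0$ (it is not attacked). Conversely, every target $j\in I_2\cup I_3\cup I_5\cup I_6$ satisfies $\alpha_j^*>0$ and $\beta_j^*<1$: for $j\in I_2\cup I_3$ we have $\beta_j^*=0<1$, while for $j\in I_5\cup I_6$ we have $\beta_j^*\in(0,1)$, and in all four cases $\alpha_j^*>0$. Thus the two families are precisely the ``covered-but-unattacked'' targets and the ``attacked-but-not-fully-covered'' targets.

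The contradiction is then immediate. Assuming $I_4\cup I_7\neq\emptyset$ and, toward a contradiction, $I_2\cup I_3\cup I_5\cup I_6\neq\emptyset$, I would pick $i$ in the former set and $j$ in the latter. Since $\beta_i^*>0$ and $\beta_j^*<1$, the defender's condition gives $\alpha_j^*\Delta_d(j)\le\alpha_i^*\Delta_d(i)$. The right-hand side equals zero because $\alpha_i^*=0$, whereas the left-hand side is strictly positive because $\alpha_j^*>0$ and $\Delta_d(j)>0$ by \eqref{delta_conditions}; this forces $0<\alpha_j^*\Delta_d(j)\le 0$, a contradiction.

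There is no genuine obstacle here: the result is simply the formal statement that a defender has no incentive to cover an unattacked target while leaving an attacked target under-covered. The only point demanding care is confirming that all four sets $I_2,I_3,I_5,I_6$ simultaneously meet $\alpha_j^*>0$ and $\beta_j^*<1$, so that the defender's inequality can legitimately be invoked for every candidate $j$; once this bookkeeping against Table~\ref{targetpartition} is done, the conclusion follows in one line.
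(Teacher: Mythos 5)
Your proof is correct and follows essentially the same route as the paper: both arguments invoke the defender's no-profitable-deviation condition (the second implication of Observation~\ref{theobservation}) applied to a covered-but-unattacked target $i\in I_4\cup I_7$ and an attacked-but-not-fully-covered target $j$, forcing $0<\alpha_j^*\Delta_d(j)\le\alpha_i^*\Delta_d(i)=0$. The only difference is cosmetic: the paper phrases it as ``every $j$ with $\alpha_j>0$ must have $\beta_j=1$,'' while you phrase the same one-line deviation argument as an explicit contradiction.
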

\begin{proof}
    Suppose $i\in I_4\cup I_7$. Then $\alpha_i^*=0$ and $\beta_i^*\in (0,1]$. As $k_a\geq 1$ there exists $j$ so that $\alpha_j>0$. We must have $\beta_j=1$ for all $j$ such that $\alpha_j>0$ (else the defender has a feasible deviation increasing their payoff).
\end{proof}
This yields the following theorem:
\begin{theorem}\label{NEtheorem}
     A Nash equilibrium $(\alpha^*,\beta^*)$ is of precisely one of the following two types:
    \begin{itemize}
        \item Type I: $\forall i\in T$, $i\in T\setminus (I_4\cup I_7)$
        \item Type II: $\forall i\in T$, $i\in I_1\cup I_4\cup I_7\cup I_8\cup I_9$, $I_4\cup I_7\neq \emptyset$
    \end{itemize}
\end{theorem}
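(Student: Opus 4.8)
The plan is to derive the theorem as a direct case analysis built on Lemma~\ref{I4I7lemma}, with the entire dichotomy driven by whether the set $I_4\cup I_7$ is empty. First I would record the structural fact that makes the statement meaningful: the nine sets $I_1,\dots,I_9$ defined in Table~\ref{targetpartition} form a genuine partition of $T$, since each target $i$ has $\alpha_i^*$ in exactly one of $\{0\}$, $(0,1)$, $\{1\}$ and likewise for $\beta_i^*$. Consequently the two conditions $I_4\cup I_7=\emptyset$ and $I_4\cup I_7\neq\emptyset$ are mutually exclusive and jointly exhaustive, which is what will ultimately justify the phrase ``precisely one'' in the statement.

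Next I would split into the two cases. In the case $I_4\cup I_7=\emptyset$, every target lies in one of the seven remaining blocks $I_1,I_2,I_3,I_5,I_6,I_8,I_9$, each of which is disjoint from $I_4\cup I_7$ and hence a subset of $T\setminus(I_4\cup I_7)$; thus $i\in T\setminus(I_4\cup I_7)$ for all $i$, which is exactly Type~I. In the case $I_4\cup I_7\neq\emptyset$, I would invoke Lemma~\ref{I4I7lemma} to obtain $I_2\cup I_3\cup I_5\cup I_6=\emptyset$. The only partition blocks that can then be nonempty are $I_1,I_4,I_7,I_8,I_9$, so $i\in I_1\cup I_4\cup I_7\cup I_8\cup I_9$ for every $i$; combined with the standing hypothesis $I_4\cup I_7\neq\emptyset$, this is precisely Type~II.

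Finally I would close the ``precisely one'' claim by noting mutual exclusivity directly from the defining conditions: Type~I forces $I_4\cup I_7=\emptyset$ while Type~II requires $I_4\cup I_7\neq\emptyset$, so no equilibrium can be of both types, and since the two cases above exhaust all possibilities, every Nash equilibrium is of exactly one type.

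As for difficulty, there is essentially no computational obstacle at the level of the theorem: all of the substantive work is already carried out inside Lemma~\ref{I4I7lemma}, which itself rests on the defender's no-deviation inequality $\alpha_j\Delta_d(j)\leq\alpha_i\Delta_d(i)$ from Observation~\ref{theobservation} specialized to a coordinate $i$ with $\alpha_i^*=0$ (forcing $\alpha_j=0$ whenever $\beta_j^*<1$). The only point demanding care is the bookkeeping that the two type descriptions are complete and mutually exclusive, i.e.\ that every equilibrium lands in exactly one bucket; this becomes immediate once the partition structure of the $I_j$ is made explicit, so I would state that observation first and let the case split follow mechanically.
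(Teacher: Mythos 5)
Your proof is correct and takes essentially the same approach as the paper: the paper presents the theorem as an immediate consequence of Lemma~\ref{I4I7lemma} (introducing it with ``This yields the following theorem''), and your case analysis on whether $I_4\cup I_7$ is empty is exactly that intended argument. Your explicit remarks on the partition structure of $I_1,\dots,I_9$ and on mutual exclusivity merely fill in bookkeeping the paper leaves implicit.
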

 By considering every pair of targets to which Observation~\ref{theobservation} applies, we arrive at the following lemmas which describe the necessary structural relationships between targets in the sets $I_1,\dots,I_9$ at equilibrium.
\begin{lemma}\label{c1c2lemma}
    If $(\alpha^*,\beta^*)$ is a Nash equilibrium, then
    \begin{enumerate}[(i)]
        \item There exists a constant $c_1$ so that $U_a^u(i)-\beta_i^*\Delta_a(i)=c_1$ for all $i\in I_2\cup I_5\cup I_8$ . Furthermore, for all $i\in I_2$ we have $c_1=U_a^u(i)$ and for all $i\in I_8$ we have $c_1=U_a^c(i)$.
        \item There exists a constant $c_2$ so that $\alpha_i^*\Delta_d(i)=c_2$ for all $i\in I_5\cup I_6$ . Furthermore, for $i\in I_6$ we have $c_2=\Delta_d(i)$.
    \end{enumerate}
\end{lemma}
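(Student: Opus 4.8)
The plan is to recognize that the quantity in part (i), namely $U_a^u(i)-\beta_i^*\Delta_a(i)$, is exactly the per-unit attacker payoff that governs Observation~\ref{theobservation}. Expanding $\Delta_a(i)=U_a^u(i)-U_a^c(i)$ gives the identity $U_a^u(i)-\beta_i^*\Delta_a(i)=U_a^c(i)\beta_i^*+U_a^u(i)(1-\beta_i^*)$, so I will write $\phi_a(i)$ for this common expression as local shorthand. The second structural point, read directly off Table~\ref{targetpartition}, is that $I_2\cup I_5\cup I_8$ is precisely the set of targets with $\alpha_i^*\in(0,1)$, and $I_5\cup I_6$ is precisely the set of targets with $\beta_i^*\in(0,1)$.

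For part (i), I would fix any two targets $i,j\in I_2\cup I_5\cup I_8$. Since $\alpha_i^*\in(0,1)$ we have both $\alpha_i^*>0$ and $\alpha_i^*<1$, and likewise for $j$. Applying the first implication of Observation~\ref{theobservation} with $i$ as source ($\alpha_i^*>0$) and $j$ as target ($\alpha_j^*<1$) gives $\phi_a(j)\leq\phi_a(i)$; applying it with the roles reversed gives $\phi_a(i)\leq\phi_a(j)$. Hence $\phi_a(i)=\phi_a(j)$, so a single constant $c_1$ equals this common value across the whole union. The two ``furthermore'' claims are then immediate substitutions: for $i\in I_2$ we have $\beta_i^*=0$, whence $\phi_a(i)=U_a^u(i)$; for $i\in I_8$ we have $\beta_i^*=1$, whence $\phi_a(i)=U_a^c(i)$.

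Part (ii) follows the identical template using the second implication of Observation~\ref{theobservation}, whose governing quantity is $\alpha_i^*\Delta_d(i)$. For any two $i,j\in I_5\cup I_6$ we have $\beta_i^*,\beta_j^*\in(0,1)$, so I can apply that implication in both directions to obtain $\alpha_i^*\Delta_d(i)=\alpha_j^*\Delta_d(j)$, which defines the constant $c_2$. The ``furthermore'' claim is the substitution $\alpha_i^*=1$ for $i\in I_6$, giving $c_2=\Delta_d(i)$.

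I do not anticipate a genuine obstacle: the argument is a direct symmetric application of Observation~\ref{theobservation}. The only points requiring care are the algebraic identity linking the lemma's expression to the observation's payoff, and the recognition that membership in the \emph{open} interval $\alpha_i^*\in(0,1)$ (respectively $\beta_i^*\in(0,1)$) is exactly what licenses applying the observation in both directions to force equality rather than a one-sided inequality. The degenerate cases in which the relevant union is empty or a singleton are trivial, since the constant may then be chosen freely or is forced to the unique value present.
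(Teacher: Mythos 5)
Your proposal is correct and follows essentially the same route as the paper, whose proof is simply a terse citation of Observation~\ref{theobservation} applied to the sets $I_2,I_5,I_8$ (resp.\ $I_5,I_6$); you spell out exactly the intended two-sided application forcing equality, the algebraic identity $U_a^u(i)-\beta_i^*\Delta_a(i)=U_a^c(i)\beta_i^*+U_a^u(i)(1-\beta_i^*)$, and the substitutions for the ``furthermore'' claims. One immaterial slip: $I_5\cup I_6$ is not \emph{precisely} the set of targets with $\beta_i^*\in(0,1)$ (that set is $I_4\cup I_5\cup I_6$), but your argument only uses the inclusion $i\in I_5\cup I_6\implies\beta_i^*\in(0,1)$, which holds.
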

\begin{proof}
    Assertion (i) follows from application of Observation~\ref{theobservation} to the sets $I_2,I_5$ and $I_8$. Similarly, assertion (ii) follows from application to $I_5$ and $I_6$.
\end{proof}
\begin{lemma}\label{defenderstructurallemma}
    If $(\alpha^*,\beta^*)$ is a Nash equilibrium, then
    \begin{enumerate}[(i)]
        \item For $i\in I_5\cup I_6\cup I_8\cup I_9$ and $j\in I_2\cup I_5$, $\alpha^*_j\Delta_d(j)\leq\Delta_d(i)$
        \item For $i\in I_5\cup I_6\cup I_8\cup I_9$ and $j\in I_3\cup I_6$, $\Delta_d(j)\leq\Delta_d(i)$. Furthermore $\Delta_d(j)<\Delta_d(i)$ for $j\in I_3,i\in I_6$.
        \item For $i\in I_5$ and $j\in I_6$, $\Delta_d(j)\leq\Delta_d(i)$
    \end{enumerate}
\end{lemma}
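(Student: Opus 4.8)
The plan is to derive all three assertions from the defender's deviation condition in Observation~\ref{theobservation}, namely that $\beta_i^*>0$ and $\beta_j^*<1$ imply $\alpha_j^*\Delta_d(j)\leq\alpha_i^*\Delta_d(i)$. The first step in each case is to check that the stated set memberships force the hypotheses of this implication. Every $i\in I_5\cup I_6\cup I_8\cup I_9$ has $\beta_i^*>0$ (it lies in $(0,1)$ for $I_5,I_6$ and equals $1$ for $I_8,I_9$), while every $j\in I_2\cup I_3\cup I_5\cup I_6$ has $\beta_j^*<1$ (it equals $0$ for $I_2,I_3$ and lies in $(0,1)$ for $I_5,I_6$). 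Thus the defender implication applies to each pair considered, and the remaining work is to combine it with $\alpha_i^*\in[0,1]$ and the sign conditions in \eqref{delta_conditions}.

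For assertion (i), applying the implication with $i\in I_5\cup I_6\cup I_8\cup I_9$ and $j\in I_2\cup I_5$ yields $\alpha_j^*\Delta_d(j)\leq\alpha_i^*\Delta_d(i)$. Since $\alpha_i^*\leq 1$ and $\Delta_d(i)>0$, the elementary bound $\alpha_i^*\Delta_d(i)\leq\Delta_d(i)$ holds, and chaining the two inequalities gives the claim. I expect this to be immediate.

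For assertion (ii), I would take $j\in I_3\cup I_6$, where $\alpha_j^*=1$ by definition of the partition, so the left-hand side of the defender implication is exactly $\Delta_d(j)$. Combining $\Delta_d(j)\leq\alpha_i^*\Delta_d(i)$ with $\alpha_i^*\leq 1$ and $\Delta_d(i)>0$ produces $\Delta_d(j)\leq\Delta_d(i)$. For the strict refinement with $j\in I_3$ and $i\in I_6$, note $i\neq j$ because $I_3$ and $I_6$ are disjoint blocks of the partition; hence Assumption~\ref{distinctassumption} (distinctness of the $\Delta_d$ values) upgrades the non-strict inequality to $\Delta_d(j)<\Delta_d(i)$.

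Assertion (iii) is the case $i\in I_5$, $j\in I_6$ of the same argument: $\alpha_j^*=1$ gives $\Delta_d(j)\leq\alpha_i^*\Delta_d(i)$, and now $\alpha_i^*<1$ (since $i\in I_5$) with $\Delta_d(i)>0$ yields the conclusion, in fact with strict inequality. The only genuine care needed throughout is the bookkeeping of which blocks satisfy $\beta_i^*>0$ versus $\beta_j^*<1$, so that Observation~\ref{theobservation} is legitimately invoked; once that is settled, each part reduces to the bound $\alpha_i^*\Delta_d(i)\leq\Delta_d(i)$ and, where strictness is claimed, to the distinctness assumption. I do not anticipate any substantive obstacle beyond this case-checking.
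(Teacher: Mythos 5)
Your proof is correct and follows exactly the route the paper intends: the paper's own justification is simply that the lemma ``follows by considering all possible defender deviations,'' i.e.\ repeated application of Observation~\ref{theobservation} together with $\alpha_i^*\le 1$, $\Delta_d(i)>0$, and Assumption~\ref{distinctassumption} for the strict case. Your bookkeeping of which blocks satisfy $\beta_i^*>0$ versus $\beta_j^*<1$ is accurate, so nothing further is needed.
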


\begin{lemma}\label{attackerstructurallemma}
    If $(\alpha^*,\beta^*)$ is a Nash equilibrium, then
    \begin{enumerate}[(i)]
        \item For $i\in I_2\cup I_3\cup I_5\cup I_6\cup I_8\cup I_9$ and $j\in I_1\cup I_2$, $U_a^u(j)\leq U_a^u(i)$. Furthermore, $U_a^u(j)<U_a^u(i)$ for $j\in I_1$, $i\in I_2$.
        \item For $i\in I_9$ and $j\in I_2\cup I_5\cup I_8$, $U_a^c(j)\leq U_a^c(i)$. Furthermore, $U_a^c(j)<U_a^c(i)$ for $j\in I_5, i\in I_8$.
        \item For $i\in I_2\cup I_3\cup I_5\cup I_6\cup I_8\cup I_9$ and $j\in I_1\cup I_2\cup I_5\cup I_8$, $U_a^c(j)\leq U_a^u(i)$
    \end{enumerate}
\end{lemma}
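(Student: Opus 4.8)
The plan is to read all three assertions off a single inequality, namely the attacker's deviation condition from Observation~\ref{theobservation}, after rewriting it in two convenient forms. Writing $V_a(t)=U_a^c(t)\beta_t^*+U_a^u(t)(1-\beta_t^*)$ for the per-unit attack value of target $t$, the attacker half of Observation~\ref{theobservation} states that $\alpha_i^*>0$ and $\alpha_j^*<1$ together imply $V_a(j)\leq V_a(i)$. The targets with $\alpha^*>0$ are exactly those in $I_2\cup I_3\cup I_5\cup I_6\cup I_8\cup I_9$, and the targets with $\alpha^*<1$ are exactly those in $I_1\cup I_2\cup I_4\cup I_5\cup I_7\cup I_8$, so the deviation inequality applies to every pair $(i,j)$ appearing in the lemma. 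The two algebraic forms I would use are $V_a(t)=U_a^u(t)-\beta_t^*\Delta_a(t)$, which gives $V_a(t)\leq U_a^u(t)$ with equality iff $\beta_t^*=0$, and $V_a(t)=U_a^c(t)+(1-\beta_t^*)\Delta_a(t)$, which gives $V_a(t)\geq U_a^c(t)$ with equality iff $\beta_t^*=1$ (both using $\Delta_a(t)>0$).

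With these in hand each assertion is a short chain. For (i), every $j\in I_1\cup I_2$ has $\beta_j^*=0$, so $V_a(j)=U_a^u(j)$; combined with $V_a(i)\leq U_a^u(i)$ and the deviation inequality this yields $U_a^u(j)=V_a(j)\leq V_a(i)\leq U_a^u(i)$. For the main part of (ii), every $i\in I_9$ has $\beta_i^*=1$, so $V_a(i)=U_a^c(i)$; combined with $V_a(j)\geq U_a^c(j)$ the chain $U_a^c(j)\leq V_a(j)\leq V_a(i)=U_a^c(i)$ gives the claim. For (iii), I would simply compose both bounds: $U_a^c(j)\leq V_a(j)\leq V_a(i)\leq U_a^u(i)$.

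Finally, the two strict inequalities come from tightening one of the bounds. For the $I_5$-versus-$I_8$ case in (ii), $i\in I_8$ gives $\beta_i^*=1$ so $V_a(i)=U_a^c(i)$, while $j\in I_5$ has $\beta_j^*\in(0,1)$, so $1-\beta_j^*>0$ forces $V_a(j)>U_a^c(j)$ strictly, and the chain becomes strict. For the $I_1$-versus-$I_2$ case in (i) the intermediate step is already a chain of equalities, so strictness must instead come from Assumption~\ref{distinctassumption}: since $i\in I_2$ and $j\in I_1$ are distinct targets, the distinctness of the $U_a^u$ values rules out $U_a^u(j)=U_a^u(i)$. I do not anticipate a genuine obstacle here; the only thing requiring care is the bookkeeping, i.e. correctly matching each index set to its $\alpha^*,\beta^*$ values and invoking whichever of the two bounds on $V_a$ points in the needed direction.
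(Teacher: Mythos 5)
Your proof is correct and follows essentially the same route as the paper, which proves this lemma simply by applying the attacker-deviation half of Observation~\ref{theobservation} to all relevant pairs of targets; your two rewritings of the attack value $V_a(t)$ and the appeal to Assumption~\ref{distinctassumption} for the $I_1$-versus-$I_2$ strictness correctly fill in the details the paper leaves implicit.
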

In particular, Lemma~\ref{defenderstructurallemma} follows by considering all possible defender deviations and Lemma~\ref{attackerstructurallemma} follows by considering all possible attacker deviations.

\subsection{Computing Nash Equilibria}

We now utilize the structural characterization of Nash equilibria given in the section 2.1 to derive a novel algorithm to compute a Nash equilibrium, its type, and the expected outcomes to the players at equilibrium.

Note that by Assumption~\ref{distinctassumption} and Lemma~\ref{c1c2lemma} we have $|I_2|,|I_6|,|I_8|\in \{0,1\}$ and at most one of $I_2$ and $I_8$ is nonempty at equilibrium. 

Define six subtypes of Type I equilibrium as follows:
\begin{center}
    \begin{tabular}{|c|c|c|c|}
        \cline{2-4}
        \multicolumn{1}{c|}{} & {${I_2=\emptyset\atop I_8=\emptyset}$} & ${I_2\neq\emptyset\atop I_8=\emptyset}$&${I_2=\emptyset\atop I_8\neq\emptyset}$\\ \hline
        {$I_6=\emptyset$}&Type I.A.i&Type I.A.ii&Type I.A.iii\\ \hline
        {$I_6\neq\emptyset$}&Type I.B.i&Type I.B.ii&Type I.B.iii\\ \hline
    \end{tabular}
\end{center}

Introduce the parameters $r=|I_1|$, $s=|I_3|$ and $t=|I_9|$. The equilibrium computation algorithm will function as follows: For each possible value of the parameters $r,s,t$ and each $\texttt{type}\in\{$I.A.i,\dots,I.B.iii$\}$ we assign targets to the sets $I_1,\dots,I_9$ according to the necessary structural properties given by Lemmas~\ref{defenderstructurallemma} and~\ref{attackerstructurallemma}. Then, utilizing the constant sum property \eqref{sum_alpha_beta} of the marginal attack and defense probability vectors, Lemma~\ref{c1c2lemma}, and the definitions of the sets $I_1,\dots,I_9$ we calculate the necessary values of $\alpha_i,\beta_i$ for each $i$ as well as $c_1$ and $c_2$ in our candidate solution. The candidate equilibrium just constructed is then checked for feasibility.
We now give the details of the candidate equilibrium construction for each $\texttt{type}$ and a fixed $r,s,t$. When $I_2,I_6$ or $I_8$ is nonempty, denote their single elements by $j^{[2]}, j^{[6]}$ and $j^{[8]}$ respectively. Without loss of generality, assume the targets are ordered such that $U_a^u(i)<U_a^u(j)$ for $i<j$. The following procedure assigns the targets to the sets $I_1,\dots,I_9$:

\begin{enumerate}
    \item By Lemma~\ref{attackerstructurallemma} (i), for all types we take $I_1=\{1,\dots,r\}$.
    \item If $\texttt{type}\in\{$I.A.ii,I.B.ii$\}$ we set $I_2=\{r+1\}$ by Lemma~\ref{attackerstructurallemma} (i), otherwise $I_2=\emptyset$.
    \item By Lemma~\ref{defenderstructurallemma} (ii), $I_3$ is taken to be the $s$ targets of least $\Delta_d(i)$ in $T\setminus (I_1\cup I_2)$.
    \item If $\texttt{type}\in\{$I.B.i,I.B.ii,I.B.iii$\}$ we let $I_6$ contain the target of least $\Delta_d(i)$ in $T\setminus (I_1\cup I_2\cup I_3)$ by Lemma~\ref{defenderstructurallemma} (ii), otherwise $I_6=\emptyset$.
    \item By Lemma~\ref{attackerstructurallemma} (ii), we assign the $t$ targets of greatest $U_a^c(i)$ in $T\setminus (I_1\cup I_2\cup I_3\cup I_6)$ to $I_9$. 
    \item If $\texttt{type}\in\{$I.A.iii,I.B.iii$\}$, by Lemma~\ref{attackerstructurallemma} we let $I_8$ contain the target of greatest $U_a^c(i)$ in $T\setminus (I_1\cup I_2\cup I_3\cup I_6\cup I_9)$, otherwise $I_8=\emptyset$. 
    \item Finally, $I_5=T\setminus (I_1\cup I_2\cup I_3\cup I_6\cup I_8\cup I_9)$.
\end{enumerate}
\begin{table}
\caption{Expressions for $c_1$ and $c_2$ for each type of type I equilibrium and fixed values of $r,s,t$.}
    \setlength{\tabcolsep}{0.5em} % for the horizontal padding
    \renewcommand{\arraystretch}{2.7}% for the vertical padding
    \begin{multicols}{2}
    \begin{tabular}{|c|c|c|}
        \cline{2-3}
        \multicolumn{1}{c|}{} & $c_1$ & $c_2$\\ \hline
        I.A.i&${t-k_d+\sum_{j\in I_5}{U_a^u(j)\over \Delta_a(j)}\over \sum_{j\in I_5}{1\over \Delta_a(j)}}$&${k_a-s-t\over \sum_{j\in I_5}{1\over \Delta_d(j)}}$\\ \hline
        I.A.ii&$U_a^u(j^{[2]})$&${k_a-s-t-\alpha_{j^{[2]}}\over \sum_{j\in I_5}{1\over\Delta_d(i)}}$\\ \hline
        I.A.iii&$U_a^c(j^{[8]})$&${k_a-s-t-\alpha_{j^{[8]}}\over \sum_{j\in I_5}{1\over\Delta_d(i)}}$\\ \hline
    \end{tabular}
    \columnbreak
    \begin{tabular}{|c|c|c|}
        \cline{2-3}
        \multicolumn{1}{c|}{} & $c_1$ & $c_2$\\ \hline
        I.B.i&${\sum_{i\in I_5}{U_a^u(i)\over \Delta_a(i)}-k_d+t+\beta_{j^{[6]}}\over \sum_{i\in I_5}{1\over \Delta_a(i)}}$&$\Delta_d(j^{[6]})$\\ \hline
        I.B.ii&$U_a^u(j^{[2]})$&$\Delta_d(j^{[6]})$\\ \hline
        I.B.iii&$U_a^c(j^{[8]})$&$\Delta_d(j^{[6]})$\\ \hline
    \end{tabular}
    \end{multicols}
    \label{c1c2table}
\end{table}
After targets are assigned to $I_1,\dots,I_9$, we compute the constants $c_1$ and $c_2$ guaranteed by Lemma~\ref{c1c2lemma} using \eqref{sum_alpha_beta}. The result of this calculation is given in Table~\ref{c1c2table}. It remains to compute the necessary values of $\alpha_i$ and $\beta_i$ for each $i$. This is done as follows:
\begin{enumerate}
    \item $\alpha_i$ and $\beta_i$ for targets in $I_1,I_3$ and $I_9$ are assigned either $0$ or $1$ according to Table~\ref{partitiontable}.
    \item As required by Table~\ref{partitiontable}: If $I_2\neq \emptyset$, set $\beta_{j^{[2]}}=0$; If $I_6\neq \emptyset$, set $\alpha_{j^{[6]}}=1$; If $I_8\neq \emptyset$, set $\beta_{j^{[8]}}=1$.
    \item For each $i\in I_5$, by Lemma~\ref{c1c2lemma} we have
    \begin{equation}
    \alpha_i={c_2\over \Delta_d(i)}\hspace{1em}\text{and}\hspace{1em}\beta_i={U_a^u(i)-c_1\over \Delta_a(i)}
    \label{alphabetaI5}
    \end{equation}
    where $c_1$ and $c_2$ are given by the current value of $\texttt{type}$ and the expressions in Table~\ref{c1c2table}. If $\texttt{type}=$I.A.i, this completes the assignment of $\alpha$ and $\beta$.
    \item If $\texttt{type}=$I.B.ii, by \eqref{sum_alpha_beta} we have
    \begin{equation}
    \alpha_{j^{[2]}}=k_a-s-t-1-\sum_{j\in I_5}{\Delta_d(j^{[6]})\over\Delta_d(j)}\hspace{1em}\text{and}\hspace{1em}\beta_{j^{[6]}}=k_d-t-\sum_{j\in I_5}{U_a^u(j)-U_a^u(j_{[2]})\over\Delta_a(j)},
    \label{IBiialphabeta}
    \end{equation} completing the assignment of $\alpha$ and $\beta$.
    \item If $\texttt{type}=$I.B.iii, by \eqref{sum_alpha_beta} we have
    \begin{equation}
    \alpha_{j^{[8]}}=k_a-s-t-\sum_{j\in I_5}{\Delta_d(j^{[6]})\over \Delta_d(i)}\hspace{1em}\text{and}\hspace{1em}\beta_{j^{[6]}}=k_d-t-1-\sum_{j\in I_5}{U_a^u(j)-U_a^c(j^{[8]})\over \Delta_a(j)},
    \label{IBiiialphabeta}
    \end{equation} completing the assignment of $\alpha$ and $\beta$.
    \item If $\texttt{type}\in\{I.A.ii,I.A.iii,I.B.i\}$, the remaining single undetermined $\alpha_i$ or $\beta_i$ is calculated (or determined not to exist) during feasibility checking.
\end{enumerate}

Once the candidate equilibrium is constructed and $\alpha$ and $\beta$ are computed (with the possible exception of a single $\alpha_i$ or $\beta_i$ in the case of $\texttt{type}\in\{I.A.ii,I.A.iii,I.B.i\}$), we must verify whether or not the construction is a feasible Nash equilibrium. The following result, which we restate using our notation, is noted in \cite{korzhyk2011security}:
\begin{lemma}\label{c1c2feasibilitylemma}
    A pair of marginal attack and defense probability vectors $(\alpha,\beta)$ is a Nash equilibrium if and only if, for all $t\in T$
    \begin{equation*}
    \begin{aligned}
        \beta_t\neq0&\implies \alpha_t\Delta_d(t)\geq c_2\\
        \beta_t\neq 1&\implies \alpha_t\Delta_d(t)\leq c_2\\
        \alpha_t\neq 0&\implies \beta_tU_a^c(t)+(1-\beta_t)U_a^u(t)\geq c_1\\
        \alpha_t\neq 1&\implies \beta_tU_a^c(t)+(1-\beta_t)U_a^u(t)\leq c_1
    \end{aligned}
    \end{equation*}
where $c_1$ and $c_2$ are the constants guaranteed by Lemma~\ref{c1c2lemma}.
\end{lemma}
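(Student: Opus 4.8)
The plan is to recognize both players' best-response problems as continuous (fractional) knapsack linear programs and to establish the stated inequalities as the standard threshold optimality conditions for such programs, read simultaneously for the attacker and the defender.

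First I would rewrite the payoffs from \eqref{vavd} as functions that are linear in a player's own marginal vector when the opponent's is held fixed. For the attacker, $v_a(\alpha,\beta)=\sum_{t}\alpha_t f_a(t)$ with $f_a(t)=\beta_t U_a^c(t)+(1-\beta_t)U_a^u(t)$; note $f_a(t)=U_a^u(t)-\beta_t\Delta_a(t)$, which is exactly the quantity whose common value on $I_2\cup I_5\cup I_8$ is $c_1$ in Lemma~\ref{c1c2lemma}. For the defender, $v_d(\alpha,\beta)=\sum_t\alpha_t U_d^u(t)+\sum_t\beta_t\,g_d(t)$ with $g_d(t)=\alpha_t\Delta_d(t)$; the first sum is independent of $\beta$, so maximizing $v_d$ over $\beta$ is the same as maximizing $\sum_t\beta_t g_d(t)$, and $g_d$ is the quantity whose common value on $I_5\cup I_6$ is $c_2$. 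Thus, holding the opponent fixed, each player solves $\max\sum_t x_t w_t$ subject to $\sum_t x_t=k$ and $x_t\in[0,1]$, with $(w_t,k)=(f_a(t),k_a)$ for the attacker and $(w_t,k)=(g_d(t),k_d)$ for the defender.

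Next I would prove the key optimality fact for this linear program: a feasible $x$ is optimal if and only if there is a threshold $c$ with $x_t>0\Rightarrow w_t\ge c$ and $x_t<1\Rightarrow w_t\le c$. For the ``only if'' direction I would invoke Observation~\ref{theobservation}, which in this language states $x_i>0,\,x_j<1\Rightarrow w_j\le w_i$; this means $\max_{x_j<1}w_j\le\min_{x_i>0}w_i$, so any $c$ in that (nonempty) interval works, and when a fractional coordinate exists the interval collapses to the common weight, identifying $c$ with $c_1$ (resp.\ $c_2$) as in Lemma~\ref{c1c2lemma}. For the ``if'' direction I would run an exchange argument: any feasible deviation preserves $\sum_t x_t=k$ and so decomposes into transfers of mass out of coordinates with $x_t>0$ (whose mass decreases) into coordinates with $x_s<1$ (whose mass increases); each such transfer changes the objective by a nonnegative multiple of $w_s-w_t\le c-c=0$, so no deviation is profitable.

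Finally I would assemble the two applications. Writing the attacker condition with $c_1$ gives precisely the third and fourth implications, and the defender condition with $c_2$ gives the first and second; conversely, the four implications say exactly that $\alpha$ is an attacker best response to $\beta$ and $\beta$ is a defender best response to $\alpha$, which is the Nash condition. I expect the main obstacle to be the bookkeeping around the threshold constants rather than the LP argument itself: one must check that the $c_1,c_2$ produced by the exchange argument coincide with those of Lemma~\ref{c1c2lemma}, and handle the degenerate case in which a player uses no fractional coordinate, where the threshold is only pinned down to an interval and the conditions should be read as asserting the existence of a suitable constant.
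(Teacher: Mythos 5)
Your proposal is correct, but note that the paper itself contains no proof of this lemma: it is stated as a restatement of a result from Korzhyk et al.\ (2011), so your argument is a self-contained proof of something the paper only cites. Your route is the natural one and it works: holding the opponent fixed, each player's best-response problem over marginals is the linear program $\max\sum_t x_t w_t$ subject to $\sum_t x_t=k$, $x_t\in[0,1]$ (with $w_t=\beta_t U_a^c(t)+(1-\beta_t)U_a^u(t)$, $k=k_a$ for the attacker, and $w_t=\alpha_t\Delta_d(t)$, $k=k_d$ for the defender, after discarding the $\beta$-independent term $\sum_t\alpha_t U_d^u(t)$), and the four implications are exactly the complementary-slackness/threshold conditions for optimality in both programs simultaneously. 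Your sufficiency argument is sound and is cleanest written as a one-line identity: for any feasible $y$, $\sum_t w_t(y_t-x_t)=\sum_t (w_t-c)(y_t-x_t)\le 0$ termwise, since $y_t>x_t$ forces $x_t<1$ hence $w_t\le c$, and $y_t<x_t$ forces $x_t>0$ hence $w_t\ge c$; this avoids the informal appeal to decomposing a deviation into pairwise transfers. Your necessity argument correctly reduces to Observation~\ref{theobservation}, and since $1\le k<m$ both sets $\{t:x_t>0\}$ and $\{t:x_t<1\}$ are nonempty, so the threshold interval $[\sup_{x_j<1}w_j,\;\inf_{x_i>0}w_i]$ is well defined. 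Two points you flag deserve emphasis because the paper's statement glosses over them: first, when a fractional coordinate exists the threshold is pinned to the common weight and coincides with the $c_1$ (resp.\ $c_2$) of Lemma~\ref{c1c2lemma}, but when no coordinate is fractional the constants are determined only up to an interval, so the ``if and only if'' must be read as asserting the existence of suitable constants; second, the whole reduction presupposes that every vector in $\{x\in[0,1]^m:\sum_t x_t=k\}$ is realizable as the marginal of some mixed strategy over $k$-subsets, which is true (the hypersimplex is the convex hull of the $k$-subset indicators) and is implicitly assumed by the paper's Definition~1 and Observation~\ref{theobservation}, so you are working within the paper's framework rather than introducing a gap.
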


A constructed equilibrium candidate is feasible if it meets the following conditions:
\begin{enumerate}
    \item For $\texttt{type}=$ I.A.ii, there exists a value of $\alpha_{j^{[2]}}\in(0,1)$ such that:
        \begin{enumerate}
            \item The values of $\alpha_i,\beta_i$ for $i\in I_5$ given by \eqref{alphabetaI5} lie in the interval $(0,1)$.
            \item The values of $c_1,c_2$ given by Table~\ref{c1c2table} meet the conditions of Lemma~\ref{c1c2feasibilitylemma}.
            \item The constant sum condition \eqref{sum_alpha_beta} is met.
        \end{enumerate}
    \item For $\texttt{type}=$ I.A.iii, there exists a value of $\alpha_{j^{[8]}}\in (0,1)$ such that the conditions 1.(a),(b),(c) above are met.
    \item For $\text{type}=$ I.B.i, there exists a value of $\beta_{j^{[6]}}\in (0,1)$ such that the conditions 1.(a),(b),(c) above are met.
    \item For $\texttt{type}=$ I.B.ii or $\texttt{type}=$ I.B.iii, the values of $\alpha$ and $\beta$ given by \eqref{IBiialphabeta} and \eqref{IBiiialphabeta} (respectively) lie in the interval $(0,1)$ and condition 1.(b) above is met.
    \item For $\texttt{type}=$ I.A.i, the conditions 1.(a) and 1.(b) above are met.
\end{enumerate}

\begin{algorithm} Equilibrium Computation\\
    \begin{algorithmic}
        \State {\bf Input:} $k_a,k_d,U_a^u,U_a^c,U_d^u,U_d^c$
        \For{$r=0:\min\{m-k_a,m-k_d\}$}
            \For{$s=0:\min\{k_a,m-k_d-r\}$}
                \For{$t=0:\min\{k_a-s,k_d\}$}
                    \For{$\texttt{type}\in\{$I.A.i,I.A.ii,I.A.iii,I.B.i,I.B.ii,I.B.iii$\}$}
                        \If{$\Gamma(r,s,t,\texttt{type})$ is feasible}
                            \State {\bf Output} $(\alpha^*,\beta^*)$ and $(v_a^*,v_d^*)$
                        \EndIf
                    \EndFor
                \EndFor
            \EndFor
        \EndFor
    \State {\bf Else:} No Type I equilibrium exists.
    \end{algorithmic}
    \label{equilibriumalgorithm}
\end{algorithm}

For a given $r,s,t$ and $\texttt{type}$, let $\Gamma(r,s,t,\texttt{type})$ denote the data of the candidate equilibrium corresponding to the given parameters constructed according to the process just outlined. The Type I equilibrium computation procedure is given by Algorithm~\ref{equilibriumalgorithm}. 

\begin{lemma}
    Under Assumption~\ref{distinctassumption}, if $k_d\leq k_a$ no Type II equilibrium exists.    
\end{lemma}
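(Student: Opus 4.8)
The plan is to obtain a contradiction by comparing the attacker's total budget $k_a$ with the defender's total budget $k_d$ through the constant-sum constraints \eqref{sum_alpha_beta}. The guiding intuition is that in a Type II equilibrium every attacked target is fully defended---this is precisely the content of Lemma~\ref{I4I7lemma}---so the defender must spend at least as much coverage on attacked targets as the attacker spends attacking them; moreover the nonempty set $I_4\cup I_7$ contributes strictly positive extra defensive mass, which will push $k_d$ strictly above $k_a$ unless that set is empty.

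First I would read off the marginal values fixed by Table~\ref{partitiontable} on the five index sets allowed in a Type II equilibrium by Theorem~\ref{NEtheorem}: $\alpha_i=0$ for $i\in I_1\cup I_4\cup I_7$, $\alpha_i\in(0,1)$ for $i\in I_8$, and $\alpha_i=1$ for $i\in I_9$; while $\beta_i=0$ for $i\in I_1$, $\beta_i\in(0,1)$ for $i\in I_4$, and $\beta_i=1$ for $i\in I_7\cup I_8\cup I_9$. Substituting these into \eqref{sum_alpha_beta} yields $k_a=|I_9|+\sum_{i\in I_8}\alpha_i$ and $k_d=|I_7|+|I_8|+|I_9|+\sum_{i\in I_4}\beta_i$.

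Next I would subtract these two expressions and regroup the $I_8$ contribution as $|I_8|-\sum_{i\in I_8}\alpha_i=\sum_{i\in I_8}(1-\alpha_i)$, obtaining
\[
k_d-k_a=|I_7|+\sum_{i\in I_8}(1-\alpha_i)+\sum_{i\in I_4}\beta_i.
\]
Every term on the right is nonnegative: $|I_7|\geq0$, each summand $1-\alpha_i>0$ because $\alpha_i<1$ on $I_8$, and each $\beta_i>0$ on $I_4$. Hence $k_d-k_a\geq0$, with equality exactly when $I_4$, $I_7$, and $I_8$ are all empty. The hypothesis $k_d\leq k_a$ therefore forces $k_d-k_a=0$ and in particular $I_4=I_7=\emptyset$, contradicting the defining requirement $I_4\cup I_7\neq\emptyset$ of a Type II equilibrium; so no such equilibrium can exist.

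I do not expect a substantive obstacle here. The only points needing care are the correct bookkeeping of which sets carry $\beta_i=1$ (namely $I_7$, $I_8$, and $I_9$ together) so that the defense sum is tallied accurately, and the observation that the strict surplus originates from $\alpha_i<1$ on $I_8$ and $\beta_i>0$ on $I_4$. I would also note that Assumption~\ref{distinctassumption} (and the resulting bound $|I_8|\le1$) is not actually required for the displayed inequality; it is invoked only to ensure that the Type I/Type II classification of Theorem~\ref{NEtheorem} is the operative structural dichotomy.
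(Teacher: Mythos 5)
Your proof is correct, and while it shares the paper's basic strategy---substituting the Table~\ref{targetpartition} values into the constant-sum constraints \eqref{sum_alpha_beta} and counting budgets---it handles the set $I_8$ in a genuinely different and cleaner way. The paper first argues that $I_8=\emptyset$: under Assumption~\ref{distinctassumption}, Lemma~\ref{c1c2lemma} gives $|I_8|\leq 1$, and then \eqref{sum_alpha_beta} forces $\sum_{i\in I_8}\alpha_i=k_a-|I_9|$ to be an integer, which a single $\alpha_i\in(0,1)$ cannot be; only after eliminating $I_8$ does it conclude $|I_9|=k_a$ and observe that a nonempty $I_4\cup I_7$ would push $\sum_i\beta_i$ above $k_d$. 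You instead absorb $I_8$ directly into the difference
\[
k_d-k_a=|I_7|+\sum_{i\in I_8}(1-\alpha_i)+\sum_{i\in I_4}\beta_i,
\]
in which every term is nonnegative and a nonempty $I_4\cup I_7$ contributes a strictly positive amount, so $k_d>k_a$ outright. This buys two things. First, as you correctly note, your argument never uses Assumption~\ref{distinctassumption}, so the lemma holds for arbitrary payoffs; this is consistent with the paper's later remark that without the assumption a Type II equilibrium may have $I_8\neq\emptyset$---your identity shows that even such an equilibrium still requires $k_d>k_a$. Second, it replaces the paper's terse integrality step (whose justification is left largely implicit) with a transparent one-line computation.
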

\begin{proof}
    First note that equation \ref{sum_alpha_beta} must hold, and therefore $I_8=\emptyset$. Then, note that in such a type II equilibrium, we have $\sum_{i=1}^m\alpha_i=\sum_{i\in I_9}1=k_a$. Thus, if $k_d\leq k_a$ either $I_4\cup I_7\neq \emptyset$ or $\sum_{i=1}^m\beta_i>k_d$ and no type II equilibrium exists.
\end{proof}

If $k_d>k_a$, a Type II equilibrium may exist.
By Observation~\ref{theobservation}, we have $U_a^c(i)<U_a^c(j)$ for $i\in I_1\cup I_4\cup I_7$ and $j\in I_9$. We order $T$ so that $U_a^c(i)<U_a^c(j)$ for $i<j$ and set $I_9=\{m-k_a+1,\dots,m\}$. Let $I_7$ contain any $k_d-k_a$ targets, and let $I_1$ contain any remaining targets. Note that $c_2=0$ and $c_1=U_d^c(m-k_a+1)$ testifies that this constructed $(\alpha,\beta)$ is a Nash equilibrium.

\begin{example}
    It is straightforward to construct a game exhibiting any given type of equilibrium and feasible parameters $r,s,t$. For example, fix some $k_a$ and $k_d$ and constants $c_1,c_2$. Let $n=\max\{k_a,k_d\}+1$. Construct a game on  $T=\{1,\dots,n\}$ by taking $x_1,\dots,x_n\in (0,1)$ to be distinct reals such that $\sum_{i=1}^nx_i=k_a$ and then picking any $U_d^u(i),U_d^c(i)$ such that $\Delta_d(i)=x_i$ for each $i$. Then, take $y_1,\dots,y_n\in (0,1)$ such that $\sum_{i=1}^ny_i=k_d$ and select $U_a^u(i),U_a^c(i)$ distinct and such that $U_a^u(i)-y_i\Delta_a(i)=c_1$ for each $i$. The resulting game has a type I.A.i with $r=s=t=0$ and constants $c_1$ and $c_2$. 
    
    As an explicit example, take $k_a=3$, $k_d=2$ and $c_1=1$, $c_2=2$. Then $n=4$ and we take $x_1=0.6,x_2=0.7,x_3=0.9,x_4=0.8$ and $y_1=0.3,y_2=0.5,y_3=0.4,y_4=0.8$. We choose
    \begin{center}
        \begin{tabular}{|c|c|c|c|c|}
            \hline
            Target:&1&2&3&4\\
            \hline
            $U_a^u$&8/7&6/5&4/3&2\\
            \hline
            $U_a^c$&2/3&4/5&1/2&3/4\\
            \hline
            $U_d^u$&-8/5&-27/10&-39/10&-24/5\\
            \hline
            $U_d^c$&-1&-2&-3&-4\\
            \hline
        \end{tabular}
    \end{center}
    
    Clearly, one can introduce new targets meeting the criteria in Lemmas~\ref{attackerstructurallemma} and~\ref{defenderstructurallemma} to construct an equilibrium of type I.A.i with other values of $r,s,t$. Furthermore, from a type I.A.i equilibrium with $r=s=t=0$, we can build equilibria of types I.A.ii,I.A.iii,I.B.i,I.B.ii and I.B.iii by introducing targets which belong to $I_2,I_8$ or $I_6$ by enforcing that $U_a^u(i),U_a^c(i)$ or $\Delta_d(i)$ coincide with the fixed value of $c_1$ or $c_2$ respectively.
\end{example}

\begin{remark}
    For the sake of clarity of presentation, the analysis in this section has been under Assumption~\ref{distinctassumption}. However, it is straightforward to extend our analysis and modify Algorithm~\ref{equilibriumalgorithm} to cases in which Assumption~\ref{distinctassumption} does not hold. When Assumption~\ref{distinctassumption} is lifted, we may have $|I_2|,|I_6|,|I_8|>1$ and it may be the case that $I_2$ and $I_8$ are both nonempty. Thus, two additional types of equilibrium, I.A.iv and I.B.iv are introduced to account for the cases in which $I_2\neq\emptyset$ and $I_8\neq \emptyset$. By Lemma~\ref{c1c2lemma}: If $I_2\neq \emptyset$, we must have $U_a^u(i)=U_a^u(j)$ for all $i,j\in I_2$; if $I_8\neq \emptyset$, $U_a^c(i)=U_a^c(j)$ we must have for all $i,j\in I_8$; and if $I_6\neq\emptyset$ we must have $\Delta_d(i)=\Delta_d(j)$ for all $i,j\in I_6$. Furthermore, for any Type I.A.iv or I.B.iv, we must have $U_a^u(i)=U_a^c(j)$ for $i\in I_2$ and $j\in I_8$. When payoffs are possibly indistinct, we modify Algorithm~\ref{equilibriumalgorithm} to include these cases. Note that when checking for feasibility of any Type I.A.ii,I.A.iii,I.A.iv or I.B.i we must verify that there exists a collection of $\alpha_i$ or $\beta_i$ (respectively) in $(0,1)$ meeting the condition \eqref{sum_alpha_beta} and such that $\alpha_j,\beta_J$ for $j\in I_5$ belong to $(0,1)$. Additionally, when payoffs are possibly indistinct, a Type II equilibrium may have $I_8\neq \emptyset$. We check for feasible Type II equilibria by introducing a parameter $\ell=|I_9|$ and considering, for each $\ell$, all candidate equilibria in which $U_a^c(i)=U_a^c(j)$ for all $i,j\in I_8$.
\end{remark}

As the structural analysis of section 2.1 gives closed-form expressions for $\alpha$ and $\beta$, we obtain as a secondary product a set of closed-form expressions for the player expected outcomes at equilibrium for each type. These expressions are presented in the appendix.

\subsection{Uniqueness and Multiplicity of Nash Equilibria}

In this section, we present results regarding the uniqueness and multiplicity of equilibria of the various types defined in Section 2.1. The following result is due to \cite{korzhyk2011security}
\begin{lemma}\label{korzhyk_lemma}
    Suppose $(\alpha^*,\beta^*)$ and $(\overline{\alpha}^*,\overline{\beta}^*)$ are two Nash equilibria of a security game.
    \begin{enumerate}[(i)]
        \item For all $t\in T$, $\alpha^*_t=\overline{\alpha}^*_t$ or $\beta^*_t=\overline{\beta}^*_t$
        \item If $t_1,t_2\in T$ are such that $\beta^*_{t_1}\neq \overline{\beta}^*_{t_1}$ and $\beta^*_{t_2}\neq \overline{\beta}^*_{t_2}$ then \[\alpha^*_{t_1}\Delta_d(t_1)=\alpha^*_{t_2}\Delta_d(t_2)=\overline{\alpha}^*_{t_1}\Delta_d(t_1)=\overline{\alpha}^*_{t_2}\Delta_d(t_2)\]
    \end{enumerate}
\end{lemma}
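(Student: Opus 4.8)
The plan is to derive two aggregate variational inequalities by cross-substituting the two equilibria into one another's best-response conditions, and then combine them with the threshold structure of Lemma~\ref{c1c2feasibilitylemma}. First I would record the bilinear form of the payoffs: from \eqref{vavd} and \eqref{delta_conditions}, $v_a(\alpha,\beta)=\sum_t\alpha_tU_a^u(t)-\sum_t\alpha_t\beta_t\Delta_a(t)$ and $v_d(\alpha,\beta)=\sum_t\alpha_tU_d^u(t)+\sum_t\alpha_t\beta_t\Delta_d(t)$. Substituting $\overline{\alpha}^*$ into the attacker's optimality condition at $(\alpha^*,\beta^*)$ and $\alpha^*$ into the one at $(\overline{\alpha}^*,\overline{\beta}^*)$ and adding, the linear terms cancel and I obtain $\sum_t(\overline{\alpha}_t^*-\alpha_t^*)(\overline{\beta}_t^*-\beta_t^*)\Delta_a(t)\le 0$; the identical maneuver on the defender side gives $\sum_t(\overline{\alpha}_t^*-\alpha_t^*)(\overline{\beta}_t^*-\beta_t^*)\Delta_d(t)\ge 0$. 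Writing $d_t:=(\overline{\alpha}_t^*-\alpha_t^*)(\overline{\beta}_t^*-\beta_t^*)$, these say that the $d_t$ have a nonpositive $\Delta_a$-weighted sum and a nonnegative $\Delta_d$-weighted sum, with all weights $\Delta_a(t),\Delta_d(t)$ strictly positive.

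For part (i) I would show $d_t=0$ at every target (equivalently, $\alpha$ and $\beta$ never both differ at the same target) by contradiction. The idea is to pin down the sign of $\alpha_t^*-\overline{\alpha}_t^*$ at a target where the defense marginals differ: if $\beta_t^*>\overline{\beta}_t^*$, the starred feasibility condition of Lemma~\ref{c1c2feasibilitylemma} gives $\alpha_t^*\Delta_d(t)\ge c_2$ and the barred one gives $\overline{\alpha}_t^*\Delta_d(t)\le\overline{c}_2$, while the attacker conditions relate $U_a^u(t)-\beta_t^*\Delta_a(t)$ and $U_a^u(t)-\overline{\beta}_t^*\Delta_a(t)$ to $c_1$ and $\overline{c}_1$. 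The main obstacle is precisely here: the two equilibria may carry different indifference constants $(c_1,c_2)\neq(\overline{c}_1,\overline{c}_2)$, so these per-target inequalities cannot be chained naively into a contradiction. I would resolve this by feeding the resulting sign information back into the two aggregate inequalities above and invoking the constant-sum identities $\sum_t(\alpha_t^*-\overline{\alpha}_t^*)=0$ and $\sum_t(\beta_t^*-\overline{\beta}_t^*)=0$ from \eqref{sum_alpha_beta}, which force every increase in a marginal to be offset by a decrease elsewhere; this balancing is what rules out any target with $d_t\neq0$ coexisting with both aggregate inequalities.

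Part (ii) then follows from (i). If $\beta^*_{t_1}\neq\overline{\beta}^*_{t_1}$ and $\beta^*_{t_2}\neq\overline{\beta}^*_{t_2}$, part (i) forces $\alpha^*_{t_i}=\overline{\alpha}^*_{t_i}$, so $\alpha^*_{t_i}\Delta_d(t_i)=\overline{\alpha}^*_{t_i}\Delta_d(t_i)$ and it remains only to show $\alpha^*_{t_1}\Delta_d(t_1)=\alpha^*_{t_2}\Delta_d(t_2)$. For any target where $\beta$ differs, orienting it as $\beta_t^*>\overline{\beta}_t^*$ and applying Lemma~\ref{c1c2feasibilitylemma} (using $\alpha_t^*=\overline{\alpha}_t^*$) squeezes $c_2\le\alpha_t^*\Delta_d(t)\le\overline{c}_2$, whereas the opposite orientation squeezes $\overline{c}_2\le\alpha_t^*\Delta_d(t)\le c_2$. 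Since \eqref{sum_alpha_beta} guarantees that any discrepancy in $\beta$ produces at least one target of each orientation, these bounds force $c_2=\overline{c}_2$ and hence $\alpha_t^*\Delta_d(t)=c_2=\overline{c}_2$ at every target where $\beta$ differs; in particular this common value equals all four products in the statement. I expect (ii) to be essentially routine once (i) is in hand, so the single genuine obstacle is the reconciliation of the two equilibria's indifference constants in the proof of (i).
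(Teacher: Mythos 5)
Your part (ii) is correct and essentially complete: given (i), the squeeze $c_2\le \alpha^*_t\Delta_d(t)\le \overline{c}_2$ at targets oriented one way and $\overline{c}_2\le \alpha^*_t\Delta_d(t)\le c_2$ at targets oriented the other way, together with the fact that \eqref{sum_alpha_beta} forces both orientations to occur whenever $\beta^*\neq\overline{\beta}^*$, yields $c_2=\overline{c}_2$ and all four claimed equalities. The genuine gap is part (i), which you never actually prove. Your two aggregate inequalities $\sum_t d_t\Delta_a(t)\le 0$ and $\sum_t d_t\Delta_d(t)\ge 0$ are correctly derived, but they cannot finish the argument even jointly with the constant-sum identities: take difference vectors $\overline{\alpha}^*-\alpha^*=(\epsilon,-\epsilon,\epsilon,-\epsilon)$ and $\overline{\beta}^*-\beta^*=(\epsilon,\epsilon,-\epsilon,-\epsilon)$ on four targets, so $d=(\epsilon^2,-\epsilon^2,-\epsilon^2,\epsilon^2)$; both constant-sum identities hold, and whenever $\Delta_a(1)+\Delta_a(4)\le\Delta_a(2)+\Delta_a(3)$ and $\Delta_d(1)+\Delta_d(4)\ge\Delta_d(2)+\Delta_d(3)$, both aggregate inequalities hold as well with $d\not\equiv 0$. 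So the ``balancing'' mechanism you invoke provably cannot rule out a target with $d_t\neq 0$; the sentence asserting that it does is exactly the missing step, and it is the whole content of (i). The difficulty you correctly identify --- that the two equilibria carry different indifference constants --- is not resolved by your sketch.

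What does close the gap is the same device you use in (ii), pushed through a case analysis on \emph{orientations}, with Lemma~\ref{c1c2feasibilitylemma} used to compare the constants of the two equilibria rather than to pin signs of $\alpha$-differences. For instance, if $\alpha^*_t>\overline{\alpha}^*_t$ and $\beta^*_t>\overline{\beta}^*_t$, then $\alpha^*_t>0$, $\overline{\alpha}^*_t<1$, and strict monotonicity of $U_a^u(t)-\beta\Delta_a(t)$ in $\beta$ give $c_1\le U_a^u(t)-\beta^*_t\Delta_a(t)<U_a^u(t)-\overline{\beta}^*_t\Delta_a(t)\le\overline{c}_1$, i.e.\ $c_1<\overline{c}_1$ strictly; a target where only $\alpha$ differs (with $\alpha^*_t<\overline{\alpha}^*_t$) gives the opposite weak inequality $\overline{c}_1\le c_1$; a target with $\beta^*_t>\overline{\beta}^*_t$ but $\alpha^*_t<\overline{\alpha}^*_t$ gives $c_2<\overline{c}_2$; and symmetrically for the remaining orientations. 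Now assume some target has both coordinates differing: \eqref{sum_alpha_beta} supplies offsetting targets (some with $\alpha$ smaller, some with $\beta$ smaller), and chasing the finitely many orientation cases for those offsetting targets always produces two incompatible orderings of $c_1,\overline{c}_1$ or of $c_2,\overline{c}_2$ --- the aggregate inequalities are never needed. Note finally that the paper itself gives no proof to compare against: it imports this lemma from Korzhyk et al., so a self-contained argument must stand on its own, and yours, as written, does not yet do so for (i).
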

From this result, we derive the following corollary:
\begin{corollary}\label{c1c2corollary}
    If $(\alpha^*,\beta^*)$ and $(\overline{\alpha}^*,\overline{\beta}^*)$ are two Nash equilibria of a security game with constants $c_1,c_2$ and $\overline{c_1},\overline{c_2}$ respectively then one of the following holds:
    \begin{enumerate}[(i)]
        \item $c_1=\overline{c_1}$ and $c_2=\overline{c_2}$
        \item Precisely one of $c_1=\overline{c_1}$ or $c_2=\overline{c_2}$ holds.
    \end{enumerate}
\end{corollary}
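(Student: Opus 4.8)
The plan is to reduce the corollary to ruling out a single case. The three possibilities ``$c_1=\overline{c_1}$ and $c_2=\overline{c_2}$'', ``exactly one equality holds'', and ``$c_1\neq\overline{c_1}$ and $c_2\neq\overline{c_2}$'' are mutually exclusive and exhaustive, and the first two are precisely alternatives (i) and (ii). So it suffices to derive a contradiction from $c_1\neq\overline{c_1}$ and $c_2\neq\overline{c_2}$. The strategy is to show that a change in $c_1$ between the two equilibria forces the \emph{entire} attack vector to agree, and symmetrically that a change in $c_2$ forces the entire defense vector to agree; if both constants change, the two equilibria must coincide, whence their constants coincide.

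Concretely, I would introduce the disagreement sets $A=\{t:\alpha^*_t\neq\overline{\alpha}^*_t\}$ and $B=\{t:\beta^*_t\neq\overline{\beta}^*_t\}$, noting that $A\cap B=\emptyset$ by Lemma~\ref{korzhyk_lemma}(i). The two implications to establish are (a) $A\neq\emptyset\Rightarrow c_1=\overline{c_1}$ and (b) $B\neq\emptyset\Rightarrow c_2=\overline{c_2}$. For (b): since $\sum_t\beta^*_t=k_d=\sum_t\overline{\beta}^*_t$ by \eqref{sum_alpha_beta}, if $B\neq\emptyset$ there is a target $t_{+}$ with $\beta^*_{t_{+}}>\overline{\beta}^*_{t_{+}}$ and a target $t_{-}$ with $\beta^*_{t_{-}}<\overline{\beta}^*_{t_{-}}$. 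At $t_{+}$, feasibility (Lemma~\ref{c1c2feasibilitylemma}) gives $\alpha^*_{t_{+}}\Delta_d(t_{+})\geq c_2$ for the first equilibrium (since $\beta^*_{t_{+}}\neq 0$) and $\overline{\alpha}^*_{t_{+}}\Delta_d(t_{+})\leq\overline{c_2}$ for the second (since $\overline{\beta}^*_{t_{+}}\neq 1$); as $t_{+}\in B$, Lemma~\ref{korzhyk_lemma}(i) forces $\alpha^*_{t_{+}}=\overline{\alpha}^*_{t_{+}}$, and combining the two inequalities yields $c_2\leq\overline{c_2}$. The symmetric argument at $t_{-}$ gives $\overline{c_2}\leq c_2$, hence $c_2=\overline{c_2}$. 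Implication (a) is identical with the roles of the players exchanged, using $\sum_t\alpha^*_t=k_a$, the attacker feasibility inequalities, and the fact that a change in $\alpha_t$ leaves $\beta_t$ (and therefore the attacker value $\beta_tU_a^c(t)+(1-\beta_t)U_a^u(t)$) unchanged.

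To finish, suppose neither (i) nor (ii) holds, so that $c_1\neq\overline{c_1}$ and $c_2\neq\overline{c_2}$. By the contrapositives of (a) and (b) this forces $A=\emptyset$ and $B=\emptyset$, i.e. $\alpha^*=\overline{\alpha}^*$ and $\beta^*=\overline{\beta}^*$, so the two equilibria are identical. Then the index sets $I_1,\dots,I_9$ of Table~\ref{targetpartition} coincide, and the defining relations of Lemma~\ref{c1c2lemma}, namely $c_1=U_a^u(i)-\beta^*_i\Delta_a(i)$ on $I_2\cup I_5\cup I_8$ and $c_2=\alpha^*_i\Delta_d(i)$ on $I_5\cup I_6$, return identical numbers for both equilibria, giving $c_1=\overline{c_1}$ and $c_2=\overline{c_2}$, a contradiction.

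The step I expect to be the main obstacle is this last one, specifically the degenerate situation in which $\alpha^*$ and $\beta^*$ are entirely $0/1$-valued, so that $I_2\cup I_5\cup I_8=\emptyset$ and $I_5\cup I_6=\emptyset$; there the constants of Lemma~\ref{c1c2lemma} are only pinned to intervals, and coincidence of the equilibria alone does not force $c_1=\overline{c_1}$. I would resolve this by taking $c_1,c_2$ to be exactly the constants \emph{guaranteed} by Lemma~\ref{c1c2lemma}, which are well defined precisely when the relevant index sets are nonempty; whenever both constants are meaningful the argument above applies verbatim, and the all-pure case lies outside the scope of that lemma. Everything else is routine: implications (a) and (b) are direct consequences of Lemma~\ref{c1c2feasibilitylemma} and the budget constraints \eqref{sum_alpha_beta}, and the three-case split is exhaustive, so eliminating the ``neither equal'' case proves the claim.
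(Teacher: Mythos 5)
Your proof is correct and takes essentially the same route as the paper: both show, via Lemma~\ref{korzhyk_lemma}(i), the feasibility inequalities of Lemma~\ref{c1c2feasibilitylemma}, and the budget constraints \eqref{sum_alpha_beta}, that disagreement of the attack (resp.\ defense) marginals at some target forces $c_1=\overline{c_1}$ (resp.\ $c_2=\overline{c_2}$), with the case of identical equilibria handled separately through Lemma~\ref{c1c2lemma}. Your packaging as a contradiction and your explicit treatment of the all-pure degenerate case (which the paper glosses over) are only cosmetic refinements of the paper's direct case analysis.
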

\begin{proof}
    Note that if $\alpha^*=\overline{\alpha}^*$ and $\beta^*=\overline{\beta}^*$, the two equilibria are of precisely the same type with the same parameters and thus $c_1=\overline{c_1}$, $c_2=\overline{c_2}$. Suppose that there exists $t$ such that $\alpha_t>\overline{\alpha}_t$. By Lemma~\ref{korzhyk_lemma} (i), we have $\beta_i=\overline{\beta}_i$. Therefore \[c_1\leq U_a^c(t)\beta_i+U_a^u(t)(1-\beta_t)=U_a^c(t)\overline{\beta}_i+U_a^u(t)(1-\overline{\beta}_t)\leq \overline{c_1}.\] Now, as $\sum_{t=1}^m\alpha_i=k_a$, there exists $t^\prime$ so that $\alpha_{t^\prime}<\overline{\alpha}_{t^\prime}$. By the same argument as that for $t$, we have $\overline{c_1}\leq c_1$ and thus $c_1=\overline{c_1}$. A similar argument shows $c_2=\overline{c_2}$ if there exists $t$ such that $\beta_t=\overline{\beta}_t$.
\end{proof}

\begin{lemma}\label{multiplicitylemma}
    \begin{enumerate}[(i)]
        \item If multiple Type I equilibria exist, they are all of the same subtype
        \item If an equilibrium of type I.A.i, I.B.ii or I.B.iii exists, it is unique
        \item If an equilibrium of type I.A.ii,I.A.iii or I.B.i exists, a continuum of such equilibria exist
    \end{enumerate}
\end{lemma}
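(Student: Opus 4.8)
The plan is to treat all three parts through one lens: a Type I equilibrium is completely encoded by the pair $(c_1,c_2)$ of Lemma~\ref{c1c2lemma} together with the partition $I_1,\dots,I_9$, and these two data determine one another. Once the partition and subtype are fixed, Table~\ref{c1c2table} and the constant-sum conditions \eqref{sum_alpha_beta} pin down $c_1,c_2$ and hence every marginal through \eqref{alphabetaI5}; conversely, given $(c_1,c_2)$ the conditions of Lemma~\ref{c1c2feasibilitylemma} assign each target to its set by comparing $U_a^u(i),U_a^c(i)$ against $c_1$ and $\alpha_i\Delta_d(i)$ against $c_2$, and by Assumption~\ref{distinctassumption} these comparisons are strict except at the at-most-one target per constant at which a parameter coincides with $c_1$ or $c_2$. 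I would first record the resulting dictionary: $c_2$ equals some $\Delta_d(i)$ exactly when $I_6\neq\emptyset$ (the B-types), while $c_1$ equals some $U_a^u(i)$ when $I_2\neq\emptyset$ (type ii) and some $U_a^c(i)$ when $I_8\neq\emptyset$ (type iii); otherwise the constant is ``free,'' being the $I_5$-average of Table~\ref{c1c2table}. The number of free constants is the number of degrees of freedom, and this count drives the uniqueness/multiplicity dichotomy.

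For part (iii) I would take type I.A.ii as representative. Here $c_1=U_a^u(j^{[2]})$ is fixed and the one undetermined quantity is $\theta:=\alpha_{j^{[2]}}\in(0,1)$. By Table~\ref{c1c2table}, $c_2(\theta)$ is affine in $\theta$, so the whole profile $(\alpha(\theta),\beta)$ is affine in $\theta$ with $\beta$ held constant, and \eqref{sum_alpha_beta} holds identically in $\theta$ by the very choice of $c_2(\theta)$. Feasibility then reduces to $\theta\in(0,1)$, $\alpha_i(\theta)\in(0,1)$ for $i\in I_5$, and the inequalities of Lemma~\ref{c1c2feasibilitylemma}, i.e. to finitely many affine inequalities in the single variable $\theta$. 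Since an I.A.ii equilibrium exists by hypothesis, this set is nonempty and contains a point at which $\theta\in(0,1)$ and $\alpha_i\in(0,1)$ hold strictly; Assumption~\ref{distinctassumption} prevents two oppositely oriented $\Delta_d$-comparisons from being tight at once, so $\theta$ can be perturbed in at least one direction while remaining feasible. Hence the feasible $\theta$ form a nondegenerate interval, a continuum. The cases I.A.iii (free $\alpha_{j^{[8]}}$) and I.B.i (free $\beta_{j^{[6]}}$, with the roles of $\alpha,\beta$ and $c_1,c_2$ interchanged) are symmetric.

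For part (ii), I.A.i, I.B.ii, I.B.iii are the subtypes with no free constant: for a fixed partition the system \eqref{sum_alpha_beta} has a unique solution $(c_1,c_2)$, hence a unique profile. For two equilibria of such a type, Corollary~\ref{c1c2corollary} guarantees a common constant, and the goal is to upgrade this to a common \emph{pair} $(c_1,c_2)$; once that holds, Assumption~\ref{distinctassumption} makes every comparison in Lemma~\ref{c1c2feasibilitylemma} strict away from $j^{[2]},j^{[6]},j^{[8]}$, so the dictionary forces identical partitions and the profiles coincide. If instead only one constant were common, the proof of Corollary~\ref{c1c2corollary} would make the opposite marginal vector common as well, placing the two profiles on the axis of a part-(iii) continuum; for these zero-free-constant types this can happen only through a parameter coincidence of the kind discussed below.

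Part (i) then follows by the same casework for an arbitrary pair of Type I equilibria: sharing both constants makes them identical (hence trivially of one subtype); sharing exactly one makes the opposite marginal common, fixes the matching A/B or i/ii/iii designation through the dictionary, and confines the pair to a single part-(iii) family, which lies in one subtype. I expect the crux throughout to be the single-shared-constant case at a parameter coincidence --- e.g. two I.A.i candidates with a common value $c_1=U_a^u(i)$ allowing one target to sit in $I_1$ in one profile and in $I_3$ in the other, which would spuriously create a second I.A.i equilibrium or cross a subtype boundary. Handling it requires Assumption~\ref{distinctassumption} (at most one coincidence per constant) together with the explicit formulas of Table~\ref{c1c2table}, which force any such alternate profile either to place the coincident target in $I_5$ with $\beta_i\notin(0,1)$ (impossible) or to violate \eqref{sum_alpha_beta}; checking this uniformly over the six subtype pairings is the delicate part of the argument.
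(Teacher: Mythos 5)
Your proposal follows essentially the same route as the paper's own proof: both arguments pivot on Corollary~\ref{c1c2corollary}, splitting a pair of putative Type I equilibria into the cases (both constants shared / exactly one shared), concluding identity of the profiles in the first case, and matching the single-shared-constant cases to the free-parameter subtypes I.A.ii, I.A.iii, I.B.i, with Assumption~\ref{distinctassumption} invoked for the same-subtype claim. If anything, you do more than the paper: its proof never argues part (iii) at all --- it only shows that two \emph{distinct} equilibria must lie in those subtypes --- whereas you supply the affine one-parameter-family argument; and the coincidence case you isolate as ``the delicate part'' (a target with $U_a^u(i)=c_1$ sitting in $I_1$ in one profile and in $I_3$ in the other) is precisely the configuration the paper's proof steps over when it asserts ``thus there exists $t\in I_2\cup I_8$.'' One caution: your sketched resolution of that case does not go through as stated, because moving the coincident target from $I_1$ to $I_3$ need not violate \eqref{sum_alpha_beta} --- the accompanying drop in $c_2$ lowers each $\alpha_i=c_2/\Delta_d(i)$, $i\in I_5$, by a total of exactly $1$, so the constant-sum condition re-balances and the two profiles can coexist; a parameter coincidence of the same kind (e.g.\ $\alpha_{j^{[2]}}\Delta_d(j^{[2]})=c_2=\Delta_d(t)$ for some $t\in I_9$) can likewise pinch the feasible interval in your part-(iii) argument to a point, since Assumption~\ref{distinctassumption} constrains only the raw parameters, not derived quantities such as $c_1,c_2$. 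Since the paper's proof is silent on exactly these degenerate coincidences, your attempt is no less complete than the argument it is measured against, but the deferred ``uniform check over the six subtype pairings'' is where the real work (or an added genericity hypothesis) would have to go.
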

\begin{proof}
    Suppose $(\alpha^*,\beta^*),(\overline{\alpha}^*,\overline{\beta}^*)$ are two type I equilibria. By Corollary~\ref{c1c2corollary}, $\alpha^*\neq \overline{\alpha}^*$ implies $c_1=\overline{c_1}$ and $\beta^*\neq\overline{\beta}^*$ implies $c_2=\overline{c_2}$. Consider the cases:
    \begin{itemize}
        \item $c_1=\overline{c_1}$ and $c_2=\overline{c_2}$; We have $(\alpha^*,\beta^*)=(\overline{\alpha}^*,\overline{\beta}^*)$.
        \item $c_1=\overline{c_1}$ and $c_2\neq\overline{c_2}$; There must exist $t$ so that $c_1=\overline{c_1}=U_a^u(t)-\beta_i^*\Delta_a(t)$ and thus there exists $t\in I_2\cup I_8$. 
        \item $c_1\neq\overline{c_1}$ and $c_2=\overline{c_2}$; There must exist $t$ so that $c_2=\overline{c_2}=\alpha_t\Delta_d(t)$ and thus there must exist $t\in I_6$.
    \end{itemize}
    That is, if two distinct equilibria of Type I exist they are of type $I.A.ii,I.A.iii$ or $I.B.i$. Thus, if a Type I.A.i, I.B.ii or I.B.iii equilibrium exists, it is unique. Furthermore, by Assumption~\ref{distinctassumption} if two distinct equilibria exist of types I.A.ii or I.A.iii, they must both be either of type I.A.ii or I.A.iii.
\end{proof}
\begin{lemma}
    If an equilibrium of Type II exists, no equilibrium of Type I exists.
\end{lemma}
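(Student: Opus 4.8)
The plan is to argue by contradiction: assume a Type~II equilibrium $(\overline{\alpha}^*,\overline{\beta}^*)$ and a Type~I equilibrium $(\alpha^*,\beta^*)$ coexist in the same game, and then extract incompatible constraints on the two equilibria through the constants $c_1,c_2$ supplied by Lemma~\ref{c1c2lemma}, the feasibility characterization of Lemma~\ref{c1c2feasibilitylemma}, and the comparison result of Corollary~\ref{c1c2corollary}. Since a Type~II equilibrium exists, the preceding lemma gives $k_d>k_a$, a fact I would use repeatedly.

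First I would pin down the defender constant in each regime. In the Type~II equilibrium $I_4\cup I_7\neq\emptyset$: taking any $t$ there ($\overline{\alpha}^*_t=0$, $\overline{\beta}^*_t\in(0,1]$) and substituting into Lemma~\ref{c1c2feasibilitylemma} forces $\overline{c_2}=0$ (if $t\in I_4$ both inequalities give $\overline{\alpha}^*_t\Delta_d(t)=\overline{c_2}=0$; if $I_4=\emptyset$ one combines $t\in I_7$ with an undefended target of $I_1$, which exists since $k_d<m$). For the Type~I equilibrium I would instead show $c_2>0$. The only way to get $c_2\le 0$ is the degenerate configuration in which every attacked target is fully covered, i.e. $T=I_1\cup I_8\cup I_9$; but then $\sum_i\alpha^*_i=|I_9|+\sum_{I_8}\alpha^*_i=k_a$ while $\sum_i\beta^*_i=|I_8|+|I_9|=k_d$, and since $|I_8|\le1$ carries a strictly fractional attack value under Assumption~\ref{distinctassumption}, the integrality of $k_a,k_d$ together with $k_d>k_a$ rules this out. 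Hence at least one of $I_2,I_3,I_5,I_6$ is nonempty in the Type~I equilibrium, which forces $c_2>0$.

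With $c_2>0=\overline{c_2}$ I would invoke Corollary~\ref{c1c2corollary}. Because $c_2\neq\overline{c_2}$, the contrapositive of the $c_2$-argument in its proof forces $\beta^*=\overline{\beta}^*$, and since the two profiles are distinct (they have different types) the $\alpha$-argument then gives $c_1=\overline{c_1}$. Thus both equilibria share a common defense vector and a common attacker threshold, so $\alpha^*$ and $\overline{\alpha}^*$ are two attacker best responses to the same $\beta^*$. The dichotomy now becomes a question of how the shared budget $k_a$ of attack probability is spread: in Type~I, $I_4=I_7=\emptyset$, so \emph{every} covered target ($\beta^*_t>0$) is attacked, whereas in Type~II the nonempty set $I_4\cup I_7$ consists of covered-but-unattacked targets. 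Moreover, the integrality argument above shows $I_8=\emptyset$ in the Type~II equilibrium, so $|I_9|=k_a$ and all $k_a$ units of attack sit on fully covered, fully attacked targets of maximal $U_a^c(\cdot)$. Every such target with $U_a^c(t)>c_1$ must be \emph{fully} attacked in any best response to $\beta^*$, hence also in the Type~I equilibrium; but these already exhaust the budget $k_a$, leaving the nonempty set $I_4\cup I_7$ (which in Type~I must carry strictly positive attack) nothing to draw on, yielding $\sum_i\alpha^*_i>k_a$ and contradicting \eqref{sum_alpha_beta}.

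The delicate point, and the step I expect to be the main obstacle, is this final count when exactly one target sits on the attacker threshold, i.e. $U_a^c(t)=c_1$. Such a boundary target may be fully attacked in the Type~II equilibrium yet only partially attacked in the Type~I equilibrium, opening a slack of up to one unit in the budget that the covered-but-unattacked targets of $I_4\cup I_7$ could in principle absorb without violating \eqref{sum_alpha_beta}. Closing the argument therefore requires ruling out this single-target coincidence, presumably by combining Assumption~\ref{distinctassumption} (which caps $|I_8|,|I_6|$ and the number of threshold targets at one) with the defender-side inequalities of Lemma~\ref{c1c2feasibilitylemma} evaluated at the shared $\beta^*$ simultaneously for the two thresholds $c_2>0$ and $\overline{c_2}=0$; this is where I would concentrate the technical effort, since the generic case already closes cleanly via the budget count above.
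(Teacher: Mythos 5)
Your route is genuinely different from the paper's (which invokes interchangeability of Nash equilibria and case-analyzes the two mixed profiles), and your intermediate steps are correct: every valid defender threshold in a Type~II equilibrium satisfies $\overline{c_2}=0$; in a Type~I equilibrium, $k_d>k_a$ forces $c_2>0$; hence, by the argument inside Corollary~\ref{c1c2corollary}, the two equilibria would share $\beta^*=\overline{\beta}^*$ and $c_1=\overline{c_1}$, and your budget count then yields a contradiction whenever no target of the Type~II support sits at the attacker threshold, i.e.\ whenever $I_8=\emptyset$ in the Type~I equilibrium. But the ``delicate point'' you defer is not a technical loose end that more effort can close: the statement itself is false exactly in that boundary case. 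Concretely, take $m=4$, $k_a=1$, $k_d=2$, attacker payoffs $(U_a^c(i),U_a^u(i))$ equal to $(10,20),(2,15),(5,18),(1,9)$ for $i=1,2,3,4$, and defender payoffs $U_d^c(i)=-1$, $U_d^u(i)=-3,-5,-6,-4$, so that $\Delta_d(i)=2,4,5,3$; Assumption~\ref{distinctassumption} holds. Put $\beta^*=(1,5/13,8/13,0)$. The attacker values are then $10,10,10,9$, so every attack vector supported on $\{1,2,3\}$ is an attacker best response; and both $\overline{\alpha}^*=(1,0,0,0)$ (defender values $\overline{\alpha}^*_t\Delta_d(t)=(2,0,0,0)$) and $\alpha^*=(11/20,1/4,1/5,0)$ (defender values $(1.1,1,1,0)$) make $\beta^*$ a defender best response: cover target $1$ fully and split the remaining unit over targets $2,3$. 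Thus $(\overline{\alpha}^*,\beta^*)$ is a Type~II equilibrium ($I_9=\{1\}$, $I_4=\{2,3\}$, $I_1=\{4\}$) while $(\alpha^*,\beta^*)$ is a Type~I equilibrium of subtype I.A.iii ($I_8=\{1\}$, $I_5=\{2,3\}$, $c_1=U_a^c(1)=10$, $c_2=1$) --- precisely your single-target coincidence $U_a^c(t_0)=c_1$.

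So the gap is genuine and unfixable: what your analysis actually establishes is the weaker (and correct) statement that a Type~II equilibrium excludes any Type~I equilibrium with $I_8=\emptyset$. You should not conclude that you missed an idea available in the paper, because the paper's own proof fails at the same spot: it asserts that the defender in a Type~II equilibrium ``is playing a pure strategy,'' which is false whenever $I_4\neq\emptyset$, as in the example above where $\beta^*$ is mixed. The honest resolution is either to add a hypothesis (rule out Type~I equilibria with $I_8\neq\emptyset$, or restrict attention to Type~II equilibria with pure defender strategies) or to record the lemma in the weakened form your generic-case argument actually proves.
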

\begin{proof}
    Suppose $(\alpha^*,\beta^*)$ is of Type II and $(\overline{\alpha}^*,\overline{\beta}^*)$ is of Type I. As there is a feasible Type II equilibrium, we have $k_d>k_a$. By the interchangeability of Nash equilibria in additive security games \cite{korzhyk2011stackelberg}, we have that $(\overline{\alpha}^*,\beta^*)$ must be a Nash equilibrium. If $(\overline{\alpha}^*,\beta^*)$ is of Type I, it must be of Type I.A.i with $|I_9|=k_d$ as the defender is playing a pure strategy. However, this implies $\sum_{i=1}^m\alpha_i\geq \sum_{i\in I_9}\alpha_i=k_d>k_a$ in violation of $\eqref{sum_alpha_beta}$. Thus $(\overline{\alpha}^*,\beta)$ is of type II, so $\overline{\alpha}^*$ is a pure strategy. On the other hand, if $(\alpha,\overline{\beta}^*)$ is of Type I, it must be of Type $I.A.i$ with $I_5=\emptyset$ and $|I_9|=k_d$. However this implies $k_a=\sum_{i=1}^m\alpha_i\geq \sum_{i\in I_3\cup I_9}1\geq k_d$ contradicting $k_d>k_a$. So $(\alpha,\overline{\beta^*})$ is of Type II and $\overline{\beta}^*$ is a pure strategy. Therefore $(\overline{\alpha}^*,\overline{\beta}^*)$ is a Type I.A.i with $I_5=\emptyset$ (as both players are playing pure strategies), but this implies $k_d\leq k_a$ - a contradiction.
\end{proof}

\subsection{Special Case: Fully Protective Resources}

In this section we consider the special case of {\it fully protective resources}. That is, $U_a^c(i)=U_d^c(i)=0$ for all $i$. A security game model in which $U_a^c(i)=0$ for all $i$ and the attacker has a single resource is studied in \cite{lerma2011linear} in which a linear time algorithm algorithm was proposed to calculate a Stackelberg equilibrium in such a game.

As in the previous section, we shall make Assumption~\ref{distinctassumption} in the following analysis, and we present closed-form expressions for the player expected payoffs at equilibrium in the appendix. We now present the analogues of Lemmas~\ref{I4I7lemma}-\ref{attackerstructurallemma} and Theorem~\ref{NEtheorem} for the case of fully protective resources. Note that, for any pair of marginal attack and defense probability vectors $(\alpha,\beta)$, we have the following, even more compact forms of the defender expected payoffs \eqref{vavd}:
\[v_a^*=\sum_{t=1}^m\alpha_tU_a^u(t)(1-\beta_t)\hspace{1em}\text{and}\hspace{1em}v_d^*=\sum_{t=1}^m\alpha_iU_d^u(t)(1-\beta_t).\]

Intuitively, the expected payoffs depend only upon the marginal probabilities with which the targets are attacked and exposed. The structural characterization of equilibria in the case of fully protective resources will, as in the previous section, from repeated application of the following modified version of Observation~\ref{theobservation}
\begin{observation}\label{fullyprotectiveobservation}
    In a game with fully protective resources, a player can unilaterally deviate from $(\alpha,\beta)$ if and only if they are able to shift marginal probability from one target to another. By definition, therefore, if $(\alpha^*,\beta^*)$ is a Nash equilibrium:
    \begin{equation*}
    \begin{aligned}
        \alpha_i^*>0,\alpha_j^*<1&\implies U_a^u(j)(1-\beta_j)\leq U_a^u(i)(1-\beta_i)\\
        \beta_i^*>0,\beta_j^*<1&\implies \alpha_iU_d^u(i)\leq\alpha_jU_d^u(j)
    \end{aligned}
    \end{equation*}
    That is, no player can unilaterally deviate and increase their expected payoff.
\end{observation}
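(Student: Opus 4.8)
The plan is to obtain this directly as a specialization of Observation~\ref{theobservation} to the fully protective regime $U_a^c(i)=U_d^c(i)=0$, so that essentially no new argument is required beyond bookkeeping. First I would record the two defining substitutions: since $U_a^c(i)=0$, the attacker's per-target marginal value $U_a^c(i)\beta_i+U_a^u(i)(1-\beta_i)$ collapses to $U_a^u(i)(1-\beta_i)$; and since $U_d^c(i)=0$, we have $\Delta_d(i)=U_d^c(i)-U_d^u(i)=-U_d^u(i)$. Substituting the first into the attacker implication of Observation~\ref{theobservation} immediately yields the first line, $U_a^u(j)(1-\beta_j)\leq U_a^u(i)(1-\beta_i)$, while substituting the second into the defender implication turns $\alpha_j\Delta_d(j)\leq\alpha_i\Delta_d(i)$ into $-\alpha_j U_d^u(j)\leq -\alpha_i U_d^u(i)$, i.e.\ $\alpha_i U_d^u(i)\leq \alpha_j U_d^u(j)$, which is exactly the second line.

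Alternatively, and perhaps more transparently for a standalone reading, I would argue directly from the compact payoff forms $v_a=\sum_t\alpha_t U_a^u(t)(1-\beta_t)$ and $v_d=\sum_t\alpha_t U_d^u(t)(1-\beta_t)$ displayed just above. The key point, already recorded for the general case, is that a unilateral deviation of a player is, up to the faithful marginal representation, exactly a transfer of a small amount of marginal probability from one target to another subject to $\alpha_t,\beta_t\in[0,1]$ and the constant-sum constraints \eqref{sum_alpha_beta}. For the attacker, the coefficient of $\alpha_i$ in $v_a$ is $U_a^u(i)(1-\beta_i)$, so shifting mass $\varepsilon>0$ from a target $i$ with $\alpha_i^*>0$ to a target $j$ with $\alpha_j^*<1$ changes $v_a$ by $\varepsilon\bigl(U_a^u(j)(1-\beta_j)-U_a^u(i)(1-\beta_i)\bigr)$; optimality forces this to be nonpositive for every admissible pair, which is the first implication.

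For the defender, I would compute the rate of change of $v_d$ with respect to $\beta_t$, namely $-\alpha_t U_d^u(t)$, and note that it is positive because $U_d^u(t)<0$. Transferring coverage $\varepsilon>0$ from a target $i$ with $\beta_i^*>0$ to a target $j$ with $\beta_j^*<1$ then changes $v_d$ by $\varepsilon\bigl(\alpha_i U_d^u(i)-\alpha_j U_d^u(j)\bigr)$, and nonpositivity of this quantity at equilibrium gives the second implication. The only point requiring any care is the defender's sign convention: because $U_d^u<0$, improved coverage corresponds to increasing $\beta_t$, and the inequality direction must be tracked accordingly, which is precisely the source of the reversal relative to the attacker's condition. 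I do not anticipate a genuine obstacle here; the statement is a direct corollary of Observation~\ref{theobservation}, and the main value of stating it separately is to have the fully protective inequalities in a clean form for the structural lemmas that follow.
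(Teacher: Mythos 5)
Your proposal is correct and matches the paper's treatment: the paper presents this observation precisely as the specialization of Observation~\ref{theobservation} to the fully protective case, justified by the same marginal-probability-shifting argument applied to the compact payoffs $v_a=\sum_t\alpha_t U_a^u(t)(1-\beta_t)$ and $v_d=\sum_t\alpha_t U_d^u(t)(1-\beta_t)$. Both of your routes — substituting $U_a^c(i)=0$ and $\Delta_d(i)=-U_d^u(i)$ into Observation~\ref{theobservation}, and the direct first-order deviation computation with the sign reversal from $U_d^u<0$ — are sound and amount to the paper's own (implicit) reasoning.
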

\begin{lemma}\label{exclusionlemma}
    If $(\alpha^*,\beta^*)$ is a Nash equilibrium of a security game with fully protective resources:
    \begin{enumerate}[(i)]
        \item $I_4\cup I_7=\emptyset$
        \item At least one of $I_8\cup I_9$ and $I_1\cup I_2\cup I_5$ is empty. 
        \item At least one of $I_8\cup I_9$ and $I_6$ is empty.
    \end{enumerate}
\end{lemma}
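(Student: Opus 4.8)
The plan is to establish the three parts in the stated order, since my argument for (iii) will invoke (i) and (ii). The common tools throughout are Observation~\ref{fullyprotectiveobservation}, the sign facts $U_a^u(i)>0$ and $U_d^u(i)<0$, and the fact that in the fully protective model target $t$ contributes $\alpha_t U_a^u(t)(1-\beta_t)$ to the attacker, which is strictly positive exactly when $\alpha_t>0$ and $\beta_t<1$ and is zero whenever $\beta_t=1$.

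For (i) I would argue by contradiction: suppose some $i\in I_4\cup I_7$, so $\alpha_i^*=0$ and $\beta_i^*>0$. Applying the defender half of Observation~\ref{fullyprotectiveobservation} to this $i$ against any $j$ with $\beta_j^*<1$ gives $\alpha_i^* U_d^u(i)\le\alpha_j^* U_d^u(j)$; since the left side is $0$ and $U_d^u(j)<0$, this forces $\alpha_j^*=0$. Hence every target that is not fully defended is unattacked, equivalently every attacked target has $\beta^*=1$. Because $k_a\ge 1$ there is a $j_0$ with $\alpha_{j_0}^*>0$ (so $\beta_{j_0}^*=1$), and because $k_d<m$ equation~\eqref{sum_alpha_beta} forces some $j_1$ with $\beta_{j_1}^*<1$ (so $\alpha_{j_1}^*=0$). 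The attacker half of Observation~\ref{fullyprotectiveobservation} applied to $j_0$ (decrease) and $j_1$ (increase) then yields $U_a^u(j_1)(1-\beta_{j_1}^*)\le U_a^u(j_0)(1-\beta_{j_0}^*)=0$, contradicting $U_a^u(j_1)(1-\beta_{j_1}^*)>0$.

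For (ii) I would again assume both sets nonempty and pick $i\in I_8\cup I_9$ (so $\alpha_i^*>0$, $\beta_i^*=1$) and $j\in I_1\cup I_2\cup I_5$ (so $\alpha_j^*<1$, $\beta_j^*<1$). The attacker half of Observation~\ref{fullyprotectiveobservation} applied to $i$ (decrease) and $j$ (increase) gives $U_a^u(j)(1-\beta_j^*)\le U_a^u(i)(1-\beta_i^*)=0$, contradicting $U_a^u(j)(1-\beta_j^*)>0$. Intuitively, the attacker is wasting probability on a fully covered target worth $0$ while a strictly profitable exposed target is available.

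Part (iii) is where I expect the real difficulty. The naive pairwise-deviation argument fails here, because every target in $I_6$ already has $\alpha^*=1$, so the attacker cannot shift more probability onto it, and the defender's equilibrium inequalities between $I_6$ and $I_8\cup I_9$ turn out to be mutually consistent; so no single profitable deviation is forced. My plan is instead to argue by counting. Assuming $I_8\cup I_9\neq\emptyset$, I must show $I_6=\emptyset$. By part (i), $I_4=I_7=\emptyset$, and by part (ii), $I_1\cup I_2\cup I_5=\emptyset$. Consulting Table~\ref{partitiontable}, the only sets that may remain nonempty are $I_3,I_6,I_8,I_9$, and among these the only targets with $\beta^*\in(0,1)$ lie in $I_6$. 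Under Assumption~\ref{distinctassumption} (via Lemma~\ref{c1c2lemma}) we have $|I_6|\le 1$, so at most one coordinate of $\beta^*$ is non-integral. Then $\sum_i\beta_i^*$ equals an integer plus $\beta_{j^{[6]}}^*\in(0,1)$, which cannot equal the integer $k_d$ required by~\eqref{sum_alpha_beta}; hence $I_6=\emptyset$. The crux is recognizing that (iii) is not a deviation statement at all but an integrality obstruction unmasked only after (i) and (ii) collapse the support onto $I_3,I_6,I_8,I_9$.
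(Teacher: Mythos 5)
Your proof is correct. For parts (i) and (ii) it coincides with the paper's (implicit) argument: the paper states this lemma without a written proof, asserting only that the fully protective structure follows from repeated application of Observation~\ref{fullyprotectiveobservation}, and your pairwise shift-of-probability contradictions (driven by $U_a^u>0$, $U_d^u<0$, $k_a\geq 1$, and $k_d<m$) are exactly that method. Part (iii) is where you genuinely depart from what the paper makes explicit, and your departure is not optional but necessary. You are right that pairwise deviations cannot settle (iii): for $i\in I_6$ and $j\in I_8\cup I_9$ the only admissible deviations are the attacker shifting from $i$ onto $j\in I_8$ (giving the vacuous $0\leq U_a^u(i)(1-\beta_i)$) and the defender shifting from $j$ onto $i$ (giving the satisfiable constraint $\alpha_j U_d^u(j)\leq U_d^u(i)$), so no contradiction is forced. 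Your counting argument --- (i) and (ii) collapse the support onto $I_3\cup I_6\cup I_8\cup I_9$, Assumption~\ref{distinctassumption} gives $|I_6|\leq 1$, and then $\sum_i\beta_i^*=\beta^*_{j^{[6]}}+|I_8|+|I_9|$ cannot equal the integer $k_d$ --- is sound, and the reliance on Assumption~\ref{distinctassumption} is essential rather than a convenience: if distinctness of the $\Delta_d(i)$ is dropped, the claim is false. For instance, with $m=5$, $k_a=k_d=4$, take $I_6=\{1,2\}$ with $U_d^u(1)=U_d^u(2)=-1$ and $\beta_1=\beta_2=1/2$, $I_9=\{3\}$ with $U_d^u(3)=-3$, and $I_8=\{4,5\}$ with $U_d^u(4)=U_d^u(5)=-2$ and $\alpha_4=\alpha_5=1/2$; every inequality of Observation~\ref{fullyprotectiveobservation} then holds, so this is an equilibrium with both $I_6$ and $I_8\cup I_9$ nonempty. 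Two cosmetic remarks: within the fully protective section the cleaner citation for $|I_6|\leq 1$ is the unnumbered fully protective analogue of Lemma~\ref{c1c2lemma} (or the one-line argument that two targets in $I_6$ admit defender deviations in both directions, forcing $U_d^u(i)=U_d^u(j)$ against distinctness), and note that what Assumption~\ref{distinctassumption} actually supplies here is distinctness of $\Delta_d(i)=-U_d^u(i)$, since its clause on the $U_a^c(i)$ degenerates when all of those are zero.
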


\begin{theorem}\label{NEtheoremfullyprotectiveresources}
    A Nash equilibrium $(\alpha^*,\beta^*)$ in a security game with fully protective resources is of one of the following types:
    \begin{enumerate}[(i)]
        \item I.A.i, I.A.ii,I.B.i or I.B.ii with $I_9=\emptyset$
        \item I.A.iii with $I_1\cup I_5=\emptyset$
    \end{enumerate}
\end{theorem}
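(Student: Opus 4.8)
The plan is to first pin down that any such equilibrium must be of Type I, and then to determine exactly which of the six subtypes can occur by tracking the emptiness of $I_8$. By Lemma~\ref{exclusionlemma}(i) we have $I_4\cup I_7=\emptyset$, so the equilibrium cannot be of Type II (which requires $I_4\cup I_7\neq\emptyset$ by Theorem~\ref{NEtheorem}); hence it is Type I and therefore, under Assumption~\ref{distinctassumption}, exactly one of the six subtypes I.A.i through I.B.iii. The whole argument then reduces to a case split on whether $I_8=\emptyset$ or $I_8\neq\emptyset$, since the two conclusions of the theorem are distinguished precisely by this: conclusion (i) lists the four subtypes with $I_8=\emptyset$, while conclusion (ii) is the subtype I.A.iii, for which $I_8\neq\emptyset$.

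The crux of the proof is the implication $I_9\neq\emptyset\implies I_8\neq\emptyset$, which I would establish directly from Observation~\ref{fullyprotectiveobservation} together with the fully protective structure. Suppose $i\in I_9$, so $\alpha_i^*=1$ and $\beta_i^*=1$; then the attacker's marginal value at $i$ is $U_a^u(i)(1-\beta_i^*)=0$. Since $k_a<m$ and $\sum_j\alpha_j^*=k_a$, there is some target $j$ with $\alpha_j^*<1$. Applying Observation~\ref{fullyprotectiveobservation} with the pair $(i,j)$ gives $U_a^u(j)(1-\beta_j^*)\leq U_a^u(i)(1-\beta_i^*)=0$; as $U_a^u(j)>0$ and $1-\beta_j^*\geq 0$ this forces $\beta_j^*=1$. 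Thus every target with $\alpha^*<1$ lies in $I_7\cup I_8$, and since $I_7=\emptyset$ by Lemma~\ref{exclusionlemma}(i), such a target lies in $I_8$, giving $I_8\neq\emptyset$.

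With this in hand the two cases close quickly. If $I_8=\emptyset$, the contrapositive of the implication gives $I_9=\emptyset$, and the subtype (having $I_8=\emptyset$) must be one of I.A.i, I.A.ii, I.B.i, or I.B.ii, yielding conclusion (i). If $I_8\neq\emptyset$, the subtype is I.A.iii or I.B.iii. I would rule out I.B.iii using Lemma~\ref{exclusionlemma}(iii): since $I_8\cup I_9\neq\emptyset$, that set forces $I_6=\emptyset$, contradicting the requirement $I_6\neq\emptyset$ of I.B.iii, so the subtype is I.A.iii. Finally, Lemma~\ref{exclusionlemma}(ii) applied with $I_8\cup I_9\neq\emptyset$ forces $I_1\cup I_2\cup I_5=\emptyset$, in particular $I_1\cup I_5=\emptyset$, completing conclusion (ii).

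I expect the main obstacle to be the implication $I_9\neq\emptyset\implies I_8\neq\emptyset$, since it is the one step genuinely using the fully protective hypothesis $U_a^c\equiv 0$ (through the vanishing of the attacker's marginal value at a fully covered target) rather than the bookkeeping already packaged in Lemma~\ref{exclusionlemma}; everything else is a mechanical reconciliation of the subtype definitions with the emptiness constraints supplied by that lemma. One subtlety to verify carefully is the existence of a target with $\alpha^*<1$, which relies on the standing assumption $k_a<m$.
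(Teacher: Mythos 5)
Your proposal is correct, and it uses the same ingredients the paper intends: the paper states Theorem~\ref{NEtheoremfullyprotectiveresources} without an explicit proof, leaving it to follow from Lemma~\ref{exclusionlemma}, and your argument is essentially that derivation carried out in full. The one step that Lemma~\ref{exclusionlemma} does not hand you directly is ruling out the configuration $I_8=\emptyset$, $I_9\neq\emptyset$ (which would be a ``Type I.A.i with $I_9\neq\emptyset$'' and is covered by neither conclusion), and this is exactly the step you isolate. Your route---proving $I_9\neq\emptyset\implies I_8\neq\emptyset$ directly from Observation~\ref{fullyprotectiveobservation}, using that a fully covered target has attacker value $U_a^u(i)(1-\beta_i^*)=0$ under $U_a^c\equiv 0$, that $k_a<m$ guarantees some $\alpha_j^*<1$, and that $I_7=\emptyset$ then forces $j\in I_8$---is sound. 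An alternative that stays entirely inside the lemma is a counting argument: if $I_8=\emptyset$ and $I_9\neq\emptyset$, then parts (i)--(iii) force every target into $I_3\cup I_9$, so $\sum_i\alpha_i^*=m>k_a$, contradicting \eqref{sum_alpha_beta}. The two closures are equally short; yours has the advantage of making visible the mechanism specific to fully protective resources (a covered target is worthless to the attacker), while the counting version emphasizes that the theorem is purely a consequence of the exclusion lemma plus the marginal-sum constraint. Your handling of the case $I_8\neq\emptyset$ (using part (iii) to kill $I_6$, hence I.B.iii, and part (ii) to get $I_1\cup I_2\cup I_5=\emptyset$) matches the paper's framework exactly.
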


\begin{lemma}
    If $(\alpha^*,\beta^*)$ is a Nash equilibrium in a security game with fully protective resources:
    \begin{enumerate}[(i)]
        \item There exists a constant $c_1$ such that $c_1=U_a^u(i)(1-\beta_i^*)$ for all $i\in I_2\cup I_5$. Furthermore, for $i\in I_2$ we have $c_1=U_a^u(i)$.
        \item There exists a constant $c_2$ such that $c_2=\alpha_i^*U_d^u(i)$ for all $i\in I_5\cup I_6$. Furthermore, for $i\in I_6$ we have $c_2=U_d^u(i)$.
    \end{enumerate}
\end{lemma}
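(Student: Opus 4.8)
The plan is to mirror the proof of Lemma~\ref{c1c2lemma} in the fully protective setting. Observation~\ref{fullyprotectiveobservation} already records the two deviation inequalities in the simplified form appropriate here: since $U_a^c\equiv 0$ the attacker's value of a target is $U_a^u(i)(1-\beta_i^*)$, and the defender's relevant quantity is $\alpha_i^* U_d^u(i)$. Consequently I would extract both constants directly from that observation by applying each implication in both directions across the relevant index set, producing a matching pair of inequalities that sandwich the desired equality.

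For assertion (i), I would first note that every target in $I_2\cup I_5$ has $\alpha_i^*\in(0,1)$ by the definitions in Table~\ref{targetpartition}. Hence for any pair $i,j\in I_2\cup I_5$ both $\alpha_i^*>0,\ \alpha_j^*<1$ and $\alpha_j^*>0,\ \alpha_i^*<1$ hold, so the first implication of Observation~\ref{fullyprotectiveobservation} applies in both directions and yields $U_a^u(i)(1-\beta_i^*)=U_a^u(j)(1-\beta_j^*)$. Defining $c_1$ to be this common value proves the existence claim; the refinement for $I_2$ then follows because $\beta_i^*=0$ there, giving $c_1=U_a^u(i)$.

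For assertion (ii), the argument is symmetric: every target in $I_5\cup I_6$ has $\beta_i^*\in(0,1)$, so for any pair $i,j\in I_5\cup I_6$ both $\beta_i^*>0,\ \beta_j^*<1$ and $\beta_j^*>0,\ \beta_i^*<1$ hold, and applying the second implication of Observation~\ref{fullyprotectiveobservation} in both directions gives $\alpha_i^* U_d^u(i)=\alpha_j^* U_d^u(j)$. Setting $c_2$ to this common value and using $\alpha_i^*=1$ for $i\in I_6$ yields $c_2=U_d^u(i)$ there.

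I do not anticipate a genuine obstacle, since the result is essentially a transcription of Lemma~\ref{c1c2lemma}; the only points requiring care are (a) verifying that both targets always have \emph{interior} marginals in the coordinate being perturbed, which is exactly what the definitions of $I_2,I_5$ (for the attack coordinate) and $I_5,I_6$ (for the defense coordinate) guarantee, and (b) correctly reading off the boundary values $\beta_i^*=0$ on $I_2$ and $\alpha_i^*=1$ on $I_6$. Unlike the general-sum Lemma~\ref{c1c2lemma}, no contribution from $I_8$ appears in assertion (i); this is consistent with the structure theory, since Lemma~\ref{exclusionlemma}(ii) forces $I_2\cup I_5=\emptyset$ whenever $I_8\neq\emptyset$, so $I_8$ and $I_2\cup I_5$ never coexist and need not be reconciled.
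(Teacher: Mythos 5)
Your proposal is correct and follows the same route the paper takes: just as the paper proves Lemma~\ref{c1c2lemma} by applying Observation~\ref{theobservation} to the relevant sets, the fully protective version follows by applying Observation~\ref{fullyprotectiveobservation} bidirectionally over pairs in $I_2\cup I_5$ (where $\alpha_i^*\in(0,1)$) and $I_5\cup I_6$ (where $\beta_i^*\in(0,1)$), with the boundary values $\beta_i^*=0$ on $I_2$ and $\alpha_i^*=1$ on $I_6$ giving the refinements. Your closing remark correctly explains, via Lemma~\ref{exclusionlemma}(ii), why $I_8$ is absent from assertion (i) here even though it appears in the general-sum statement.
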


\begin{lemma}
    Suppose $(\alpha^*,\beta^*)$ is a Nash equilibrium in a security game with fully protective resources. Then
    \begin{enumerate}[(i)]
        \item For $i\in I_2\cup I_3\cup I_5\cup I_6$ and $j\in I_1$, $U_a^u(j)<U_a^u(i)$.
        \item For $i\in I_3\cup I_6$, $j\in I_2$, $U_a^u(j)<U_a^u(i)$.
        \item For $i\in I_3\cup I_6$, $j\in I_5$, $U_a^u(1-\beta_j)\leq U_a^u(i)$.
        \item For $i\in I_5\cup I_6$, $j\in I_2$, $U_d^u(i)\leq \alpha U_d^u(j)$.
        \item For $i\in I_5\cup I_6$, $j\in I_3$, $U_d^u(i)<U_d^u(j)$.
    \end{enumerate}
    \label{fullyprotectivestructurallemma}
\end{lemma}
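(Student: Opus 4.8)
The plan is to derive all five inequalities by applying Observation~\ref{fullyprotectiveobservation} to suitably chosen pairs of targets, reading off the values of $\alpha_i^*$ and $\beta_i^*$ from the definitions of $I_1,\dots,I_9$ in Table~\ref{targetpartition}, and then converting the resulting weak inequalities into strict ones using the sign conditions $U_a^u>0$ and $U_d^u<0$ together with Assumption~\ref{distinctassumption}. These are the fully protective analogues of Lemmas~\ref{defenderstructurallemma} and~\ref{attackerstructurallemma}: parts (i)--(iii) come from the attacker branch of the observation and parts (iv)--(v) from the defender branch.

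For parts (i), (ii) and (iii), in every case the target $i$ lies in $I_2\cup I_3\cup I_5\cup I_6$, so $\alpha_i^*>0$, while the target $j$ lies in $I_1$, $I_2$ or $I_5$, so $\alpha_j^*<1$. I would therefore invoke the attacker branch of Observation~\ref{fullyprotectiveobservation} with $i$ as the source and $j$ as the destination to obtain $U_a^u(j)(1-\beta_j^*)\leq U_a^u(i)(1-\beta_i^*)$. Reading off $\beta_j^*$ from Table~\ref{targetpartition} gives $\beta_j^*=0$ whenever $j\in I_1\cup I_2$ (parts (i)--(ii)) and retains the factor $(1-\beta_j^*)$ when $j\in I_5$ (part (iii)), so the left-hand side is precisely the quantity appearing in each claim. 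On the right, $i\in I_2\cup I_3$ has $\beta_i^*=0$, so $U_a^u(i)(1-\beta_i^*)=U_a^u(i)$, whereas $i\in I_5\cup I_6$ has $\beta_i^*\in(0,1)$, so $U_a^u(i)(1-\beta_i^*)<U_a^u(i)$ because $U_a^u(i)>0$; this already gives strictness in every subcase with $i\in I_5\cup I_6$. In the subcases with $i\in I_2\cup I_3$, where $\beta_i^*=0$, I obtain only $U_a^u(j)\leq U_a^u(i)$ and upgrade to a strict inequality using the distinctness of the $U_a^u$ values in Assumption~\ref{distinctassumption}. Part (iii) asserts only a weak inequality, so no distinctness is required there.

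For parts (iv) and (v), the target $i\in I_5\cup I_6$ has $\beta_i^*\in(0,1)$, hence $\beta_i^*>0$, while the target $j\in I_2\cup I_3$ has $\beta_j^*=0<1$. Applying the defender branch of Observation~\ref{fullyprotectiveobservation} with $i$ as source and $j$ as destination gives $\alpha_i^*U_d^u(i)\leq\alpha_j^*U_d^u(j)$. The key manipulation is that, because $U_d^u(i)<0$ and $\alpha_i^*\leq 1$, one has $U_d^u(i)\leq\alpha_i^*U_d^u(i)$, with equality exactly when $\alpha_i^*=1$. In part (iv) this chains to $U_d^u(i)\leq\alpha_i^*U_d^u(i)\leq\alpha_j^*U_d^u(j)$, which is the claim. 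In part (v), $j\in I_3$ has $\alpha_j^*=1$, so the right-hand side collapses to $U_d^u(j)$, giving $\alpha_i^*U_d^u(i)\leq U_d^u(j)$; for $i\in I_5$ the strict inequality $U_d^u(i)<\alpha_i^*U_d^u(i)$ (from $\alpha_i^*<1$) supplies strictness, while for $i\in I_6$, where $\alpha_i^*=1$, strictness instead follows from the distinctness of the $\Delta_d$ values in Assumption~\ref{distinctassumption}, recalling that $\Delta_d(t)=-U_d^u(t)$ under full protectiveness.

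The individual steps are short; the only genuine care required is bookkeeping the direction of each inequality once the negative sign of $U_d^u$ is folded in, and correctly identifying in each subcase whether strictness is free (from a factor $1-\beta_i^*$ or $\alpha_i^*$ strictly below $1$ acting on a signed payoff) or must be borrowed from Assumption~\ref{distinctassumption}. I expect the defender-side bound in part (v) for $i\in I_6$ to be the step most easily mishandled, since there the automatic strictness disappears and one must explicitly invoke distinctness of $\Delta_d$.
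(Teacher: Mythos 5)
Your proposal is correct and matches the paper's intended argument: the paper states this lemma without an explicit proof, noting only that the structural characterization in the fully protective case follows from repeated application of Observation~\ref{fullyprotectiveobservation}, which is precisely what you do. Your added bookkeeping --- obtaining strictness for free from factors $(1-\beta_i^*)$ or $\alpha_i^*<1$ acting on signed payoffs, and borrowing it from Assumption~\ref{distinctassumption} (via $\Delta_d(t)=-U_d^u(t)$) in the cases $i\in I_2\cup I_3$ and $i\in I_6$ where it is not automatic --- correctly fills in the details the paper leaves implicit.
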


\begin{table}
\caption{Expressions for $c_1$ and $c_2$ for candidate equilibria in the case of fully protective resources.}
\begin{center}
    \setlength{\tabcolsep}{0.5em} % for the horizontal padding
    \renewcommand{\arraystretch}{2.7}% for the vertical padding
    \begin{tabular}{|c|c|c|}
        \cline{2-3}
        \multicolumn{1}{c|}{} & $c_1$ & $c_2$\\ \hline
        I.A.i with $I_9=\emptyset$&${k_d-m+r+s\over\sum_{j\in I_5}{1\over U_a^u(j)}}$&${k_a-s\over \sum_{j\in I_5}{1\over U_d^u(i)}}$\\ \hline
        I.A.ii with $I_9=\emptyset$&$U_a^u(j^{[2]})$&${k_a-s-\alpha_{j^{[2]}}\over \sum_{j\in I_5}{1\over U_d^u(j)}}$\\ \hline
        I.B.i with $I_9=\emptyset$&${k_d-m+r+s+1-\beta_{j^{[6]}}\over \sum_{j\in I_5}{1\over U_a^u(j)}}$&$U_d^u(j^{[6]})$\\ \hline
        I.B.ii with $I_9=\emptyset$&$U_a^u(j^{[2]})$&$U_d^u(j^{[6]})$\\ \hline
    \end{tabular}
\label{c1c2tablefullyprotectiveresources}
\end{center}
\end{table}

\begin{corollary}
     In the case of fully protective resources, Algorithm~\ref{equilibriumalgorithm} can be modified to be $O(m^2)$.
\end{corollary}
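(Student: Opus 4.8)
The plan is to show that the restriction on equilibrium types imposed by Theorem~\ref{NEtheoremfullyprotectiveresources} collapses one of the three nested loops of Algorithm~\ref{equilibriumalgorithm}, reducing the number of candidate constructions from $\Theta(m^3)$ to $O(m^2)$ while leaving the per-candidate work unchanged. Recall that the general algorithm sweeps over the parameters $r=|I_1|$, $s=|I_3|$, $t=|I_9|$ (each ranging over $O(m)$ values) and the constant number of subtypes, and---after an initial $O(m\log m)$ sorting and prefix-sum precomputation that renders the sums defining $c_1,c_2$ and the feasibility tests of Lemma~\ref{c1c2feasibilitylemma} evaluable in amortized constant time---spends $O(1)$ per candidate. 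It therefore suffices to argue that in the fully protective case the effective search space over $(r,s,t)$ has size $O(m^2)$.

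By Theorem~\ref{NEtheoremfullyprotectiveresources}, every Nash equilibrium of a fully protective game is either of type I.A.i, I.A.ii, I.B.i, or I.B.ii with $I_9=\emptyset$, or of type I.A.iii with $I_1\cup I_5=\emptyset$. For the four types with $I_9=\emptyset$ the condition forces $t=0$; hence the $t$-loop degenerates to a single value and we sweep only over $r$ and $s$, producing $O(m^2)$ candidates. The corresponding closed forms for $c_1,c_2$ are read off from Table~\ref{c1c2tablefullyprotectiveresources}, so each candidate is constructed and tested in $O(1)$ exactly as before. This already handles all but one type within the claimed bound.

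For type I.A.iii, the hypotheses $I_1=I_5=\emptyset$, together with the defining relations $I_2=I_6=\emptyset$ and $|I_8|=1$ of the subtype and the Type I property $I_4=I_7=\emptyset$, partition the target set as $T=I_3\cup I_8\cup I_9$ with $|I_8|=1$. Hence $|I_3|+|I_9|=m-1$, leaving a single free integer parameter, so type I.A.iii contributes only $O(m)$ candidates, each built and tested in constant time; the assignment of targets to $I_3,I_8,I_9$ is dictated, as in the general algorithm, by sorting on $U_a^u$ and $\Delta_d$. Summing over all types gives $O(m^2)+O(m)=O(m^2)$ work beyond the $O(m\log m)$ preprocessing, which is the claimed bound.

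The step I expect to demand the most care is ensuring that each candidate is still processed in $O(1)$ amortized time, so that $O(m^2)$ candidates yield $O(m^2)$ total work. This requires maintaining, under the sortings by $U_a^u$ (attacker conditions) and by $\Delta_d=-U_d^u$ (defender conditions), the prefix/suffix sums entering the $c_1,c_2$ formulas of Table~\ref{c1c2tablefullyprotectiveresources} and the extremal $I_5$ targets against which the membership requirements $\alpha_i,\beta_i\in(0,1)$ and the inequalities of Lemma~\ref{c1c2feasibilitylemma} are verified. Since $I_5$ is a set difference of prefixes and suffixes drawn from two distinct orderings rather than a single contiguous block, and since types I.A.ii and I.B.i still require a one-dimensional feasibility search for the single free coordinate, confirming constant-time updates as $(r,s)$ advance is the only genuinely delicate point. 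This delicacy is, however, inherited verbatim from the analysis underlying the $O(m^3)$ bound of Algorithm~\ref{equilibriumalgorithm}; the reduction specific to the fully protective case---namely the elimination of the $t$-loop---follows immediately from Theorem~\ref{NEtheoremfullyprotectiveresources}.
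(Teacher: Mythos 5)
Your proposal follows essentially the same route as the paper: invoke Theorem~\ref{NEtheoremfullyprotectiveresources} to restrict to the types with $I_9=\emptyset$ (so the $t$-loop collapses, leaving an $O(m^2)$ sweep over $r,s$ with the $c_1,c_2$ values of Table~\ref{c1c2tablefullyprotectiveresources}), and treat type I.A.iii with $I_1\cup I_5=\emptyset$ as a separate lower-order case. The only cosmetic difference is that the paper pins the I.A.iii case down completely---the constraint $\sum_i\beta_i=k_d$ forces $I_3=\{k_d+1,\dots,m\}$ under the $U_d^u$-ordering, reducing that case to a single closed-form feasibility inequality rather than your $O(m)$ sweep---but this does not change the approach or the bound.
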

\begin{proof}
    By the result in Theorem~\ref{NEtheoremfullyprotectiveresources}, it suffices to consider two cases. Suppose $(\alpha^*,\beta^*)$ is of type I.A.i, I.A.ii,I.B.i or I.B.ii with $I_9=\emptyset$. In each of these types we have $I_9=\emptyset$ and therefore it suffices to run Algorithm~\ref{equilibriumalgorithm} without iteration over $t$. The values of $c_1$ and $c_2$ for a given $r,s$ or $s,t$ and type in the case of fully protective resources are given in Table~\ref{c1c2tablefullyprotectiveresources}, and we take these values in the modified version of Algorithm~\ref{equilibriumalgorithm}. To determine the existence of $(\alpha^*,\beta^*)$ of type I.A.iii with $I_1\cup I_5=\emptyset$, first note that we have $r=0$. By Observation~\ref{fullyprotectiveobservation}, $i\in I_8\cup I_9$ and $j\in I_3$ implies $U_d^u(i)<U_d^u(j)$, so we order the targets such that $U_d^u(i)<U_d^u(j)$ for $i<j$. By \eqref{sum_alpha_beta}, then, we must have $I_3=\{k_d+1,\dots,m\}$ and so $\alpha_i=1$, $\beta_i=0$ for $i=k_d+1,\dots,m$. For all remaining $i=1,\dots,k_d$ we have $\beta_i=1$ and $\alpha_1,\dots,\alpha_{k_d}$ must satisfy \[\sum_{i\in T\setminus I_3}\alpha_i=k_a-m+k_d+2\hspace{1em}\text{and}\hspace{1em}{U_d^u(k_d+1)\over U_d^u(i)}\leq \alpha_i\leq 1\] by \eqref{sum_alpha_beta} and Lemma~\ref{c1c2lemma}. There is a solution if and only if $\sum_{i\in T\setminus I_3}{U_d^u(k_d+1)\over U_d^u(i)}\leq k_a-m+k_d+2$. 
    
\end{proof}

\subsection{Zero Sum Games with Fully Protective Resources}

In this section, we consider the case of zero-sum security games with fully protective resources. That is, $U_a^c(i)=U_d^c(i)=0$ and $U_a^u(i)=-U_d^u(i)$ for all $i$. By von Neumann's theorem \cite{neumann1928theorie}, we have
\[v^*=\max_p\min_{1\leq i\leq{m\choose k_d}}(p^TA)_i=\min_q\max_{1\leq j\leq {m\choose k_a}}(Aq)_j\] where $A=-B$ is the attacker payoff matrix and $p,q$ represent player mixed strategies
Assume without loss of generality that $U_a^u(i)\leq U_a^u(j)$ for $i<j$. Note that \[(p^TA)_i=\sum_{j=1}^{m\choose k_a}p_ja_{ji}=\sum_{t\notin s_i^d}\alpha_tU_a^u(t)\] where $\alpha_t$ is the marginal probability with which $t$ is attacked (as in previous sections). The above expression implies that $v^*$ is the sum of the least $m-k_d$ terms $\alpha_tU_a^u(t)$ at equilibrium. For an equilibrium of type $I.A.i$ with $I_9=\emptyset$, we can introduce a parameterization such that $I_1=\{1,\dots,r\}$, $I_3=\{r+1,\dots,m-s\}$ and $I_5=\{m-s+1,\dots,m\}$. For an equilibrium in which $I_2$ or $I_6$ is nonempty, we take $j^{[2]}=r+1$ or $j^{[6]}=m-s+1$. By Lemma~\ref{fullyprotectivestructurallemma} (and $U_a^u(i)=-U_d^u(i)$) we obtain:
\begin{enumerate}
    \item $\alpha^*U_a^u(m)=\dots=\alpha^*_{m-s+1}U_a^u(m-s+1)=c_2>\alpha^*_{m-s}U_a^u(m-s)\geq\dots\geq \alpha^*_{r+1}U_a^u(r+1),\alpha^*(r)=\dots=\alpha^*(1)=0$
    \item $U_a^u(r+1)\geq(1-\beta^*)U_a^u(m)=\dots=(1-\beta^*_{m-s+1})U_a^u(m-s+1)=c_1>U_a^u(r),\beta^*_{m-s-1}=\dots=\beta^*_1=0$
\end{enumerate}
This structural property coincides with the result obtained in \cite{hamidgamesec20}.
Now, define a function
\[\sigma_\alpha(r,s,\alpha_{r+1},c_2)=m-(s+r+1)+\alpha_{r+1}-\sum_{k=m-s+1}^m{c_2\over U_a^u(k)}.\] By construction, we have $\sigma_\alpha(r,s,\alpha_{r+1},c_2)=\sum_{i=1}^m\alpha_i$. Now, for each $r$ and $s$ note that $c_2\in[-U_a^u(m-s+1),-U_a^u(m-s))$, $\alpha_{r+1}\in[0,1]$ and that $\sigma_\alpha(r,s,\alpha_{r+1},c_2)$ is a strictly non-increasing function of $r$. Thus, \[\sigma_\alpha^{\min}(r,s)\equiv\min_{c_2,\alpha_{r+1}}\sigma_\alpha(r,s,\alpha_{r+1},c_2)=\sup_{c_2,\alpha_{(r+1)+1}}\sigma_\alpha(r+1,s+1,\alpha_{(r+1)+1},c_2)\equiv\sigma_\alpha^{\sup}(r+1,s+1).\] This implies that when $r^\prime>r,s^\prime>s$, if $k_a>\sigma_\alpha^{\min}(r,s)$, $k_a>\sigma_\alpha^{\sup}(r^\prime,s^\prime)$. Thus we can search for feasible equilibria by considering $s\in\{1,\dots,m\}$ and only values of $r$ for which $\sum_{i=1}^m\alpha_i\leq k_a$, yielding a linear time procedure to compute an equilibrium. 

When resources are not fully protective, analogues of properties 1 and 2 above do not hold, as such relationships are dependent upon the ordering of $\{U_a^c(1),\dots,U_a^c(m)\}$ with respect to $\{U_a^u(1),\dots,U_a^u(m)\}$ and thus the assumption of a zero-sum game does not immediately yield a reduction in the complexity of equilibrium computation.

\section{Optimization of Defender Expected Utility}
In this section, we address the problem: Given that a defender can perturb the payoffs of a security game prior to play, how should they do so to optimize their expected outcome at equilibrium? That is, we wish to address the following optimization problem:
\begin{problem}\label{optimizationproblem}
\begin{equation}
\begin{aligned}
    \mathbf{maximize}&\hspace{1em}v_d^*\\
    \mathbf{s.t.}&\hspace{1em}\forall t\in T\hspace{1em}U_a^c(i)\in[lb_a^c(i),ub_a^c(i)], U_a^u(i)\in[lb_a^u(i),ub_a^u(i)]
    %&\hspace{1em}\forall t\in T\hspace{1em}U_d^c(i)\in[lb_d^c(i),ub_d^c(i)], U_d^u(i)\in[lb_d^u(i),ub_d^u(i)]
\end{aligned}
\end{equation}
\end{problem}

\begin{definition}
    In a leader-follower security game in which the Defender is the leader, we say a pair $(\alpha,\beta)$ is a strong Stackelberg equilibrium if and only if both $\alpha$ and $\beta$ are best responses and $\alpha$ yields maximal defender expected outcome among all attacker best responses.
\end{definition}
In the case of Stackelberg equilibria, a single attacker resource, and the restriction that the defender payoffs remain fixed, a polynomial time algorithm to solve Problem~\ref{optimizationproblem} is given in \cite{shi2018designing} (Theorem 5). By \cite{korzhyk2011stackelberg}, in the case of a single attacker resource, such a Stackelberg equilibrium is also a Nash equilibrium and thus a solution to Problem~\ref{optimizationproblem} for Nash equilibria can also be found in polynomial time. However, in the case of multiple attacker resources, a Stackelberg equilibrium need not be a Nash equilibrium (as shown in \cite{korzhyk2011stackelberg}).
Consider the following restricted version of Problem~\ref{optimizationproblem}:
\begin{problem}\label{restrictedoptimizationproblem}
    \begin{equation}
\begin{aligned}
    \mathbf{maximize}&\hspace{1em}v_d^*\\
    \mathbf{s.t.}&\hspace{1em}\forall t\in T\hspace{1em}U_a^c(i)\in\{lb_a^c(i),ub_a^c(i)\}, U_a^u(i)\in\{lb_a^u(i),ub_a^u(i)\}
    %&\hspace{1em}\forall t\in T\hspace{1em}U_d^c(i)\in\{lb_d^c(i),ub_d^c(i)\}, U_d^u(i)\in\{lb_d^u(i),ub_d^u(i)\}
\end{aligned}
\end{equation}
\end{problem}

That is, Problem~\ref{restrictedoptimizationproblem} restricts Problem~\ref{optimizationproblem} by allowing each payoff to take only one of two values, rather than all values in some closed interval.

\begin{theorem}\label{stackelberghardness}
    In the case of Stackelberg equilibrium and multiple attacker resources, Problem~\ref{restrictedoptimizationproblem} is weakly NP-hard.
\end{theorem}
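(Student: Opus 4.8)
The plan is to prove weak NP-hardness via a polynomial-time reduction from \emph{Partition} (equivalently \emph{Subset-Sum}), the canonical weakly NP-complete number problem. The structural match is immediate: in Problem~\ref{restrictedoptimizationproblem} each attacker payoff $U_a^c(i)$ and $U_a^u(i)$ is constrained to a two-element set $\{lb,ub\}$, so a feasible perturbation is exactly a choice of one bit per payoff. I would exploit this by letting a subset of these binary choices encode an assignment of items to the two sides of a partition. Concretely, given a \emph{Partition} instance with positive integers $a_1,\dots,a_n$ of total sum $2S$, I would construct in polynomial time a security game with $m = n + O(1)$ targets, a fixed number $k_a>1$ of attacker resources, fixed defender payoffs $U_d^c,U_d^u$, and interval endpoints $lb_a^c,ub_a^c,lb_a^u,ub_a^u$ chosen so that selecting the upper versus lower value at the $i$-th item target corresponds to placing item $i$ in side $A$ versus side $B$. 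A threshold $V$ would then be defined so that the defender can force $v_d^* \ge V$ at a strong Stackelberg equilibrium if and only if the instance admits a balanced split $\sum_{i\in A} a_i = S$.

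The key steps, in order, are as follows. First, I would use the strong Stackelberg structure (Definition preceding the theorem) to write $v_d^*$ explicitly: the defender commits to a marginal coverage $\beta$ subject to $\eqref{sum_alpha_beta}$, the attacker follower responds by attacking the $k_a$ targets of greatest attacker utility $U_a^c(i)\beta_i + U_a^u(i)(1-\beta_i)$, and ties are resolved in the defender's favor. Second, I would calibrate the gadget so that each item target contributes $a_i$ (scaled) to the attacker's incentive on one side and so that the defender payoffs force $v_d^*$ to be an affine function of a single subset sum $\sum_{i\in A} a_i$; the auxiliary targets and the value of $k_a$ would be tuned so that the attacker's $k_a$-target best response, together with the constant-sum feasibility constraint $\eqref{sum_alpha_beta}$, translates the binary payoff choices into the subset-sum arithmetic. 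Third, I would arrange the defender value to peak exactly at the balanced split, so that $\max v_d^* = V$ holds precisely when the two sides have equal sum, completing the correspondence. Because the integers $a_i$ enter the numerical payoff data, the reduction size is polynomial in the \emph{magnitude} of the input, consistent with weak (rather than strong) hardness; this is also what squares the result with the pseudopolynomial positive result promised later in the paper.

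The main obstacle will be the converse (``only if'') direction of the equivalence: I must rule out that the defender reaches the threshold $V$ by some combination of committed coverage $\beta$ and payoff choices that sidesteps the intended partition gadget. Two features make this delicate. The defender controls both the perturbation and, as leader, the committed strategy $\beta$, so I must show no such degree of freedom yields a spurious high-value equilibrium; and the strong Stackelberg tie-breaking must be prevented from handing the defender value it could not obtain from a genuine balanced assignment. I would address this by designing the defender payoffs and the auxiliary targets so that the feasibility region $\eqref{sum_alpha_beta}$ admits only attacker best-response sets that correspond to honest item assignments, and by arguing that any $\beta$ achieving $v_d^*\ge V$ forces the attacker onto exactly the target set encoding a split of sum $S$. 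Establishing this rigidity, rather than the forward construction, is where the bulk of the technical care lies, and I would verify it by a case analysis over the attacker's possible best-response sets induced by the binary payoff configurations.
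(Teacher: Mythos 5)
Your overall strategy (reduce from a weakly NP-hard number problem, using the two-element payoff sets as binary choice variables) is reasonable in spirit, but the proposal has a genuine gap: the central gadget is never constructed, and the plan as stated contains an unresolved internal tension. You want $v_d^*$ to be ``an affine function of a single subset sum $\sum_{i\in A}a_i$'' and simultaneously to ``peak exactly at the balanced split''; an affine function of the subset sum attains its maximum at an extreme assignment, never at the balanced one, so some non-affine (tent-shaped) mechanism is required and none is supplied. More fundamentally, you never explain how sums of the $a_i$ enter the defender's Stackelberg value at all. In an additive security game the attacker's best response is determined target-by-target (attack the $k_a$ targets of largest $U_a^u(i)-\beta_i\Delta_a(i)$), so aggregate quantities can enter only through the defender's budget constraint $\sum_i\beta_i=k_d$. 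Moreover, since the defender optimizes over \emph{both} the payoff selection and the coverage $\beta$, and lowering attacker payoffs only makes deterrence cheaper, the defender will simply take the more favorable of the two values at each target; this yields a best-of-two-weights knapsack structure, not a partition structure, and there is no evident mechanism to penalize the defender for ``overshooting'' the balanced sum $S$. (Note also that $U_a^c(i)$ and $U_a^u(i)$ are chosen independently, so ``one bit per item'' is not automatic; your case analysis would have to control four payoff combinations per target.)

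The paper's proof sidesteps all of this by placing the hardness where it already lives: computing the Stackelberg value itself when the attacker has multiple resources. It adapts the Knapsack construction of \cite{korzhyk2011security}: item targets get $U_a^u(i)=w_i$, $U_a^c(i)=w_i-1$, defender payoffs $-v_i$, plus zero-payoff dummy targets, with $k_d=1$, so that the defender's own coverage decision is the knapsack subset selection (coverage $\beta_i\geq w_i$ deters an attack on item $i$, and the budget $k_d=1$ is the capacity), and the defender achieves value at least $V-\sum_i v_i$ iff the knapsack instance is solvable. The two-element payoff sets then merely encode a choice between two possible weights per item (the problem \textbf{TwoWeightKnapsack}, to which ordinary \textbf{Knapsack} reduces trivially by setting $w_i^1=w_i^2$), so the perturbation freedom is incidental rather than the source of hardness. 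If you wish to salvage your approach, the essential repair is to adopt this mechanism: let the subset-sum arithmetic be carried by the defender's coverage under the budget constraint, rather than asking the binary payoff choices to carry the combinatorial weight.
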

\begin{proof}
    Define a problem {\bf TwoWeightKnapsack} as follows: given a collection of $k$ objects each with a value $v_i$ and a set of two possible weights $\{w_i^1,w_i^2\}$ determine if there exists a subset of the objects $S$ and a choice of $w_i=w_i^1$ or $w_i=w_i^2$ for each $i$ such that \[\sum_{t\in S}w_t\leq 1\text{   and   }\sum_{t\in S}v_t\geq V.\] Clearly, a standard instance of {\bf Knapsack} can be reduced to an instance of {\bf TwoWeightKnapsack} by taking $w_i^1=w_i^2$. We adapt a construction presented in \cite{korzhyk2011security}. Suppose $v_i,\{w_i^1,w_i^2\},V$,$i=1,\dots,k_a$ is an instance of {\bf TwoWeightKnapsack}. Define a security game on $2k_a$ targets in which $k_d=1$. Take targets $1,\dots,k$ to be the objects in the instance of {\bf TwoWeightKnapsack}. For a given choice of $w_t=w_t^1$ or $w_t=w_t^2$ for each $t$, $1\leq i\leq k$ take \[U_a^u(i)=w_i,\hspace{1em} U_a^c(i)=w_i-1,\text{  and  } U_d^c(i)=U_d^c=-v_i.\] Let targets $k+1,\dots,2k$ be such that \[U_a^u(i)=U_a^c(i)=U_d^c(i)=U_d^u(i)=0.\] Note that the defender can obtain an expected utility of at least $V-\sum_{i=1}^kv_i$ if and only if there is a solution to the instance of {\bf TwoWeightKnapsack}.
\end{proof}
\begin{corollary}
    In the case of Stackelberg equilibria and multiple attacker resources, Problem~\ref{optimizationproblem} is weakly NP-hard.
\end{corollary}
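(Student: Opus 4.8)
The plan is to reduce Problem~\ref{restrictedoptimizationproblem} to Problem~\ref{optimizationproblem} by showing that, for any common instance, the optimum of the interval problem is always attained at a choice of attacker payoffs in which every coordinate sits at an endpoint of its interval. Since the feasible region of Problem~\ref{restrictedoptimizationproblem} is exactly the vertex set of the box that constitutes the feasible region of Problem~\ref{optimizationproblem}, one inequality is immediate: every feasible point of Problem~\ref{restrictedoptimizationproblem} is feasible for Problem~\ref{optimizationproblem}, so the optimum of the latter is at least that of the former. All the work is in the reverse inequality.

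The key structural observation is that, because only the attacker payoffs are perturbed while the defender payoffs $U_d^c,U_d^u$ are held fixed, the defender's expected outcome depends on the attacker payoffs only through the identity of the attacker's best response. Concretely, fix a defender commitment $\beta$ and write $g_i=U_a^c(i)\beta_i+U_a^u(i)(1-\beta_i)$ and $d_i=U_d^c(i)\beta_i+U_d^u(i)(1-\beta_i)$. In a strong Stackelberg equilibrium the attacker attacks the $k_a$ targets of largest $g_i$, with ties broken in the defender's favor, and for a fixed attacked set $S$ the defender's outcome is $\sum_{i\in S} d_i$, which does not depend on the attacker payoffs at all. Thus the attacker payoffs influence $v_d^*$ solely by determining which $k_a$-subsets $S$ can be induced as the attacker's best response.

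I would then argue that for each fixed $\beta$ the family of inducible attacked sets is unchanged when the attacker payoffs are restricted to the vertices of the box. A subset $S$ of size $k_a$ is inducible precisely when $\min_{i\in S} g_i \ge \max_{j\notin S} g_j$; since each $g_i$ is nondecreasing in both $U_a^c(i)$ and $U_a^u(i)$ (as $\beta_i,\,1-\beta_i\ge 0$), the separation $\min_{i\in S} g_i-\max_{j\notin S} g_j$ is maximized by pushing the payoffs of targets in $S$ to their upper endpoints and those of targets outside $S$ to their lower endpoints. Hence if $S$ is inducible by any payoff vector in the box, it is inducible by a vertex payoff vector, and the strong Stackelberg tie-breaking rule guarantees the defender realizes the best such $S$. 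Taking the maximum over $\beta$ shows the optimum of Problem~\ref{optimizationproblem} is attained at a vertex, so it equals the optimum of Problem~\ref{restrictedoptimizationproblem}. The two problems are built from the same interval-endpoint data and now share the same optimal value, so their decision versions coincide; the weak NP-hardness of Problem~\ref{optimizationproblem} follows from Theorem~\ref{stackelberghardness}.

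The step I expect to be the main obstacle is making the ``inducibility is preserved by passing to vertices'' argument fully rigorous in the presence of ties: I must verify that when the separating inequality holds only with equality at the vertex, the strong Stackelberg convention still lets the defender secure the set $S$, and that no defender-optimal induced set is lost in moving from the interior of the box to its boundary. Handling the interaction between the continuous commitment variable $\beta$ and the discrete choice of attacked set $S$ — so that the maximizations over $\beta$ and over the attacker payoffs may be interchanged cleanly — is the other point requiring care.
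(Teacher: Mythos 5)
Your proof is correct, and it is considerably more detailed than what the paper offers: the paper states this corollary with no proof at all, treating it as an immediate consequence of Theorem~\ref{stackelberghardness}. The logical content your argument supplies is precisely the step the paper leaves implicit. Since Problem~\ref{restrictedoptimizationproblem} shrinks the feasible set of the \emph{same} instance data (endpoints versus the full box) rather than restricting the class of instances, hardness does not transfer generically, and some bridge is needed. Your vertex-attainment argument --- monotonicity of $g_i=U_a^c(i)\beta_i+U_a^u(i)(1-\beta_i)$ in both attacker payoffs, preservation of the best-response condition $\min_{i\in S}g_i\ge\max_{j\notin S}g_j$ when payoffs inside $S$ are pushed to upper endpoints and those outside $S$ to lower endpoints, and invariance of the defender value $\sum_{i\in S}d_i$ under perturbation of attacker payoffs --- closes exactly this gap, and in fact proves the stronger structural fact that Problems~\ref{optimizationproblem} and~\ref{restrictedoptimizationproblem} have equal optima under strong Stackelberg tie-breaking. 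The two difficulties you flag resolve themselves for the reasons you suspect: at the vertex, $S$ remains \emph{among} the best responses, and the strong-Stackelberg maximum over best responses can only be larger than $\sum_{i\in S}d_i$, which is all the inequality requires; likewise no clean interchange of maximizations is needed, only the one-sided bound obtained by routing each pair $(\beta,S)$ through its associated vertex. For comparison, the shortest route (presumably the one the paper has in mind) is to observe that Problem~\ref{optimizationproblem} admits degenerate intervals $lb=ub$, in which case it reduces to computing the strong Stackelberg value of a fixed game with multiple attacker resources; this is weakly NP-hard by exactly the knapsack construction embedded in the proof of Theorem~\ref{stackelberghardness} (take $w_i^1=w_i^2$). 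Your approach costs more work but buys the equivalence of the interval and endpoint formulations, which the degenerate-interval shortcut does not give.
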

The hardness of Problem~\ref{restrictedoptimizationproblem} in the case of Stackelberg equilibria is fundamentally due to the weak NP-hardness of computing Stackelberg equilibria in security games with multiple attacker resources. The problem of computing Nash equilibria in such games, however, can be solved in polynomial time (as demonstrated by \cite{korzhyk2011security} or the algorithm of Section 2.1). The nontrivial nature of Problem~\ref{restrictedoptimizationproblem} in the case of Nash equilibria is illustrated by the following example:

\begin{example}
    Consider the instance of Problem~\ref{restrictedoptimizationproblem} defined by the following zero-sum game with fully protective resources, $k_a=2$ and $k_d=3$:
    \begin{center}
        \begin{tabular}{|c|c|c|c|c|c|c|}
            \hline
            $U_d^u(i)$&-5&-10&-7&-8&-4&-1\\ \hline
            $lb_a^u(i)$&1&2&9&4&6&10\\ \hline
            $ub_a^u(i)$&7&3&13&5&8&11\\ \hline
        \end{tabular}
    \end{center}
    If $U_a^u(i)=lb_a^u(i)$ for all $i$ or $U_a^u(i)=ub_a^u(i)$ for all $i$, we have $v_d^*=-789/229\approx-3.45$ with $\alpha^*=[56/229,28/229,40/229,35/229,70/229,1]$ in both cases and $\beta_{lb}^*=[1/73,37/73,65/73,$ $55/73,61/73,0]$, $\beta_{ub}^*=[6469/9589,2309/9589,7909/9589,5221/9589,6859/9589,0]$ respectively. However, one can verify by exhaustive search that the maximum value of $v_d^*$, $v_d^*=-453/173\approx-2.62$, is attained when (among other choices) $U_a^u(1)=lb_a^u(1)$ and $U_a^u(i)=ub_a^u(i)$ for all other $i$ with equilibrium strategies: 
    \begin{center}
        \begin{tabular}{|c|c|c|c|c|c|c|}
            \hline
            $\alpha^*$&0&28/173&40/173&35/173&70/173&1\\ \hline
            $\beta^*$&0&627/1147&1027/1147&835/1147&952/1147&0\\ \hline
        \end{tabular}
    \end{center}
\end{example}

Given the nontrivial nature of the problem, we introduce the following assumption enforcing that all parameters of distinct targets belong to distinct intervals, which will allow for solution of problem~\ref{restrictedoptimizationproblem}.

\begin{assumption}\label{disjointintervalsassumption}
 \[lb_a^c(1)\leq ub_a^c(1)<lb_a^c(2)\leq ub_a^c(2)<\dots<lb_a^c(m)\leq ub_a^c(m)\]
    and that there exists a permutation $i_1,\dots,i_m$ of $[m]$ such that
    \[lb_a^u(i_1)\leq ub_a^u(i_1)<lb_a^u(i_2)\leq ub_a^u(i_2)<\dots<lb_a^u(i_m)\leq ub_a^u(i_m).\]
\end{assumption}

 We now show that there is a pseudopolynomial time algorithm to solve Problem~\ref{restrictedoptimizationproblem} for the case of Nash equilibria under Assumptions~\ref{distinctassumption} and~\ref{disjointintervalsassumption}.
 
\begin{theorem}\label{pseudopolynomialtheorem}
    Under Assumptions~\ref{distinctassumption} and~\ref{disjointintervalsassumption}, in the case of multiple attacker resources and Nash equilibria, Problem~\ref{restrictedoptimizationproblem} can be solved in pseudopolynomial time.
\end{theorem}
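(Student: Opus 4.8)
The plan is to exploit the structural theory of Section 2.1 together with Assumption~\ref{disjointintervalsassumption} to show that the space of relevant payoff perturbations can be searched via a dynamic program whose table size is pseudopolynomial in the problem data. The key conceptual reduction is the following: by Theorem~\ref{NEtheorem} and the subtype analysis, any Nash equilibrium is determined by the partition of $T$ into the sets $I_1,\dots,I_9$ together with the constants $c_1,c_2$, and Algorithm~\ref{equilibriumalgorithm} recovers the equilibrium (and hence $v_d^*$) in polynomial time once the parameters $r,s,t$ and $\texttt{type}$ are fixed. The optimization variables are the binary choices $U_a^c(i)\in\{lb_a^c(i),ub_a^c(i)\}$ and $U_a^u(i)\in\{lb_a^u(i),ub_a^u(i)\}$. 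The role of Assumption~\ref{disjointintervalsassumption} is decisive: because the intervals for distinct targets are disjoint and nested in a fixed order, the \emph{relative ordering} of the realized values $U_a^u(i)$ (and likewise $U_a^c(i)$) is \textbf{the same for every admissible choice of perturbation}. Consequently the target ordering used to build $I_1,\dots,I_9$ in the construction preceding Table~\ref{c1c2table} is perturbation-independent, so which targets land in $I_1,I_3,I_9$ and which single targets occupy $I_2,I_6,I_8$ is governed purely by $r,s,t$ and $\texttt{type}$, not by the specific realized payoffs.

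First I would fix, once and for all, an outer enumeration over $r,s,t$ and $\texttt{type}$ (there are $O(m^3)$ such choices, matching the loop structure of Algorithm~\ref{equilibriumalgorithm}). For each fixed choice, the assignment of targets to $I_1,\dots,I_9$ is then determined combinatorially and independently of the perturbation, by the argument above. Second, I would express $v_d^*$ and the feasibility conditions (those of Lemma~\ref{c1c2feasibilitylemma} together with the requirement that the $I_5$ values from \eqref{alphabetaI5} lie in $(0,1)$) in terms of the chosen payoffs. Here the defender payoffs $U_d^c,U_d^u$ are fixed, so $\Delta_d(i)$ and the $c_2$ expressions in Table~\ref{c1c2table} do not depend on the perturbation at all; only the attacker-payoff-dependent quantities $c_1$ and the $\beta_i=\frac{U_a^u(i)-c_1}{\Delta_a(i)}$ for $i\in I_5$ vary. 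The objective $v_d^*=\sum_t[\alpha_t U_d^c(t)\beta_t+\alpha_t U_d^u(t)(1-\beta_t)]$ is, for fixed $\alpha$, an affine function of the $\beta_t$, and each $\beta_t$ for $t\in I_5$ is affine in the perturbed $U_a^u(t)$ once $c_1$ is pinned.

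The crux is that $c_1$ couples the choices: for types I.A.i and I.B.i, $c_1$ is a weighted aggregate $\frac{(\text{const})+\sum_{j\in I_5}U_a^u(j)/\Delta_a(j)}{\sum_{j\in I_5}1/\Delta_a(j)}$ of the perturbed attacker payoffs, so the choices are not separable. This is where I expect the \textbf{main obstacle} to lie, and where pseudopolynomiality (rather than polynomiality) enters. I would handle it by a \emph{guess-and-verify} device on the aggregate: since each $U_a^u(i)$ is one of two integer-valued (after scaling) options, the sum $\sum_{j\in I_5}U_a^u(j)/\Delta_a(j)$ ranges over a set whose numerator, cleared of the fixed denominators, takes integer values in a range bounded by $\sum_i\max(|lb_a^u(i)|,|ub_a^u(i)|)$ times a fixed factor — polynomially many distinct values only if the payoffs are small, hence pseudopolynomially many in general. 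I would therefore set up a dynamic program over targets $i=1,\dots,m$ whose state records the accumulated contribution to this weighted sum (discretized to the pseudopolynomial grid of achievable values) together with the count of how many targets have been assigned their high-versus-low option consistent with the fixed $I$-partition; at each stage the transition picks $U_a^u(i)\in\{lb_a^u(i),ub_a^u(i)\}$ (and similarly for $U_a^c$), updates the running value of $v_d^*$, and prunes states that already violate a feasibility inequality of Lemma~\ref{c1c2feasibilitylemma}. Once the aggregate — and hence $c_1$ — is fixed by the DP state, every individual feasibility constraint and the objective contribution of each target decouple across targets, so the DP transitions are well-defined and the per-state work is polynomial.

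Finally, I would argue correctness and running time. Correctness follows because: (i) by the structural results every Nash equilibrium corresponds to \emph{some} $(r,s,t,\texttt{type})$ and partition realizable by the DP, and conversely every feasible DP path yields, via Algorithm~\ref{equilibriumalgorithm} and Lemma~\ref{c1c2feasibilitylemma}, a genuine Nash equilibrium with the claimed $v_d^*$; (ii) Assumption~\ref{disjointintervalsassumption} guarantees the partition is perturbation-stable so no equilibrium is missed by fixing the ordering; and (iii) taking the maximum of $v_d^*$ over all DP paths and all outer choices yields the global optimum. For the running time, the outer loop contributes $O(m^3)$, the number of DP states is $O(m)$ (targets) times the pseudopolynomial number of achievable aggregate values times $O(m)$ count-values, and each transition costs polynomial time, giving an overall pseudopolynomial bound. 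The one delicate point to verify carefully is that the discretization of the aggregate introduces no loss: because each perturbed payoff is one of finitely many given values with a common bounded denominator, the set of achievable aggregates is exactly enumerable on an integer grid, so the DP is exact rather than approximate.
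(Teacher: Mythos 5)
Your proposal follows essentially the same route as the paper's proof: fix the equilibrium type and the partition data (which, as you correctly note, Assumption~\ref{disjointintervalsassumption} renders independent of the perturbation since the defender payoffs and the relative orderings of $U_a^u$ and $U_a^c$ are fixed), observe that the only coupling among the binary payoff choices is through the aggregate $\sum_{j\in I_5}U_a^u(j)/\Delta_a(j)$ that determines $c_1$, and resolve that coupling pseudopolynomially --- the paper does this by enumerating $c_1,c_2$ (exact parameter values for types I.B.ii/I.B.iii, intervals for type I.A.i) and then invoking {\bf SubsetSum} / a sum-in-an-interval variant of {\bf SubsetSum}, which is precisely your dynamic program over achievable aggregates. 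One caution on the write-up: as literally stated your DP prunes by Lemma~\ref{c1c2feasibilitylemma} and accumulates $v_d^*$ during transitions, yet both of these depend on the \emph{final} value of $c_1$, not on the partial aggregate stored in the DP state, so the steps must be ordered as in your own guess-and-verify remark --- first enumerate the candidate final aggregate on the pseudopolynomial grid (equivalently, the value or interval of $c_1$, as the paper does), and only then do the per-target feasibility tests and objective contributions become known constants that make the pruning DP well-defined.
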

\begin{proof}
    Note that by our structural characterization of Nash equilibria given in Section 2.1, Nash equilibria are of precisely three classes: equilibria in which $c_1$ and $c_2$ both coincide with a parameter of the game (I.B.ii,I.B.ii), those in which precisely one of $c_1$ or $c_2$ coincides with a game parameter (I.A.ii,I.A.iii,I.B.i), and those in which neither $c_1$ nor $c_2$ coincide with a game parameter (I.A.i). We now present a procedure to identify the equilibrium of largest $v_d^*$ within each class.
    Consider the cases:
    \begin{enumerate}[(i)]
        \item $\exists i\exists j$ so $c_1=U_a^x(i)$ and $c_2=\Delta_d(j)$ for some $x\in\{u,c\}$. First consider the problem of computing an equilibrium of type I.B.ii of maximal $v_d^*$. We have that $c_1$ and $c_2$ correspond to some parameters associated with some pair of targets. There are $2|T|$ choices for $c_1$ and $|T|$ choices for $c_2$. Fix some choice of $c_1$ and $c_2$. For any type I.B.ii equilibrium, there exists some $\ell\in\{-\infty, U_a^u(1),\dots, U-a^u(m)\}$ such that $t\in I_1$ if and only if $U_a^u(t)\leq \ell$. Note there are $2|T|+1$ choices of $\ell$. Fix some choice of $\ell\in\{-\infty,lb_a^u(i_1),\dots,ub_a^u(i_m)\}$. If $\ell=lb_a^u(i)$ or $\ell=ub_a^u(i)$ we take $U_a^u(i)=\ell$. Note that for any $t\in I_1$, $v_d^*$ does not depend upon the choice of $U_a^u(t)$ or $U_a^c(t)$ (as $\alpha_t=0$ for $t\in I_1$), so we pick $U_a^u(t),U_a^c(t)$ for $t\in I_1$ arbitrarily. Assume that $\ell$ selects $r$ targets to belong to $I_1$. We now consider the possible distributions of the remaining $m-r-2$ targets to the sets $I_3,I_5$ and $I_9$. First note that as $I_2\cup I_6\neq \emptyset$ we must have $I_5\neq \emptyset$. Recall the notation $s=|I_3|$ and $t=|I_9|$ from section 2.2. By Lemma~\ref{defenderstructurallemma}, for each possible choice of $r$ and $s$ we take $I_3$ to consist of the $s$ targets in $T\setminus I_1\cup I_2\cup I_6$ of least $\Delta_a(i)$ and $I_9$ to consist of the $t$ targets in $T\setminus I_1\cup I_2\cup I_3\cup I_6$ of greatest possible choice of $U_a^c(i)$. Finally, we take $I_5$ to consist of all remaining targets. For $t\in I_5$, we must have $\alpha_t=c_2/\Delta_d(t)$ and $\beta_t=(U_a^u(t))/\Delta_d(t)$. If any target in $I_5$ has $\Delta_d(t)$ such that $\alpha_t\notin (0,1)$, there is no feasible I.B.ii equilibrium with the current $c_1,c_2,r,s,t$. For each target $t\in I_5$ we exclude any choice of $lb_a^u(t)$ or $ub_a^u(t)$ which violates $\beta_t\in (0,1)$. If both such choices violate $\beta_t\in (0,1)$ for some $t\in I_5$ there is no feasible I.B.ii equilibrium with the current $c_1,c_2,r,s,t$. Additionally, we must have $\alpha_{j^{[2]}}=k_a-s-t-1-\sum_{i\in I_5}c_2/\Delta_d(i)\in (0,1)$ and if this condition is violated there is no feasible I.B.ii equilibrium with the current parameters. To check for feasible candidate solutions among the remaining choices, we compute all possible choices of $lb_a^c(i),ub_a^c(i),lb_a^u(i),ub_a^u(i)$ for which $\beta_{j^{[6]}}=k_a-t+\sum_{i\in I_5}U_a^u(j^{[2]})/\Delta_a(i)-\sum_{i\in I_5}U_a^u(i)/\Delta_a(i)\in (0,1)$. This is an instance of {\bf SubsetSum} which can be solved in pseudopolynomial time using standard dynamic programming techniques \cite{cormen2022introduction}. For each potentially feasible constructed equilibrium, we check the feasibility conditions given in section 2.1. Finally, we select an equilibrium of type I.B.ii of maximal $v_d^*$ or output that no such equilibrium exists. The construction for type I.B.iii equilibria is similar and differs only in that $\{c_1=a_i^c\}=I_8$ for some $i$ and $I_2=\emptyset$.
        \item For all $t\in T$, $U_a^u(t)\neq c_1\neq U_a^u(t)$ and $c_2\neq \Delta_d(i)$. Note that such an equilibrium is of type I.A.i. Either there exists $i\in T$ so that $c_2\in (\Delta_d(i),\Delta_d(i+1)$ or one of $c_2\leq \Delta_d(1)$, $\Delta_d(m)<c_2$ holds. Thus, there are $m+1$ possible choices of interval for $c_2$. Additionally, we have that $c_1$ must lie in an interval strictly between some choice of $U_a^c(i),U_a^u(i),U_a^c(j),U_a^u(j)$ or strictly larger or smaller than all such choices. Clearly there are polynomially many such possible intervals. Fix two intervals $c_1\in(a,b)$ and $c_2\in (c,d)$. As in case $(i)$, we establish the parameters $\ell,s$ and $t$, construct every candidate $I_1,I_3,I_5$ and $I_9$ and eliminate any choices of payoff for $t\in I_5$ that lead to $\alpha_t\notin (0,1)$ or $\beta_t\notin(0,1)$ (or determine the current parameters lead to no feasible equilibrium). We then compute all candidate choices of parameters for targets in $I_5$ such that $c_1=(t-k_d+\sum_{j\in I_5}U_a^u(i)/\Delta_a(j))/\sum_{j\in I_5}1/\Delta_a(j)\in (a,b)$ and $c_2=(k_a-s-t)/\sum_{j\in I_5}1/\Delta_a(j)\in (c,d)$. This amounts to the following modified version of {\bf SubsetSum}: Given a list of numbers $n_1,\dots,n_k$ and an interval $(a,b)$, compute all subsets which have a sum belonging to the interval $(a,b)$. Assuming that our game parameters are represented as floating point numbers, this problem can be solved in pseudopolynomial time by solving an equivalent problem in which all parameters are integers and the target interval is $(a^\prime,b^\prime)$ by way of $|\{i\in (a^\prime,b^\prime): i\in \mathbb{Z}\}|$ instances of {\bf SubsetSum}. Finally, we check each candidate solution for feasibility using the conditions presented in section 2.1 and select an equilibrium of type I.A.i of maximal $v_d^*$ or output that no such equilibrium exists.
        \item Precisely one of $c_1$ or $c_2$ coincides with a game parameter. It is clear from the constructions for cases $(i)$ and $(ii)$ that this case is also resolved in pseudopolynomial time.
    \end{enumerate}
\end{proof}

\begin{figure}
\caption{Decision diagram used to improve the efficiency of the algorithm for type I.B.ii equilibria presented in Theorem~\ref{pseudopolynomialtheorem}}
    \begin{tikzpicture}[font=\scriptsize]
    \tikzstyle{level 1}=[level distance=12mm,sibling distance=60mm]
    \tikzstyle{level 2}=[level distance=15mm,sibling distance=35mm]
    \tikzstyle{level 3}=[level distance=15mm,sibling distance=15mm]
    \tikzset{solid node/.style={circle,draw,inner sep=1,fill=black}}
        \node [solid node]{}
            child {node [solid node]{}
                child {node [solid node]{}
                    child {node [solid node, label=below:{$i\in I_1$}]{} edge from parent node[left]{$\Delta_d(i)<c_2$}}
                    child {node [solid node,label=below:{$i\in I_1$}]{} edge from parent node[right]{$\Delta_d(i)>c_2$}}
                edge from parent node[left]{$U_a^c(i)<c_1$}}
            child {node [solid node]{}
                    child {node [solid node,label=below:{$i\in I_3$}]{} edge from parent node[left]{$\Delta_d(i)<c_2$}}
                    child {node [solid node,label=below:{Infeasible}]{} edge from parent node[right,xshift=-5,yshift=8]{$\Delta_d(i)>c_2$}}
                edge from parent node[right]{$U_a^c(i)>c_1$}}
            edge from parent node[left,yshift=2]{$U_a^u(i)<c_1$}}
            %%%%%%%%%%%%%%%%%%%%%%%%
            child {node  [solid node]{}
                child {node [solid node]{}
                    child {node [solid node,label=below:{$i\in I_3$}]{} edge from parent node[left,xshift=-1,yshift=-5]{$\Delta_d(i)<c_2$}}
                    child {node [solid node,label=below:{$i\in I_5$}]{} edge from parent node[right]{$\Delta_d(i)>c_2$}}
                edge from parent node[left]{$U_a^c(i)<c_1$}}
            child {node [solid node]{}
                    child {node [solid node,label=below:{$i\in I_3$}]{} edge from parent node[left]{$\Delta_d(i)<c_2$}}
                    child {node [solid node,label=below:{$i\in I_9$}]{} edge from parent node[right]{$\Delta_d(i)>c_2$}}
                edge from parent node[right]{$U_a^c(i)>c_1$}}
            edge from parent node[right,yshift=2]{$U_a^u(i)>c_1$}};
            
    \end{tikzpicture}
    \label{decisiondiagram}
\end{figure}
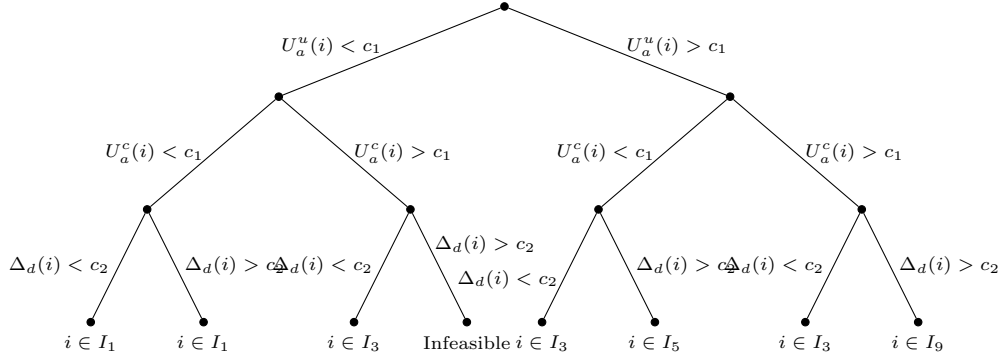

\begin{remark}
    We note that the efficiency of the algorithm presented in the proof of Theorem~\ref{pseudopolynomialtheorem} can be greatly improved by incorporating further paring-down of the search space. For example, the procedure in case (i) for type I.B.ii equilibria may be improved by employing the decision diagram in Figure~\ref{decisiondiagram} to prune the search space.
\end{remark}
\begin{remark}
    It is clear from the proof of Theorem~\ref{pseudopolynomialtheorem} that the pseudopolynomial time algorithm is readily extended to optimization over a domain in which more than two choices are present for each parameter and these choices satisfy an analog of Assumption~\ref{disjointintervalsassumption}.
\end{remark}

\begin{example}
    Consider the following game on five targets with $k_a=3,k_d=2$:
    \begin{center}
        \begin{tabular}{|c|c|c|c|c|c|}
            \hline
            Target:&1&2&3&4&5\\
            \hline
            $lb_a^c(i)$&10&48&5&31&25\\ \hline
            $ub_a^c$(i)&17&49&9&40&29\\ \hline
            $lb_a^u$(i)&20&51&41&63&90\\ \hline
            $ub_a^c$(i)&35&60&42&70&95\\ \hline
            $U_a^c$(i)&-1&-4&-9 &-3&-2\\ \hline
            $U_a^u$(i)&-7&-6&-12&-8&-9 \\ \hline
        \end{tabular}
    \end{center}
    The algorithm described in the proof of Theorem~\ref{pseudopolynomialtheorem} finds
    a feasible choice of parameters yielding the optimal defender expected outcome $v_d=-18$ given by %654
    \begin{center}
        \begin{tabular}{|c|c|c|c|c|c|}
            \hline
            Target:&1&2&3&4&5\\
            \hline
            $U_a^c(i)$&$ub$&$lb$&$lb$&$ub$&$lb$\\ \hline
            $U_a^u(i)$&$lb$&$ub$&$lb$&$ub$&$ub$\\ \hline
            
        \end{tabular}
    \end{center}
    corresponding to a type I.A.i equilibrium with $r=s=t=1$ and
    \begin{center}
        \begin{tabular}{|c|c|c|c|c|c|}
            \hline
            Target:&1&2&3&4&5\\
            \hline
            $\alpha_i$&0&1&7/10&1&3/10\\ \hline
             $\beta_i$&0&0&$\approx 0.1509$&1&$\approx0.8491$\\ \hline
        \end{tabular}
    \end{center}
\end{example}

\section{Nearest Additive Games}

The structural characterization presented in the previous sections and the extensive body of literature concerning additive security games suggest the question: how can we utilize the theory of additive security games to address the solution of non-additive security games? One line of inquiry concerns finding, for any non-additive game, an additive game that is `{\it close}' in terms of payoffs. This strategy is adopted to study the class of potential games in \cite{candogan2013near} in which a notion of {\it near-potential games} is developed through introducing a distance measure on the space of all games. Motivated by this work, in this section we propose an answer to the question: Given a non-additive security game, what is the {\it nearest} additive game?

Let $(A,B,k_a,k_d,U_a^c,U_a^u,U_d^c,U_d^u)$ be a non-additive security game over a target set $T=[m]$ expressed as a bimatrix game. In this section, assume that for each $1\leq k\leq m$ we have fixed an ordering of each of the sets $2^{\leq k}=\{S\subseteq [m]:|S|\leq k\}$. Consider that each entry of each payoff matrix is the result of the evaluation of some set function $f:2^{[m]}\to \R$ on a collection of targets. Accordingly, we first define a notion of {\it nearest additive function} to $f$.  

\begin{definition}
    Let $\mathbf{x}\in \R^m$. The additive function defined by $\mathbf{x}$ is the map $h_{\mathbf{x}}:2^{[m]}\to \R$ given by \[h_{\mathbf{x}}(S)=\sum_{i\in S}x_i.\]
\end{definition}
We represent all set functions as vectors of their codomains as follows:
\begin{definition}
    For any $f:2^{[m]}\to \R$ and $1\leq k\leq m$, represent $f$ as a vector $I_k(f)\in\R^{|2^{\leq k}|}$ by
    \[\langle I_k(f)\rangle_i=f(S_i)\] where $S_i$ is the $i^{th}$ element of $2^{\leq k}$.
\end{definition}
We now introduce the notion of {\it nearest additive function}.
\begin{definition}\label{nearestfunctiondef}
    Let $1\leq k\leq m$ and $f:2^{[m]}\to \R$. A $k-$nearest additive function to $f$ is a function $h_{\mathbf{x}}$ such that \[\mathbf{x}=\argmin_{\mathbf{y}}||I_k(f)-I_k(h_{\mathbf{y}})||\]
    where $||\cdot||$ is the euclidean norm on $\R^{|2^{\leq k}|}$. When such a function exists, denote it by $[f]_{k}$.
\end{definition}

\begin{theorem}\label{additiveexistunique}
    For any $1\leq k\leq m$ and $f:2^{[m]}\to \R$, $[f]_{k}$ exists and is unique.
\end{theorem}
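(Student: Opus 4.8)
The plan is to recognize Definition~\ref{nearestfunctiondef} as an ordinary finite-dimensional least-squares problem and then dispatch existence and uniqueness by the orthogonal projection theorem together with an injectivity (full-column-rank) argument. The first step is to observe that the map $\mathbf{y}\mapsto I_k(h_{\mathbf{y}})$ is \emph{linear}: since $\langle I_k(h_{\mathbf{y}})\rangle_i=h_{\mathbf{y}}(S_i)=\sum_{j\in S_i}y_j$, we have $I_k(h_{\mathbf{y}})=M\mathbf{y}$ where $M$ is the $|2^{\leq k}|\times m$ incidence matrix with entries $M_{ij}=\mathbbm{1}[j\in S_i]$. Consequently the objective in Definition~\ref{nearestfunctiondef} is $\|I_k(f)-M\mathbf{y}\|$, i.e. we are minimizing the distance from the fixed vector $I_k(f)\in\R^{|2^{\leq k}|}$ to the linear subspace $\mathcal{L}=\{M\mathbf{y}:\mathbf{y}\in\R^m\}$.

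For existence, I would note that $\mathcal{L}$ is a linear subspace of the finite-dimensional Euclidean space $\R^{|2^{\leq k}|}$ and hence closed, so the projection theorem guarantees a (unique) closest point of $\mathcal{L}$ to $I_k(f)$; any $\mathbf{x}$ with $M\mathbf{x}$ equal to this projection attains the minimum, establishing existence of a $k$-nearest additive function. For uniqueness of the \emph{vector} $\mathbf{x}$ (not merely of the projected point $M\mathbf{x}$), the key is that $M$ has full column rank, equivalently that $\mathbf{y}\mapsto M\mathbf{y}$ is injective. This is where the hypothesis $1\leq k$ is used: every singleton $\{i\}$ lies in $2^{\leq k}$, and $h_{\mathbf{y}}(\{i\})=y_i$, so the coordinates of $M\mathbf{y}$ indexed by the singletons reproduce $\mathbf{y}$ exactly. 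Hence $M\mathbf{y}=\mathbf{0}$ forces $y_i=0$ for every $i$, so $M$ is injective. Injectivity means the minimizing $\mathbf{x}$ determined by $M\mathbf{x}=\mathrm{proj}_{\mathcal{L}}I_k(f)$ is unique, giving the claim. Equivalently, one may argue that the squared objective $\mathbf{y}\mapsto\|I_k(f)-M\mathbf{y}\|^2$ has Hessian $2M^{T}M$, which is positive definite precisely because $M$ is injective, so the objective is strictly convex and coercive and therefore has a unique global minimizer.

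I do not expect any deep obstacle here; the only point requiring genuine care is the full-column-rank/injectivity of $M$, and this is handled cleanly once one notices that $k\geq 1$ forces the singleton rows into the system. As a byproduct the argument yields the closed form $\mathbf{x}=(M^{T}M)^{-1}M^{T}I_k(f)$ via the normal equations, so $[f]_k=h_{\mathbf{x}}$ is not only well defined but explicitly computable, which will be convenient for the subsequent application to the nearest additive game.
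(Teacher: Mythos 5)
Your proposal is correct and follows essentially the same route as the paper: both identify $\{I_k(h_{\mathbf{y}}):\mathbf{y}\in\R^m\}$ as a linear subspace of $\R^{|2^{\leq k}|}$ (your incidence matrix $M$ has precisely the paper's spanning vectors $I_k(\chi_1),\dots,I_k(\chi_m)$ as its columns) and invoke the orthogonal projection theorem. Your write-up is in fact slightly more careful than the paper's, since you explicitly establish injectivity of $\mathbf{y}\mapsto M\mathbf{y}$ via the singleton rows (using $k\geq 1$) to pass from uniqueness of the projected point to uniqueness of the vector $\mathbf{x}$ itself, a step the paper leaves implicit and only addresses indirectly through the invertible normal equations in the subsequent corollary.
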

\begin{proof}
    Let $V=\{I_k(h_\mathbf{x})\in \R^{|2^{\leq k}|}:\mathbf{x}\in \R^m\}$ be the space of functions that are additive on sets in $2^{\leq k}$. For each $i\in [m]$, let $\chi_i:2^m\to \R$ be defined by \[\chi_i(S)=\begin{cases} 1&\text{ if }i\in S\\0&\text{ otherwise}\end{cases}.\] Note that $V=\text{span}\{I_k(\chi_1),\dots,I_k(\chi_m)\}$ as \[I_k(h_\mathbf{x})=\sum_{i=1}^m x_i I_k(\chi_i).\] Thus $V$ is a linear subspace of $\R^{|2^{\leq k}|}$ and the projection of any $I_k(f)$ onto $V$ is unique and meets the condition of Definition~\ref{nearestfunctiondef}.
\end{proof}
A direct consequence of Theorem~\ref{additiveexistunique} is that we can obtain the following closed-form expression for $\mathbf{x}$:
\begin{corollary}
    For any $1\leq k\leq m$ and $f:2^{[m]}\to \R$, the unique $\mathbf{x}\in \R^m$ meeting the condition of Definition~\ref{nearestfunctiondef} has the form
    \begin{equation}\label{xi}
    x_i=(\sigma_1+\sigma_2)\sum_{i\in S\atop |S|\leq k}f(S)+\sigma_2\sum_{j\neq i}\sum_{j\in S\atop |S|\leq k}f(S)
    \end{equation}
    where \[\sigma_1={1\over \sum_{t=0}^{k-1}{m-1\choose t}-\sum_{t=0}^{k-2}{m-2\choose t}}\]
    and \[\sigma_2={\sum_{t=0}^{k-2}{m-2\choose t}\over \left(\sum_{t=0}^{k-1}{m-1\choose t}-\sum_{t=0}^{k-2}{m-2\choose t}\right)\left(\sum_{t=0}^{k-1}{m-1\choose t}-(m+1)\sum_{t=0}^{k-2}{m-2\choose t}
    \right)}.\]
\end{corollary}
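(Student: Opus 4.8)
The plan is to build directly on Theorem~\ref{additiveexistunique}: since $[f]_k$ is the orthogonal projection of $I_k(f)$ onto $V=\mathrm{span}\{I_k(\chi_1),\dots,I_k(\chi_m)\}$, the optimal coefficient vector $\mathbf{x}$ is characterized by the normal equations $\langle I_k(h_{\mathbf{x}})-I_k(f),\,I_k(\chi_j)\rangle=0$ for every $j\in[m]$. Writing $I_k(h_{\mathbf{x}})=\sum_i x_i I_k(\chi_i)$ exactly as in the proof of Theorem~\ref{additiveexistunique}, this is the linear system $G\mathbf{x}=\mathbf{b}$, where $G_{ij}=\langle I_k(\chi_i),I_k(\chi_j)\rangle$ is the Gram matrix of the spanning vectors and $b_j=\langle I_k(f),I_k(\chi_j)\rangle=\sum_{j\in S,\,|S|\le k}f(S)$. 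The uniqueness asserted in Theorem~\ref{additiveexistunique} forces the $I_k(\chi_i)$ to be linearly independent, so $G$ is positive definite and hence invertible; the claimed formula~\eqref{xi} is then simply $x_i=\sum_j (G^{-1})_{ij}b_j$ read off entrywise.

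The first real computation is to evaluate the Gram entries combinatorially. Since $\langle I_k(\chi_i),I_k(\chi_j)\rangle=\sum_{|S|\le k}\chi_i(S)\chi_j(S)$ counts the sets $S$ of size at most $k$ containing the required indices, I would argue: the diagonal entry equals the number of such $S$ containing a fixed index $i$, namely $N_1:=\sum_{t=0}^{k-1}\binom{m-1}{t}$ (choose the remaining at most $k-1$ members from the other $m-1$ indices), while the off-diagonal entry equals the number of such $S$ containing two fixed indices, namely $N_2:=\sum_{t=0}^{k-2}\binom{m-2}{t}$. Consequently $G=(N_1-N_2)\,I+N_2\,J$, where $J$ is the all-ones matrix.

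The remaining step is to invert this structured matrix. Because $G$ is a linear combination of $I$ and $J$ and $J^2=mJ$, its inverse must again have the form $G^{-1}=\sigma_1 I+\sigma_2 J$; substituting and matching the coefficients of $I$ and $J$ reduces the inversion to the two scalar equations $(N_1-N_2)\sigma_1=1$ and $(N_1-N_2)\sigma_2+N_2\sigma_1+mN_2\sigma_2=0$, whose solution yields the constants $\sigma_1$ and $\sigma_2$ in the statement. Finally, since $(G^{-1})_{ii}=\sigma_1+\sigma_2$ and $(G^{-1})_{ij}=\sigma_2$ for $i\ne j$, the identity $x_i=\sum_j(G^{-1})_{ij}b_j$ becomes $x_i=(\sigma_1+\sigma_2)b_i+\sigma_2\sum_{j\ne i}b_j$, which is precisely~\eqref{xi}. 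The main obstacle is the combinatorial bookkeeping that identifies $N_1$ and $N_2$ as those binomial sums; everything downstream is the routine inversion of an $aI+bJ$ matrix. I would also double-check the final algebraic simplification of $\sigma_2$ against the closed form in the statement, since matching the two denominators is the step most prone to sign and index errors.
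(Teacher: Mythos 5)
Your approach is the same as the paper's: the normal equations $\langle I_k(h_{\mathbf{x}})-I_k(f),\,I_k(\chi_j)\rangle=0$ are precisely the first-order conditions $\nabla g=0$ that the paper writes down for $g(\mathbf{x})=\|I_k(h_{\mathbf{x}})-I_k(f)\|^2$, and both derivations land on the identical linear system $\bigl[(N_1-N_2)I_m+N_2\mathbbm{1}_{m\times m}\bigr]\mathbf{x}=\mathbf{b}$, where $N_1=\sum_{t=0}^{k-1}\binom{m-1}{t}$, $N_2=\sum_{t=0}^{k-2}\binom{m-2}{t}$ and $b_i=\sum_{i\in S,\,|S|\leq k}f(S)$; your combinatorial evaluation of the Gram entries is exactly the paper's count of $|\{S\in 2^{\leq k}:i,j\in S\}|$.

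The problem is the step you yourself flagged. Your two scalar equations $(N_1-N_2)\sigma_1=1$ and $(N_1-N_2)\sigma_2+N_2\sigma_1+mN_2\sigma_2=0$ are correct, but their solution is
\[\sigma_1=\frac{1}{N_1-N_2},\qquad\sigma_2=\frac{-N_2}{(N_1-N_2)\left(N_1+(m-1)N_2\right)},\]
which is \emph{not} the $\sigma_2$ of the statement, namely $N_2/\left[(N_1-N_2)\left(N_1-(m+1)N_2\right)\right]$; the two expressions coincide only when $N_2=0$ (i.e.\ $k=1$), since otherwise equality would force $N_1=N_2$, which is impossible. So your claim that these equations ``yield the constants in the statement'' cannot be salvaged by algebraic massaging. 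The discrepancy is not your error: the paper's own proof invokes the identity $[(a-b)I_m+b\mathbbm{1}_{m\times m}]^{-1}=\frac{1}{a-b}I_m+\frac{b}{(a-b)(a-b(m+1))}\mathbbm{1}_{m\times m}$, which is incorrect --- for $m=2$, $a=2$, $b=1$ the inverse of $I_2+\mathbbm{1}_{2\times 2}$ has off-diagonal entries $-1/3$, while this identity predicts $-1$ --- and the stated $\sigma_2$ inherits that error. One can also check directly for $m=k=2$ that the stated constants fail to reproduce the true minimizer $x_1=\bigl(2f(\{1\})-f(\{2\})+f(\{1,2\})\bigr)/3$, whereas yours do. In short, your argument, carried to completion, proves the corollary with the corrected constant $\sigma_2=-N_2/\left[(N_1-N_2)\left(N_1+(m-1)N_2\right)\right]$; to finish, perform that last piece of algebra explicitly and note the required amendment to the statement.
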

\begin{proof}
    Let $g:\R^m\to \R$ be defined by $g(\mathbf{x})=\|I_k(h_{\mathbf{x}})-I_k(f)\|^2$. Note that
    \begin{equation*}
    \begin{aligned}
    {\partial g\over \partial x_i}&=-2\left(\sum_{i\in S\in 2^{\leq k}}f(S)-\sum_{j=1}^m\left|\{S\in 2^{\leq k}:i,j\in S\}\right|x_j\right)\\
    &=-2\left(\sum_{i\in S\in 2^{\leq k}}f(S)-\sum_{i\neq j}\sum_{t=0}^{k-2}{m-2\choose t}x_j-\sum_{t=0}^{k-1}{m-1\choose t}x_i\right).
    \end{aligned}
    \end{equation*}
    Therefore $\nabla g=0$ if and only if for all $i$,
    \[\sum_{i\in S\in 2^{\leq k}}f(S)=\sum_{i\neq j}\sum_{t=0}^{k-2}{m-2\choose t}x_j+\sum_{t=0}^{k-1}{m-1\choose t}x_i.\]
    Equivalently,
    \begin{equation*}
    \left(\left[\sum_{t=0}^{k-2}{m-2\choose t}\right](\mathbbm{1}_{m\times m}-I_m)+\left[\sum_{t=0}^{k-1}{m-1\choose t}\right]I_m\right)\mathbf{x}=\boldsymbol{\gamma}
    \label{sys}
    \end{equation*}
    where $\gamma_i=\sum_{i\in S\in 2^{\leq k}}f(S)$. Note that for any $a,b\in\R$ we have
    \begin{equation*}\label{inverse_formula}[(a-b)I_m+b\mathbbm{1}_{m\times m}]^{-1}={1\over a-b}I+{b\over (a-b)(a-b(m+1))}\mathbbm{1}_{m\times m}.\end{equation*} Thus, $x_i$ has the specified form.
\end{proof}
We now propose the notion of {\it nearest additive game}.
\begin{definition}
    Let $G=(A,B,k_a,k_d,U_a^c,U_a^u,U_d^c,U_d^u)$ be a (possibly non-additive) security game. The nearest additive game to $G$ is the additive game defined by $k_a,k_d,[U_a^c]_{k_a},[U_a^u]_{k_a},[U_d^c]_{k_a},[U_d^u]_{k_a}$.
\end{definition}

\begin{remark}
    Note that we choose the $k_a$-nearest additive payoff functions in our definition of nearest additive game. We have, for $f:2^{[m]}\to \R$ and $k\leq k^\prime$ \[\min_{\mathbf{x}}||I_k(f)-I_k(h_{\mathbf{x}})||\leq\min_{\mathbf{x}}||I_{k^\prime}(f)-I_{k^\prime}(h_{\mathbf{x}})||\]
    as \[\|I_{k^\prime}(h_{\mathbf{x}^\prime})-I_{k^\prime}(f)\|^2\geq \sum_{S\in 2^{\leq k}}(h_{\mathbf{x}^\prime}(S)-f(S))^2\geq \|I_k(h_{\mathbf{x}})-I_k(f)\|^2. \]
    Thus, the approximation is better for smaller values of $k$. The payoff functions will be evaluated on sets of size at most $k_a$, and so we select this value of $k$ in our nearest additive game.
\end{remark}

\begin{remark}
    The nearest additive game to an additive game is the additive game itself. If $f=h_{\mathbf{x}}$ is an additive function, $\mathbf{y}=\mathbf{x} $ clearly minimizes $\|I_k(f)-I_k(h_{\mathbf{y}})\|=0$.
\end{remark}

Suppose $(A,B)$ and $(\overline{A},\overline{B})$ are a non-additive security game and its nearest additive game respectively. Let $(p,q)$ and $(\overline{p},\overline{q})$ be Nash equilibrium profiles in the original and nearest additive games respectively. Observe that $p^tBq$ is the actual expected outcome to the defender in the original game and $p^tB\overline{q}\leq p^tBq$ is the expected outcome to the defender when she plays the original game using his strategy from the nearest additive game. The relative error (for the defender) in using the approximation, therefore, is given by $|p^tBq-p^tB\overline{q}|/p^tBq$.

\begin{figure}
\caption{The underlying graph of our 14 bus power system. In the zero-sum game we take $k_a=k_d=2$. The values of $(\alpha_i,\beta_i)$ for each edge $i$ are shown for the nearest additive game (left) and the additive game defined by the 1-line disturbance values (right). Power supply and demand values for each node correspond to those in the standard IEEE 14-Bus power system test case, and line reactances are given in the table found in Appendix C.}
\begin{multicols}{2}
    \includegraphics[scale=0.28]{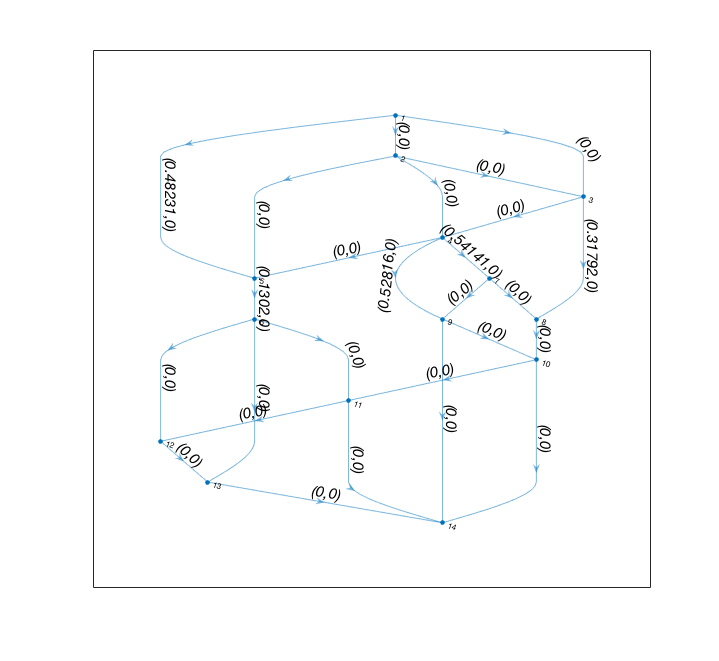}\vfill\null
    \columnbreak
    \includegraphics[scale=0.28]{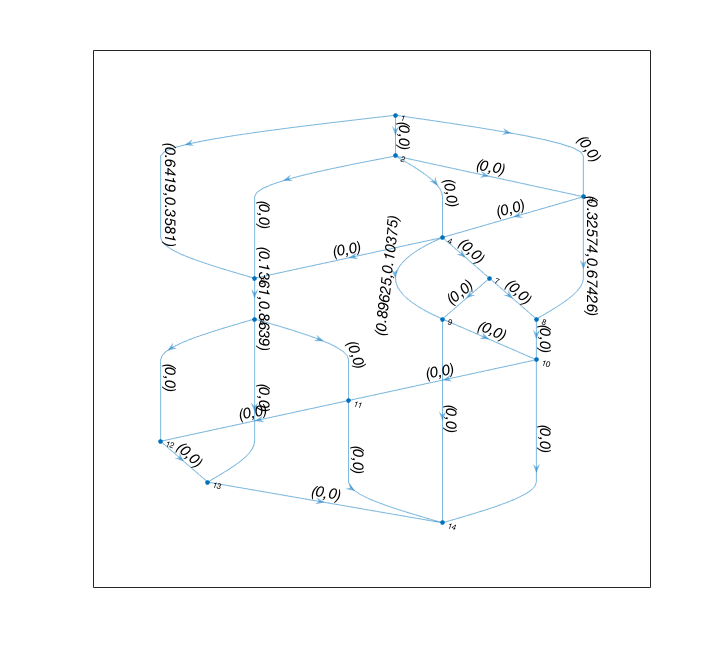}
\end{multicols}
\vspace{-2em}

\label{powergrid}
\end{figure}
\begin{example}
An example of a function which is well-approximated by an additive function is given in \cite{soltan2017analyzing} wherein the notion of {\it disturbance value} is introduced to quantify the effect of $k$-line failures in power grids. The $k-$line disturbance value of a set of line failures of size $k$ is well approximated by the sum of the $k$ individual 1-line disturbance values in large power grids. That is, $\delta_k(\{e_1,\dots,e_k\})\approx \sum_{i=1}^k\delta_1(\{e_i\})$. As in \cite{emadi2021game}, we define a zero-sum security game over the edges of a power grid in which resources are fully protective and the payoffs to the attacker for a set of successfully attacked lines is given by the disturbance value. The relative error in $v^*$ when the nearest 2-additive game is solved is less than half of the relative error when the 1-line disturbance values are taken to be the additive constants. In the game defined on the grid in Figure~\ref{powergrid} we have $v^*_{\text{actual}}\approx 15.123$, $v^*_{\text{addapprox}}\approx 17.815$, and $v^*_{\text{additive}}\approx 20.614$ giving $error_{addapprox}\approx 17.8\%$ and $error_{additive}\approx 36.3\%$.
\label{disturbancevalueexample}
\end{example}

The above examples suggest further study to address the following (open) problems:

\begin{problem}
    For what classes of $U_a^c,U_a^u,U_d^c,U_d^u$ can $|v_d^*-[v_d^*]_k|$, where $[v_d^*]$ is the defender expected outcome in the nearest additive game, be bounded?
\end{problem}
\begin{problem}
    Let $(p^*,q^*),(\overline{p}^*,\overline{q}^*)$ denote Nash equilibrium profiles in a non-additive security game and the nearest additive game respectively. Let $(A,B)$ and $(\overline{A},\overline{B})$ be the bimatrix representations of these games. For what classes of $U_a^c,U_a^u,U_d^c,U_d^u$ can $|p^*Bq^*-p^*B\overline{q}^*|$ be bounded? That is: when is the defender justified in using the strategy she computes using the nearest additive game approximation?
\end{problem}

\section{Conclusion}

In this work, we have contributed to the theory of Nash equilibria in additive security games with singleton schedules and multiple homogeneous attacker and defender resources by proposing a novel structural approach. Our structural analysis of the necessary properties of equilibria in security games have lead to a characterization of the possible types of Nash equilibria in these games and consequently, a new algorithm for computation of such equilibria. Our characterization yields closed-form expressions for the player expected outcomes at equilibrium as well as explicit feasibility conditions for equilibria of each type. We have shown that in the case of fully protective resources, a restricted structure of equilibrium types emerges and that our solution algorithm can be made more efficient. Finally, to address the problem of maximizing the defender expected utility at equilibrium given that the attacker payoffs may be perturbed, we show that (fundamentally owing to the NP-hardness of computing Stackelberg equilibria in the case of multiple attacker resources) this problem is NP hard for Stackelberg equilibria and propose a pseudopolynomial time algorithm for the case of Nash equilibria under a disjointness assumption.

We propose the following as promising directions for future work:
    \begin{itemize}
        \item Extension of our structural approach to the classes of games \cite{korzhyk2010complexity,letchford2013solving} with schedule structures admitting polynomial-time solution algorithms and to repeated games.
        \item Determination of the computational complexity of Problem~\ref{optimizationproblem} and development approximation techniques or approaches that solve typical instances efficiently for the case of Stackelberg equilibria. 
        \item Utilization of the comprehensively understood class of additive security games to compute approximate solutions to non-additive security games (which are NP-hard in general) through the development of projection frameworks. In particular, we seek a classification of the {\it nearly additive} games.
    \end{itemize}

%\bibliographystyle{ws-igtr}
%\bibliography{ref}

\pagebreak
\appendix

\section{Closed-Form Expressions for Player Expected Outcome at Equilibrium}

        \noindent{\bf Type I.A.i:}
        \begin{equation}
        \begin{aligned}
        v_a^*&=\sum_{i\in I_3}U_a^u(i)+\sum_{i\in I_9}U_a^c(i)+\sum_{i\in I_5}{(k_a-s-t)\left(k_d-t-\sum_{j\in I_5}{U_a^u(j)\over \Delta_a(j)}\right)\over\Delta_d(i)\left(\sum_{j\in I_5}{1\over\Delta_d(j)}\right)\left(\sum_{j\in I_5}{1\over\Delta_a(j)}\right)}\\
        v_d^*&=\sum_{i\in I_3}U_d^u(i)+\sum_{i\in I_9}U_d^c(i)+\sum_{i\in I_5}{(t+s-k_a)\left(k_d-t-\sum_{j\in I_5}{U_a^u(j)\over \Delta_a(j)}\right)\over\Delta_a(i)\left(\sum_{j\in I_5}{1\over \Delta_d(j)}\right)\left(\sum_{j\in I_5}{1\over \Delta_a(j)}\right)}
        \end{aligned}
        \end{equation}

        \noindent{\bf Type I.A.ii:}
        \begin{equation}
        \begin{aligned}
            v_a^*&=\alpha_{j^{[2]}}U_a^u(j^{[2]})+\sum_{i\in I_3}U_a^u(i)+\sum_{i\in I_9}U_a^c(i)\\
            &+\sum_{i\in I_5}{(k_a{-}s{-}t{-}\alpha_{j^{[2]}})[(U_a^u(i){-}U_a^u(j^{[2]}))U_a^c(i){+}(\Delta_a(i){-}U_a^u(i){+}U_a^u(j^{[2]}))U_a^u(i)]\over \Delta_a(i)\Delta_d(i)\sum_{j\in I_5}{1\over \Delta_d(j)}}\\
            v_d^*&=\alpha_{j^{[2]}}U_d^u(j^{[2]})+\sum_{i\in I_3}U_d^u(i)+\sum_{i\in I_9}U_d^c(i)\\
            &+\sum_{i\in I_5}{(k_a{-}s{-}t{-}\alpha_{j^{[2]}})[(U_a^u(i){-}U_a^u(j^{[2]}))U_d^c(i){+}(\Delta_a(i){-}U_a^u(i){+}U_a^u(j^{[2]}))U_d^u(i)]\over\Delta_a(i)\Delta_d(i)\sum_{j\in I_5}{1\over\Delta_d(j)}}
        \end{aligned}
        \end{equation}

        \noindent{\bf Type I.A.iii:}
        \begin{equation}
        \begin{aligned}
        v_a^*&=\alpha_{j^{[8]}}U_a^c(j^{[8]})+\sum_{i\in I_3}U_a^u(i)+\sum_{i\in I_9}U_a^c(i)\\
        &+\sum_{i\in I_5}{(k_a-s-t-\alpha_{j^{[8]}})[(U_a^u(i)-U_a^c(j^{[8]}))U_a^c(i)+(\Delta_a(i)-U_a^u(i)+U_a^c(j^{[8]}))U_a^u(i)]\over\Delta_a(i)\Delta_d(i)\sum_{j\in I_5}{1\over\Delta_d(j)}}\\
        v_d^*&=\alpha_{j^{[8]}}U_d^c(j^{[8]})+\sum_{i\in I_3}U_d^u(i)+\sum_{i\in I_9}U_d^c(i)\\
        &+\sum_{i\in I_5}{(k_a-s-t-\alpha_{j^{[8]}})[(U_a^u(i)-U_a^c(j^{[8]}))U_d^c(i)+(\Delta_a(i)-U_a^u(i)+U_a^c(j^{[8]}))U_d^u(i)]\over \Delta_a(i)\Delta_d(i)\sum_{j\in I_5}{1\over \Delta_d(j)}}
        \end{aligned}
        \end{equation}

        \noindent{\bf Type I.B.i:}
        \begin{equation}
        \begin{aligned}
        v_a^*&=U_a^u(j^{[6]})+\beta_{j^{[6]}}\Delta_a(j^{[6]})+\sum_{i\in I_3}U_a^u(i)+\sum_{i\in I_9}U_a^c(i)\\
        &+\sum_{i\in I_5}{\Delta_d(j^{[6]})\over \Delta_d(i)\sum_{j\in I_5}{1\over \Delta_d(j)}}\left[\sum_{j\in I_5}{U_a^u(j)\over \Delta_a(j)}{+}{1\over \Delta_a(i)}[(U_a^u(i){+}U_a^c)(t{+}\beta_{j^{[6]}}){-}\Delta_a(i)k_d]\right]\\
        v_d^*&=U_d^u(j^{[6]})+\beta_{j^{[6]}}\Delta_d(j^{[6]})+\sum_{i\in I_3}U_d^u(i)+\sum_{i\in I_9}U_d^c(i)\\
        &+\sum_{i\in I_5}{\Delta_d(j^{[6]})\over \Delta_a(i)\Delta_d(i)\sum_{j\in I_5}{1\over \Delta_a(j)}}\left[\Delta_d(i)k_d{+}(U_d^u(i){+}U_d^c(i))(t{+}\beta_{j^{[6]}}){-}\Delta_d(i)\sum_{j\in I_5}{U_u^a(j)\over \Delta_a(j)}\right]
        \end{aligned}
        \end{equation}

        \noindent{\bf Type I.B.ii:}
        \begin{equation}
        \begin{aligned}
        v_a^*&=\sum_{i\in I_3}U_a^u(i)+\sum_{i\in I_9}U_a^c(i)+\left[k_a-s-t-1-\sum_{i\in I_5}{\Delta_d(j^{[6]})\over \Delta_d(i)}\right]U_a^u(j^{[2]})+U_a^u(j^{[6]})\\
        &-\left[k_d-t-\sum_{i\in I_5}{U_a^u(i)-U_a^u(j^{[2]})\over \Delta_a(i)}\right]U_a^u(j^{[6]})\\
        &+\sum_{i\in I_5}{\Delta_d(j^{[6]})[(U_a^u(i)-U_a^u(j^{[2]}))U_a^c(i)+(U_a^u(j^{[2]})-U_a^c(i))U_a^u(i)]\over \Delta_a(i)\Delta_d(i)}\\
        v_d^*&=\sum_{i\in I_3}U_d^u(i)+\sum_{i\in I_9}U_d^c(i)+\left[k_a-s-t-1-\sum_{i\in I_5}{\Delta_d(j^{[6]})\over \Delta_d(i)}\right]U_d^u(j^{[2]})+U_d^u(j^{[6]})\\
        &-\left[k_d-t-\sum_{i\in I_5}{U_a^u(i)-U_a^u(j^{[2]})\over \Delta_a(i)}\right]U_d^u(j^{[6]})\\
        &+\sum_{i\in I_5}{\Delta_d(j^{[6]})[(U-a^u(i)-U_a^u(j^{[2]}))U_d^c(i)+(U_a^u(j^{[2]})-U_a^c(i))U_d^u(i)]\over \Delta_a(i)\Delta_d(i)}
        \end{aligned}
        \end{equation}
        \vfill\null
        
        \noindent{\bf Type I.B.iii:}
        \begin{equation}
        \begin{aligned}
        v_a^*&=\sum_{i\in I_3}U_a^u(i)+\sum_{i\in I_9}U_a^c(i)+U_a^u(j^{[6]})-\left[k_a-t-1-\sum_{i\in I_5}{U_a^u(i)-U_a^c(j^{[8]})\over \Delta_a(i)}\right]\Delta_a(j^{[6]})\\
        &+\left[k_a-s-t-\sum_{i\in I_5}{\Delta_d(j^{[6]})\over\Delta_d(i)}\right]U_a^c(j^{[8]}))\\
        &+\sum_{i\in I_5}{\Delta_d(j^{[6]})[(U_a^u(i)-U_a^c(j^{[8]})U_a^c(i)+(U_a^c(j^{[8]}-U_a^c(i))U_a^u(i)]\over\Delta_a(i)\Delta_d(i)}\\
        v_d^*&=\sum_{i\in I_3}U_d^u(i)+\sum_{i\in I_9}U_d^c(i)+U_d^u(j^{[6]})+\left[k_a-t-1-\sum_{i\in I_5}{U_a^u(i)-Ua^c(j^{[8]})\over\Delta_a(i)}\right]\Delta_d(j^{[6]})\\
        &+\left[k_a-s-t-\sum_{i\in I_5}{\Delta_d(j^{[6]})\over \Delta_d(i)}\right]U_d^c(j^{[8]})\\
        &+\sum_{i\in I_5}{\Delta_d(j^{[6]})[(U_a^u(i)-U_a^c(j^{[8]})U_d^c(i)+(U_a^c(j^{[8]}-U_a^c(i))U_d^u(i)]\over\Delta_a(i)\Delta_d(i)}
        \end{aligned}
        \end{equation}

        \noindent{\bf Type II:}
        \begin{equation}v_a^*=\sum_{i\in I_9}U_a^c(i)\hspace{2em}v_d^*=\sum_{i\in I_9}U_d^c(i)\end{equation}
	\vfill\null

\section{Closed-Form Expressions for Player Expected Outcome at Equilibrium: Fully Protective Resources}

        \noindent{\bf I.A.i with $I_9=\emptyset$
        \begin{equation}
        \begin{aligned}
            v_a^*&=\sum_{i\in I_3}U_a^u(i)+\sum_{i\in I_5}{k_a-s\over U_d^u(i)\sum_{j\in I_5}{1\over U_d^u(j)}}\left[U_a^u(i)-{k_a-m+r+s\over \sum_{j\in I_5}{1\over U_a^u(j)}}\right]\\
            v_d^*&=\sum_{i\in I_3}U_d^u(i)+\sum_{i\in I_5}{k_a-s\over U_a^u(i)\sum_{j\in I_5}{1\over U_d^u(j)}}\left[U_a^u(i)-{k_a-m+r+s\over \sum_{j\in I_5}{1\over U_a^u(j)}}\right]
        \end{aligned}
        \end{equation}
        
        \noindent{\bf I.A.ii with $I_9=\emptyset$
        \begin{equation}
        \begin{aligned}
            v_a^*&=\alpha_{j^{[2]}}U_a^u(j^{[2]})+\sum_{i\in I_3}U_a^u(i)+\sum_{i\in I_5}{(k_a-s-\alpha_{j^{[2]}})(U_a^u(i)-U_a^u(j^{[2]}))\over U_d^u(i)\sum_{j\in I_5}{1\over U_d^u(j)}}\\
            v_d^*&=\alpha_{j^{[2]}}U_d^u(j^{[2]})+\sum_{i\in I_3}U_d^u(i)+\sum_{i\in I_5}{(k_a-s-\alpha_{j^{[2]}})(U_a^u(i)-U_a^u(j^{[2]}))\over U_a^u(i)\sum_{j\in I_5}{1\over U_d^u(j)}}
        \end{aligned}
        \end{equation}
        
        \noindent{\bf I.B.i with $I_9=\emptyset$
        \begin{equation}
        \begin{aligned}
            v_a^*&=\sum_{i\in I_3}U_a^u(i)+U_a^u(j^{[6]})(1-\beta_{j^{[6]}})+\sum_{i\in I_5}{U_d^u(j^{[6]})\over U_d^u(i)}\left[U_a^u(i){-}{k_d{-}m{+}r{+}s{+}1{-}\beta_{j^{[6]}}\over \sum_{j\in I_5}{1\over U_a^u(j)}}\right]\\
            v_d^*&=\sum_{i\in I_3}U_d^u(i)+U_a^u(j^{[6]})(1-\beta_{j^{[6]}})+\sum_{i\in I_5}{U_d^u(j^{[6]})\over U_a^u(i)}\left[U_a^u(i){-}{k_d{-}m{+}r{+}s{+}1{-}\beta_{j^{[6]}}\over \sum_{j\in I_5}{1\over U_a^u(j)}}\right]
        \end{aligned}
        \end{equation}
        
        \noindent{\bf I.B.ii with $I_9=\emptyset$
        \begin{equation}
        \begin{aligned}
            v_a^*&=\left[k_a{-}s{-}1{-}\sum_{i\in I_5}{U_d^u(j^{[6]})\over U_d^u(i)}\right]U_a^u(j^{[2]})+\sum_{i\in I_3}U_a^u(i)+\sum_{i\in I_3}U_a^u(i)\\
            &+\sum_{i\in I_5}{U_d^u(j^{[6]})[U_a^u(i)-U_a^u(j^{[2]})\over U_d^u(i)]}\\
            v_d^*&=\left[k_a{-}s{-}1{-}\sum_{i\in I_5}{U_d^u(j^{[6]})\over U_d^u(i)}\right]U_d^u(j^{[2]})+\sum_{i\in I_3}U_d^u(i)+\sum_{i\in I_3}U_d^u(i)\\
            &+\sum_{i\in I_5}{U_d^u(j^{[6]})[U_a^u(i)-U_a^u(j^{[2]})\over U_a^u(i)]}
        \end{aligned}
        \end{equation}
        
        \noindent{\bf I.B.iii with $I_1\cup I_5=\emptyset$
      	\begin{equation}v_a^*=\sum_{i\in I_3}U_a^u(i)\hspace{2em}v_a^*=\sum_{i\in I_3}U_d^u(i)\end{equation}

\section{Line Reactance Values for Example~\ref{disturbancevalueexample}}

\begin{center}
\begin{tabular}{|c|c||c|c|}
        \hline
        Line&Reactance&Line&Reactance\\
        \hline
        1,2	&0.01938&6,12&0.022\\\hline
        1,3	&0.05403&6,13&0.0137\\\hline
        1,5	&0.04699&7,8&0.03181\\\hline
        2,3	&0.05811&7,9&0.12711\\\hline
        2,4	&0.05695&8,10&0.08205\\\hline
        2,5	&0.06701&9,10&0.22092\\\hline
        3,4	&0.01335&9,14&0.17093\\\hline
        3,8	&0.061&10,11&0.34\\\hline
        4,5	&0.086&10,14&0.19\\\hline
        4,7	&0.154&11,12&0.27\\\hline
        4,9	&0.09498&11,14&0.085\\\hline
        5,6	&0.12291&12,13&0.34\\\hline
        6,11&0.06615&13,14&0.345\\\hline
\end{tabular}
\end{center}

\end{document}